\let\oldtextbf=\textbf
\renewcommand\textbf[1]{{\boldmath\oldtextbf{#1}}}
\definecolor{darkgreen}{rgb}{0,0.5,0}
\newtheorem{theorem}{Theorem}[section]
\newtheorem{lemma}[theorem]{Lemma}
\newtheorem{meta-theorem}[theorem]{Meta-Theorem}
\newtheorem{observation}[theorem]{Observation}
\newtheorem{definition}[theorem]{Definition}
\Crefname{remark}{Remark}{Remarks}
\Crefname{observation}{Observation}{Observations}
\algnewcommand\algorithmicswitch{\textbf{switch}}
\algnewcommand\algorithmiccase{\textbf{case}}
\newcommand{\poly}{\operatorname{\text{{\rm poly}}}}
\newcommand{\codots}{,\cdots,}
\renewcommand{\paragraph}[1]{\vspace{0.15cm}\noindent {\bf #1}.}
\newcommand{\FullOrShort}{full}
  \newcommand{\fullOnly}[1]{#1}
  \newcommand{\shortOnly}[1]{}
    \newcommand{\fullOnly}[1]{}
    \newcommand{\shortOnly}[1]{#1}
\begin{document}
\title{Nearly Work-Efficient Parallel DFS in Undirected Graphs}
\date{}

\author{Mohsen Ghaffari \\ MIT \\ ghaffari@mit.edu \and  Christoph Grunau \\ ETH Zurich \\ cgrunau@ethz.ch \and Jiahao Qu \\ ETH Zurich \\ jiahao.qu@inf.ethz.ch}




\maketitle

\begin{abstract} 
We present the first parallel depth-first search algorithm for undirected graphs that has near-linear work and sublinear depth. Concretely, in any $n$-node $m$-edge undirected graph, our algorithm computes a DFS in $\tilde{O}(\sqrt{n})$ depth and using $\tilde{O}(m+n)$ work. All prior work either required $\Omega(n)$ depth, and thus were essentially sequential, or needed a high $\poly(n)$ work and thus were far from being work-efficient. 
\end{abstract}


\section{Introduction}
Depth-first search (DFS) is one of the basic algorithmic techniques for graph problems, with a wide range of applications. In an $n$-node and $m$-edge undirected graph, a simple sequential algorithm computes a DFS in $O(m+n)$ time. This is often covered in introductory algorithmic courses.  Unfortunately, the state-of-the-art parallel algorithms for computing a DFS require at least $\Omega(\sqrt{n})$ processors to run faster than this sequential algorithm. This significantly limits the applicability of these parallel DFS algorithms, because having $\Omega(\sqrt{n})$ or more processors is quite a high requirement for most plausible applications. In this paper, we describe a parallel DFS algorithm that runs faster than the sequential algorithm with just $\poly(\log n)$ processors. Indeed, so long as the number of available processors is in the range $[1, \Theta(\sqrt{n})]$---which arguably captures a wide range of practical settings of interest---this parallel DFS algorithm provides the best possible speedup over the sequential algorithm, up to logarithmic factors.

We next overview notions of work and depth in parallel algorithms and how they determine the time complexity given a number of processors. Then, we review prior work on parallel DFS computations. Afterward, we formally state our contributions. 

\subsection{Background: Work \& depth}
\noindent \textbf{Work and depth in parallel algorithms.} We follow the standard work-depth terminology~\cite{blelloch1996programming}.\footnote{More concretely, we describe our work assuming the strongest PRAM variant, CRCW with arbitrary writes. This is done for simplicity. The results can be extended easily to the weaker variants, e.g., EREW, as the latter can simulate the stronger variants at the cost of a logarithmic factor loss in depth and work. We did not attempt to optimize the logarithmic factors in our results.} For an algorithm $\mathcal{A}$, the \textit{work} $W(\mathcal{A})$ is defined as the total number of operations. The \textit{depth} $D(\mathcal{A})$ is defined as the longest chain of operations with sequential dependencies, in the sense that the $(i+1)^{th}$ operation depends on (and should wait for) the results of operations $i$ in the chain. 
The work and depth bounds determine the time $T_p(\mathcal{A})$ for running the algorithm when we have $p$ processor. We have $T_{1}(\mathcal{A}) = W(\mathcal{A})$ and $T_{\infty}(\mathcal{A}) = D(\mathcal{A}).$ A simple observation, known as Brent's principle~\cite{brent1974parallel}, gives the following general bound: $$W(\mathcal{A})/p \leq T_{p}(\mathcal{A}) \leq W(\mathcal{A})/p + D(\mathcal{A}).$$

\paragraph{Work-efficient parallel algorithms} Parallel algorithms with work $W(\mathcal{A})$ asymptotically equal to the complexity of their sequential counterpart are known as \textit{work-efficient}. If this equality holds up to a $\poly(\log n)$ factor, the algorithm is called \textit{nearly work-efficient}. (Nearly) work-efficient parallel algorithms enjoy asymptotically optimal speed-up over sequential algorithms  (up to logarithmic factors) for a small number of processors. Once the number of processors exceeds some threshold ---asymptotically equal to $W(\mathcal{A})/D(\mathcal{A})$---the time complexity bottoms out at $D(\mathcal{A})$. Thus, an ultimate goal in devising parallel algorithms is to obtain (nearly) work-efficient algorithms with depth as small as possible.

\subsection{State of the art for parallel DFS algorithms} DFS is quite hard for parallel algorithms. Reif~\cite{reif1985depth} showed that computing the lexicographically first DFS---where the DFS should visit the neighbors of a node according to their numbers---is $\mathsf{P}$-complete.\footnote{Technically, the $\mathsf{P}$-completeness is for a decision variant which asks whether a vertex $v$ is visited before another vertex $u$ or not.} That is, if there is a $\poly(\log n)$-depth $\poly(n)$ work algorithm for this problem, then there is such an algorithm for all problems in $\mathsf{P}$. This is why his paper was titled ``\textit{Depth-first search is inherently sequential.}'' Follow-up work on parallel DFS algorithms thus focused on computing an arbitrary DFS, where the order of visiting neighbors of a node gets chosen by the algorithm. This is also the version of the DFS problem that we tackle in this paper.

The most relevant prior work is by Aggarwal and Anderson~\cite{Aggarwal87} and a follow-up by Goldberg, Plotkin, and Vaidya~\cite{goldberg1988sublinear}. These give parallel DFS algorithms, though focusing almost exclusively on depth and require work much higher than the sequential algorithm. 

Aggarwal and Anderson~\cite{Aggarwal87}, building partially on a prior work of Anderson\cite{anderson1985parallel}, presented a randomized DFS algorithm for undirected graphs, with $\poly(\log n)$ depth and $\poly(n)$ work.\footnote{Aggarwal, Anderson, and Kao~\cite{aggarwal1989parallel} later provided an extension to directed graphs, with similar bounds. In this paper, our focus is on undirected graphs.} The latter is a high and unspecified polynomial in $n$, which is at least $\Omega(n^3)$. This high work complexity is, in part, due to its use of exact maximum weight matching as a subroutine, for which known parallel algorithms need a high $\poly(n)$ work~\cite{karp1985constructing}. Goldberg et al.\cite{goldberg1988sublinear} devised a deterministic variant of \cite{Aggarwal87} with depth $\tilde{O}(\sqrt{n})$ and work $\poly(n)$. A closer inspection indicates that this deterministic algorithm requires $\tilde{\Omega}(m\sqrt{n})$ work. 

To summarize, the state-of-the-art parallel DFS algorithms take work much higher work than the sequential algorithm. Thus, they require a high number of processors to run faster than the sequential algorithm. Even in the work of Goldberg et al.\cite{goldberg1988sublinear}, one would need $\Omega(\sqrt{n}\cdot \poly(\log n))$ processors to see any time advantage over the sequential algorithm. This issue significantly limits the applicability of these algorithms~\cite{goldberg1988sublinear, Aggarwal87} in the current (or plausible future) settings of parallel computation.

\subsection{Our contribution}
We present the first nearly work-efficient parallel algorithm with sublinear depth. This algorithm will outperform the sequential counterpart as soon as we have $\poly(\log n)$ processors. In fact, it exhibits an optimal speedup over the sequential algorithm, up to logarithmic factors, if the number of processors is at most $\Theta(\sqrt{n})$. Arguably, this range covers most of the practically relevant settings.

\begin{theorem}\label{thm:main} There is a randomized parallel algorithm that, in any $n$-node $m$-edge undirected graph $G=(V, E)$, given a root node $r$, computes a depth-first search tree of $G$ rooted at node $r$, using $\tilde{O}(m+n)$ work and $\tilde{O}(\sqrt{n})$ depth, with high probability.
\end{theorem}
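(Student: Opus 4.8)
The plan is to build on the recursive ``separating path'' paradigm of Aggarwal--Anderson~\cite{Aggarwal87} and Goldberg--Plotkin--Vaidya~\cite{goldberg1988sublinear} --- grow the DFS tree outward from $r$ by repeatedly committing a batch of new tree edges and then recursing in parallel on the leftover pieces --- but to replace the matching/flow computation at the heart of those algorithms with a randomized subroutine cheap enough that the total work telescopes to $\tilde O(m+n)$, while keeping the depth at $\tilde O(\sqrt n)$. Randomization is what lets us avoid exact maximum matching (as used by Aggarwal--Anderson, costing $\poly(n)$ work) and the iterated augmentation of Goldberg et al.\ (costing $\tilde\Omega(m\sqrt n)$ work).

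\textbf{Recursive structure and correctness.} Each subproblem is a connected subgraph $H$ of $G$ together with boundary data: a designated attachment vertex and, for the already-built ancestor tree, a record of which of its vertices are adjacent to $H$ (this encodes the constraint ``every non-tree edge must connect an ancestor--descendant pair,'' which in an undirected graph is the \emph{only} thing one must maintain for a tree to be a legal DFS tree, since undirected DFS produces no cross edges). For $H$ I would compute, in $\poly(\log n)$ depth, a \emph{separating structure}: a set of vertex-disjoint paths extending the current active branch, whose deletion breaks $H$ into connected components $C_1,C_2,\dots$ such that (i) for each $C_j$, all of its edges into the committed part of the tree go to vertices lying on a single root-to-leaf branch, so $C_j$ has a well-defined deepest committed neighbor, and (ii) each $C_j$ is a strictly easier subproblem --- either it has at most half the vertices of $H$, or the current step has committed $\tilde\Omega(\sqrt{|V(H)|})$ new vertices to the active branch. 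We then recurse on each $C_j$ in parallel and hang the returned DFS tree below its deepest committed neighbor; condition (i) is exactly the invariant that guarantees no cross edge is ever created, so the final tree is a valid DFS tree of $G$ rooted at $r$.

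\textbf{Complexity accounting.} I would track two potentials: the size of the largest live subproblem, and the number of vertices committed to active branches. By (ii), along any root-to-leaf chain of the recursion the first potential halves only $O(\log n)$ times, and the ``slack'' moves contribute $O(\sqrt n)$ vertices in $\tilde\Omega(\sqrt{\cdot})$-sized increments (a geometric-type sum bounded by $\tilde O(\sqrt n)$); since each separating-structure call has $\poly(\log n)$ depth, the longest chain of sequential dependencies --- hence the depth --- is $\tilde O(\sqrt n)$. For the work bound I would charge each call only to the edges it actually inspects, arrange that the subproblems alive at a fixed recursion layer sit on essentially edge-disjoint subgraphs, and --- crucially --- make the separating-structure subroutine \emph{incremental}, so its cost is proportional to the newly committed portion of the tree rather than to all of $H$; then each edge is re-inspected only $\poly(\log n)$ times over the whole execution and the total work is $\tilde O(m+n)$.

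\textbf{Main obstacle.} Everything above is essentially routine bookkeeping once the separating-structure subroutine is in hand; the real work is building that subroutine with near-linear work, $\poly(\log n)$ depth, and guarantees (i)--(ii) simultaneously. The hardness is already visible on expander-like $H$, where any useful separating structure must commit a linear number of vertices to long paths --- something no purely local rule produces --- which is precisely why prior work resorted to global matching/flow. My plan here is to use randomization together with sparsification and low-diameter-decomposition machinery: compute the candidate path system inside low-diameter clusters and a contracted quotient graph (or on an $\tilde O(|V(H)|)$-edge subgraph), argue that a \emph{random} such system satisfies (i)--(ii) with constant probability in $\poly(\log n)$ depth, and repeat $O(\log n)$ independent trials to boost to high probability. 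Getting this subroutine to hit all three targets at once, and verifying that its incremental cost bound survives the recursion, is the crux of the proof.
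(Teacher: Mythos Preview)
Your outer shell is right --- the Aggarwal--Anderson recursion with an initial-segment/separator step is exactly what the paper does --- but the proposal stops precisely where the content begins. You yourself flag the separating-structure subroutine as ``the crux of the proof,'' and then offer only a one-sentence gesture at low-diameter decomposition and sparsification. That is not a proof; it is a restatement of the problem. It is not clear that LDD buys you anything here: a path separator must consist of honest vertex-disjoint paths in $H$, and in the expander-like case you correctly identify as hard, low-diameter clustering does not produce long paths, nor does working on a sparsified or contracted quotient obviously translate back to paths in the original graph. You would need to say concretely what the ``random path system'' is and why it separates, and you do not.

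The paper's route is quite different from what you sketch, in two respects. First, the separator step is \emph{not} $\poly(\log n)$ depth; it is $\tilde O(\sqrt n)$ depth already, and only $O(\log n)$ levels of recursion sit on top of it. The separator is built by the Aggarwal--Anderson path-reduction scheme (halve the number of paths each round), but the expensive min-weight-matching step is replaced by a cheap greedy extension: each ``long'' path grows from its head one vertex at a time, with conflicts resolved each step by a \emph{maximal} matching (Luby). The process stops once fewer than $\sqrt n$ heads remain active, so there are $O(\sqrt n)$ steps; the price is that the final separator has $O(\sqrt n)$ paths rather than $O(1)$. Second --- and this is something your proposal does not address at all --- absorbing $O(\sqrt n)$ separator paths into the partial DFS one at a time would naively cost $\tilde O(m\sqrt n)$ work. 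The paper avoids this with batch-dynamic data structures on $G-T'$ (a parallel HDT connectivity structure layered with a rake-and-compress tree, augmented to answer ``lowest tree neighbor'' and set-to-point path queries) so that each absorption costs only $\tilde O(1)$ per edge incident to the absorbed path, and the work telescopes to $\tilde O(m)$. Your ``incremental'' remark hints at this need but supplies no mechanism; this is a second genuine gap.
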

\fullOnly{
We also sketch in \Cref{app:deterministic} how we can achieve the same statement as \Cref{thm:main} using a deterministic algorithm.}\shortOnly{In the full version of this paper, we sketch how we can achieve the same statement as \Cref{thm:main} using a deterministic algorithm.} This is by replacing some randomized subroutines in algorithms that we use from prior work with deterministic counterparts, and it increases the depth and work bounds by only logarithmic factors.

Our algorithm for \Cref{thm:main} follows the outer shell of the approach of Aggarwal and Anderson~\cite{Aggarwal87}. The novelty is in the internal ingredients. We adjust some parts of the algorithm to make it nearly work-efficient, and in particular, we use (and develop) certain batch-dynamic parallel data structures in several parts. The latter allows us to ensure that the total work remains $\tilde{O}(m+n)$. We provide an algorithm overview in \Cref{sec:overview}.

\section{Preliminaries}
\subsection{Basic definitions}
\label{subsec:definitions}
We use two basic definitions from prior work~\cite{Aggarwal87}:
\begin{definition} (\textbf{Initial DFS segment})
    Consider a graph $G=(V, E)$ and a root node $r\in V$. An \textbf{initial DFS segment}, or simply an initial segment, is a tree $T'=(V',E')$ rooted in node $r$, where $V'\subseteq V$ and $E'\subseteq E$, such that $T'$ can be extended to some depth-first search tree $T''$ rooted in node $r$. That is, there exists a depth-first search tree $T''=(V, E'')$ such that $E'' \supseteq E'$. 
\end{definition}

\begin{observation}\label{obs:UniqueMin} A rooted tree $T'$ is an initial segment iff there are no paths between different branches of $T'$ using vertices in $V-T'$. More formally, there should be no path connecting two incomparable nodes of $T'$ and made of internal nodes in $V-T'$. Here, incomparable means two nodes of $T'$, neither of which is an ancestor of the other.
\end{observation}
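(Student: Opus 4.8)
The plan is to reduce both directions to the standard characterization of depth-first search trees in undirected graphs: \emph{a spanning tree $T$ of $G$ rooted at $r$ is a DFS tree iff every edge of $G$ joins two comparable vertices of $T$} (equivalently, $T$ has no cross edges). Throughout we assume $G$ is connected, which is necessary for a DFS tree spanning $V$ to exist at all. We will also repeatedly use the following elementary fact: if $T''$ is any tree extending $T'$ with the same root $r$, then for every $x \in V(T')$ the path from $r$ to $x$ in $T''$ equals the path from $r$ to $x$ in $T'$; hence the ancestor/descendant relation among vertices of $V(T')$ is the same in $T'$ and in $T''$, and moreover every $T''$-ancestor of a vertex of $V(T')$ again lies in $V(T')$.

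For the forward direction, suppose $T'$ is an initial segment, fix a DFS tree $T''$ extending it, and suppose towards a contradiction that $G$ contains a path $P = (u = x_0, x_1, \ldots, x_k = v)$ whose endpoints $u,v$ are incomparable in $T'$ and whose internal vertices $x_1, \ldots, x_{k-1}$ all lie in $V \setminus V(T')$; note this includes the degenerate case $k=1$ of a single edge between two incomparable vertices of $T'$. By the elementary fact, $u$ and $v$ are incomparable in $T''$ as well. Let $z = \mathrm{LCA}_{T''}(u,v)$, let $c$ be the child of $z$ in $T''$ on the path toward $u$, and let $C$ be the subtree of $T''$ rooted at $c$; then $z \in V(T')$ (being a $T''$-ancestor of $u$), we have $u \in C$, and $v$ lies in the subtree of a different child of $z$, so $v \notin C$. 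Let $j$ be the largest index with $x_j \in C$, so that $x_j x_{j+1}$ is an edge of $G$ with $x_j \in C$ and $x_{j+1} \notin C$. Since $T''$ has no cross edges, $x_j$ and $x_{j+1}$ are comparable in $T''$; $x_{j+1}$ cannot be a descendant of $x_j$ (that would force $x_{j+1} \in C$), so $x_{j+1}$ is a $T''$-ancestor of $x_j$. But the only ancestors of $x_j$ lying outside $C$ are $z$ and the proper $T''$-ancestors of $z$, and all of these lie in $V(T')$; hence $x_{j+1} \in V(T')$. This is a contradiction: $x_{j+1}$ is either an internal vertex of $P$, hence in $V \setminus V(T')$, or the endpoint $v$, which lies outside $\{z\}$ together with the ancestors of $z$.

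For the backward direction, assume $G$ contains no such path. Consider a connected component $K$ of the induced subgraph $G[V \setminus V(T')]$. All vertices of $V(T')$ having a neighbor in $K$ are pairwise comparable in $T'$: if $a,b \in V(T')$ were incomparable and each adjacent to $K$, then concatenating an edge from $a$ into $K$, a path inside $K$, and an edge from $K$ back to $b$ yields a path between incomparable vertices of $T'$ whose interior lies in $K \subseteq V \setminus V(T')$ --- forbidden. Hence these vertices form a chain; let $d_K$ be its deepest member, pick $r_K \in K$ adjacent to $d_K$, and let $S_K$ be any DFS tree of $G[K]$ rooted at $r_K$. Define $T''$ to be $T'$ together with the edge $d_K r_K$ and the tree $S_K$ for every component $K$. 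An easy edge count, using that the components partition $V \setminus V(T')$, shows $T''$ is a spanning tree of $G$ rooted at $r$ and containing $T'$. Finally, every edge of $G$ joins comparable vertices of $T''$: an edge inside $V(T')$ does so by the $k=1$ case of the hypothesis together with the elementary fact; an edge inside a single $K$ does so because $S_K$ is a DFS tree; an edge from $w \in K$ to $a \in V(T')$ does so because $a$ belongs to the chain of $T'$-vertices adjacent to $K$ and is therefore an ancestor of (or equal to) $d_K$, which is the parent of $r_K$, which is an ancestor of $w$; and there is no edge between two distinct components (it would merge them). By the characterization, $T''$ is a DFS tree of $G$ extending $T'$, so $T'$ is an initial segment.

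The step I expect to be most delicate is the ``first edge leaving $C$'' argument in the forward direction: one must run the case analysis on whether $x_j x_{j+1}$ is a tree edge or a back edge of $T''$ carefully enough to pin $x_{j+1}$ down to $z$ or a proper $T''$-ancestor of $z$, and then correctly split into the two ways this contradicts the hypothesis (internal vertex of $P$ versus the far endpoint $v$). A secondary point that must be handled explicitly, rather than glossed over, is that the forbidden family of paths includes the length-one case --- a single edge between two incomparable vertices of $T'$ --- since otherwise the statement is simply false (take $T'$ to be a root with two leaf children that happen to be adjacent in $G$).
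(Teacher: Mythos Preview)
The paper does not actually prove this observation; it is stated in the preliminaries as a basic fact inherited from \cite{Aggarwal87} and then used without further argument. Your proof is correct and supplies what the paper omits. Both directions are handled cleanly: the forward direction via the ``first edge leaving the subtree $C$'' argument correctly pins $x_{j+1}$ to $z$ or a proper $T''$-ancestor of $z$ and then splits into the two contradictory cases, and the backward direction via the chain-of-attachment-points construction is the standard way to extend $T'$ to a full DFS tree. Your explicit remark that the length-one case (a single edge between incomparable vertices of $T'$) must be read as a forbidden path is a necessary clarification of the statement.
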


\begin{definition} (\textbf{Separator})
    For a graph $H$ with $n'$ vertices, a subset of vertices $Q$ is called a \textbf{separator} if the largest connected component of $H - Q$ has size at most $\frac{n'}{2}$. We call a separator $Q$ a \textbf{$k$-path separator} if $Q$ is made of $k$ vertex disjoint paths.
\end{definition} 

\subsection{Basic tools from prior work}
\noindent\textbf{Prefix sum on a linked list.} The prefix sums in an $n$-item linked list can be computed in $O(\log n)$ depth and $O(n)$ work~\cite{ANDERSON1990269}:   

\begin{lemma}
\label{lemma:andersonmiller}
    Given a linked list $(x_1\codots x_k)$ where each element $x_i$ is associated with a number $y_i$, there is a deterministic algorithm that computes the prefix sum in $O(\log n)$ depth and $O(n)$ work. Moreover, the value $\sum_{j=1}^i y_j$ can be accessed directly from $x_i$.
\end{lemma}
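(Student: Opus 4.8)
The plan is to reduce prefix sums on a linked list to iterated \emph{list contraction}. At a given level I would pick an independent set $I$ of the current list --- no two consecutive elements, and (without loss of generality) not the head --- splice the elements of $I$ out, recurse on the resulting shorter list, and then ``unroll'' the recursion to recover the answers at the spliced-out elements. The only point that needs care is how the values move under a splice: when I delete $x_i\in I$, with predecessor $x_{i-1}$ and successor $x_{i+1}$, I fold $y_i$ into the successor by setting $y_{i+1}\leftarrow y_{i+1}+y_i$ and redirect the pointer of $x_{i-1}$ to $x_{i+1}$. Independence of $I$ is exactly what makes an entire level safe in parallel: both $x_{i-1}$ and $x_{i+1}$ survive the level, the successors whose values are modified are pairwise distinct and lie outside $I$, and the pointer update and the value update of any element write different fields; so one level costs $O(1)$ depth and $O(n_\ell)$ work, where $n_\ell$ is the current list length. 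A straightforward induction shows that after the recursive call each surviving $x_j$ already holds $\sum_{j'\le j}y_{j'}$, and for a deleted $x_i$ I simply set its stored value to (stored value of $x_{i-1}$)$\,+\,y_i$ --- valid because nothing was ever folded into $x_i$ (its predecessor $x_{i-1}\notin I$), so $y_i$ is intact, and $x_{i-1}$'s answer is already in place.

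If each level removes a constant fraction of the current list, the length decays geometrically: there are $O(\log n)$ levels, the total work telescopes to $\sum_\ell O(n_\ell)=O(n)$, and the depth is $O(\log n)$ times the per-level depth. Once the contracted list has constant size, its prefix sums are computed directly, and the unrolling above propagates $\sum_{j'\le i}y_{j'}$ back onto every $x_i$; all of this is implementable on an EREW PRAM. So everything reduces to one subroutine: in a linked list of length $n_\ell$, find an independent set of density $\Omega(1)$ using $O(n_\ell)$ work and small depth.

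That symmetry-breaking step is the crux. A clean randomized solution already suffices: every element tosses a fair coin and joins $I$ iff its own coin is heads while its successor's coin is tails; this set is independent, has expected size $\Theta(n_\ell)$, and a Chernoff-type bound over the $O(\log n)$ levels shows the list shrinks geometrically with high probability, giving the bounds with a randomized algorithm. For the \emph{deterministic} statement one uses deterministic coin tossing in the style of Cole--Vishkin: treat the $O(\log n)$-bit identifiers as a proper coloring; one round --- recolor each element by the least bit position at which its color differs from its successor's, tagged with that bit --- turns a $b$-bit coloring into an $O(\log b)$-bit coloring in $O(n_\ell)$ work and $O(1)$ depth, so after $O(\log^{*} n)$ rounds the palette has constant size and its largest color class is an independent set of density $\Omega(1)$. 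The delicate point, and where I expect the real work to lie (and which is what \cite{ANDERSON1990269} handles), is that re-running this palette reduction from scratch at each of the $O(\log n)$ contraction levels would add an $O(\log^{*} n)$ factor to the depth and the work; one must amortize the color reduction against the ongoing contraction so that the entire symmetry-breaking overhead is only $o(\log n)$ in depth and $O(n)$ in work. That amortization is the main obstacle; the contraction framework itself is routine.
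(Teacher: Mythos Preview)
The paper does not prove this lemma at all: it is stated in the ``Basic tools from prior work'' section and simply cited to Anderson and Miller~\cite{ANDERSON1990269} as a black box. So there is no ``paper's own proof'' to compare against.

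That said, your sketch is an accurate account of what the cited reference actually does: list contraction by repeatedly splicing out an independent set (folding each removed element's value into its successor), Cole--Vishkin deterministic coin tossing for the symmetry breaking, and --- crucially --- an amortization so that the $O(\log^* n)$ cost of color reduction does not multiply against the $O(\log n)$ contraction levels. You have correctly identified both the framework and the one nontrivial obstacle (eliminating the $\log^* n$ factor), and you are right that this last point is precisely the contribution of~\cite{ANDERSON1990269}. The only thing to flag is that your write-up is a proof \emph{plan} rather than a proof: you name the obstacle but do not carry out the amortization, so if the goal were a self-contained argument you would still owe that step. For the purposes of this paper, though, a citation suffices.
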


\paragraph{Maximal matching} Luby~\cite{LUBY1993250} provides an efficient deterministic maximal matching algorithm with $O(\log^5 n)$ depth and $\tilde{O}(m)$ work. We will use this in a black-box manner. 
\begin{lemma}
\label{lem:lubylemma}
Given a graph $G=(V,E)$, there is a deterministic algorithm that computes a maximal matching in $O(\log^5 n)$ depth and $O(m \log^5 n)$ work. 
\end{lemma}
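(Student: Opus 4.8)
The plan is to follow Luby's two-stage recipe: first give a simple randomized parallel algorithm that produces a maximal matching in $O(\log n)$ phases, and then derandomize each phase via a small-sample-space argument combined with the method of conditional expectations, so that the depth and work grow by only polylogarithmic factors.

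For the randomized algorithm I would work in phases. At the start of a phase let $G'=(V',E')$ be the subgraph induced by the currently unmatched vertices that still have an incident edge. Every vertex of $V'$ picks a uniformly random incident edge and ``proposes'' along it; an edge both of whose endpoints propose to each other is added to the matching, and its two endpoints together with all their incident edges are deleted. A single phase is implementable in $O(\log n)$ depth and $O(|E'|)=O(m)$ work using standard primitives (selecting a random incident edge, sending and reading proposals, and list/array compaction). The crucial claim is that the expected number of edges deleted in a phase is a constant fraction of $|E'|$: a standard charging argument shows that each vertex has constant probability that one of its neighbors becomes matched, so in expectation a constant fraction of edge-endpoints, hence of edges, vanish. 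Consequently $O(\log m)=O(\log n)$ phases suffice to delete all edges with high probability, and the matching is maximal at termination because we only stop once no unmatched vertex has an incident edge. This already yields a randomized algorithm with $O(\log^2 n)$ depth and $O(m\log n)$ work.

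To remove the randomness, I would note that the per-phase progress bound relies only on a bounded degree of independence among the edge-choices: the relevant ``a neighbor got matched'' events and their pairwise correlations can be analyzed assuming the choices come from a distribution that is $O(1)$- or $O(\log n)$-wise independent, which lives in a sample space of size $\mathrm{poly}(n)$ and is thus described by $O(\log n)$ random bits. One then fixes these $O(\log n)$ bits one at a time: at each step, compute in parallel the conditional expectation of the number of deleted edges under each possible value of the next bit and keep the larger one. Each conditional-expectation evaluation is a sum over edges of locally computable quantities, runnable in $O(\mathrm{polylog}\,n)$ depth and $\tilde O(m)$ work, and there are $O(\log n)$ bits to set, so derandomizing one phase costs $O(\mathrm{polylog}\,n)$ depth and $\tilde O(m)$ work; over $O(\log n)$ phases this gives the stated $O(\log^5 n)$ depth and $O(m\log^5 n)$ work. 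Since the lemma is quoted verbatim from Luby~\cite{LUBY1993250}, the honest statement is that we invoke it as a black box, and the sketch above merely indicates why such a bound is attainable.

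The main obstacle is the derandomization bookkeeping rather than the algorithmic idea: one must pick the limited-independence distribution carefully so that the constant-fraction progress bound provably survives, and then bound the parallel cost of repeatedly evaluating conditional expectations over all edges so that the total work stays $\tilde O(m)$ instead of blowing up by the size of the sample space. This is exactly the content of Luby's ``without a processor penalty'' result and is the source of the somewhat generous $\log^5 n$ factor in the depth bound.
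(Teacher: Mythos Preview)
Your proposal is correct and matches the paper's treatment: the paper does not prove this lemma at all but simply cites Luby~\cite{LUBY1993250} as a black box, exactly as you acknowledge in your final paragraph. Your additional sketch of the randomized-plus-derandomization route is a reasonable summary of Luby's argument, but for the purposes of this paper only the citation is needed.
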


\section{Algorithm Overview}
\label{sec:overview}The outer shell of our algorithm is based on the classic approach of Aggarwal and Anderson\cite{Aggarwal87}, which provides a $\poly(\log n)$-depth DFS algorithm but uses $\poly(n)$ work for a high polynomial. The approach is recursive. Let $G$ be the input graph, and $r$ be the DFS root. We gradually grow an \textit{initial DFS segment} of $G$ until it becomes a complete DFS tree of $G$. See \Cref{subsec:definitions} for definitions. At the start, the segment is simply the root $r$. At each point, the initial segment is extended such that the problem is reduced to finding a new DFS tree in each connected component of the remaining graph, and such that each component has size at most half of the previous size. Thus, within $\log n$ recursions, the whole DFS is constructed. 

The algorithm for extending an initial segment $T'$ to a full DFS tree works as follows. For a connected component $C$ in $G-T'$, by \Cref{obs:UniqueMin}, there is a unique vertex $x \in T'$ with the lowest depth that has a neighbor $y\in C$. We construct a DFS rooted in $y$ for component $C$, then connect it to $T'$ using the edge $(x,y)$. This is done in parallel for different connected components of $G-T'$. 

The core of the algorithm is to construct an initial DFS segment that forms a separator for each component. This involves two parts: The first part is to find a separator that consists of a small number of vertex disjoint paths. The second part is constructing an initial segment from this set of paths. We next discuss each of these parts separately, in \Cref{subsec:separator} and \Cref{subsec:initialSegment}, and comment on how our algorithm differs from that of Aggarwal and Anderson and achieves $\tilde{O}(m)$ work.\footnote{From now on, we focus on connected graphs. This implies that $m=\Omega(n)$ and thus allows us to state the work bound simply as $\tilde{O}(m)$, instead of $\tilde{O}(m+n)$. Note that connected components can be identified in $\tilde{O}(m)$ work and $\poly(\log n)$ depth via classical parallel algorithms~\cite{JaJ92}. } 


\subsection{Separator construction} 
\label{subsec:separator}
The first part is to construct a separator with few paths. 

\paragraph{Aggarwal-Anderson} Aggarwal and Anderson~\cite{Aggarwal87} present a $\poly(n)$-work and $\poly(\log n)$ depth algorithm that computes an $O(1)$-path separator, as follows: We start with the trivial $n$-path separator that consists of one path for each vertex. Then, in each iteration, we reduce the number of paths by a constant factor while ensuring that the paths form a separator. We continue this until at most a constant number of paths are left. The core of the process is reducing the number of paths by a constant factor. To do that, in a rough sense, the basic idea is to match up and merge pairs of paths iteratively. Aggarwal and Anderson~\cite{Aggarwal87} reduced this problem to a minimum weight perfect matching problem. The latter is known to be solvable using $\poly(\log n)$ depth and a high $\poly(n)$ amount of work\cite{karp1985constructing}, which is at least $\Omega (n^3)$ work. However, this approach does not yield a work-efficient parallel algorithm as we are unaware of any $\tilde{O}(m)$-work algorithm for minimum-weight perfect matching, even with a sublinear depth. 

\paragraph{Our separator algorithm} To keep the work bound $\tilde{O}(m)$, our separator will consist of $\tilde{O}(\sqrt{n})$ paths, instead of $O(1)$. In \cref{separatorconstruction}, we present a parallel algorithm that computes an $O(\sqrt{n})$-path separator using near-linear work and $\tilde{O}(\sqrt{n})$ depth: 

\begin{restatable}[\textbf{Separator Theorem}]{theorem}{separator}
\label{thm:separator}
   There is an algorithm that finds an $O(\sqrt{n})$-path separator $Q$ in $O(\sqrt{n} \log^8 n)$ depth and $O(m \log^7 n)$ work. Each path is stored as one doubly-linked list. 
\end{restatable}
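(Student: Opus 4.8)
The plan is to implement the "match-and-merge" paradigm of Aggarwal--Anderson but replace their minimum-weight perfect matching subroutine with a maximal matching call (\Cref{lem:lubylemma}), accepting that this only reduces the number of paths by a constant factor per phase rather than a controlled optimal amount, and stopping once $O(\sqrt{n})$ paths remain rather than pushing all the way down to $O(1)$. Concretely, I would start with the trivial $n$-path separator consisting of one singleton path per vertex; this is vacuously a separator since $H-Q$ is empty. The algorithm then proceeds in phases. In each phase, I form an auxiliary "conflict" or "mergeability" graph whose nodes are the current paths and whose edges connect pairs of paths that can be legally concatenated (sharing a usable edge at an endpoint) while preserving the separator property; I compute a maximal matching in this auxiliary graph via \Cref{lem:lubylemma}, and merge each matched pair into a single path by splicing the two doubly-linked lists in $O(1)$ depth per merge. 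The key invariant to maintain is that after each phase the set of paths is still a separator, and the key progress measure is that the number of unmatched paths shrinks by a constant fraction, so after $O(\log n)$ phases the number of paths drops from $n$ to roughly $\sqrt{n}$ — except that, crucially, the per-phase path-count reduction must be analyzed against the total \emph{length} of the paths, since a phase that merges long paths is expensive.

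The main technical work — and the main obstacle — is the bookkeeping that guarantees both (i) the merged path collection remains a separator and (ii) the work per phase is $\tilde O(m)$ while the depth per phase is $\tilde O(\sqrt n)$, with only $O(\log n)$ phases. For (i), I expect to rely on a structural claim, analogous to the one implicit in \cite{Aggarwal87}, that if $Q$ is a $k$-path separator of $H$ then there is a way to pair up and merge paths (following how the removed components connect back to $Q$) so that the result is still a separator with $\approx k/2$ paths; the maximal matching is chosen in a graph that only includes "safe" merges, so maximality guarantees a constant-fraction reduction among the paths that \emph{can} still be merged, and a separate argument handles paths that have become "stuck." For (ii), the depth bound is the delicate part: each phase invokes Luby's algorithm with $O(\log^5 n)$ depth, and there are $O(\log n)$ phases, giving $O(\log^6 n)$ depth from matching — but the $\tilde O(\sqrt n)$ depth term must come from somewhere, presumably a final step where the $O(\sqrt n)$ remaining paths (or the intermediate paths, whose total length is $O(n)$) are processed, or from re-verifying the separator condition via connectivity/component computations whose depth is $\poly(\log n)$ but whose interaction with path lengths of up to $\Theta(\sqrt n)$ forces the $\sqrt n$ factor. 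I would organize the depth budget as: $O(\log n)$ phases, each costing $O(\log^{8-\varepsilon} n)$ depth for matching plus component maintenance, yielding the $\log^8 n$ factor, and a separate $O(\sqrt n)$-depth accounting for manipulating individual paths whose length we only control to be $O(\sqrt n)$ each (since with $O(\sqrt n)$ paths covering a separator in an $n$-vertex graph, the average — and, after a balancing step, maximum — path length is $O(\sqrt n)$).

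For the work bound, the point is that the total length of all paths is always $O(n)$ (they are vertex-disjoint), so each phase touches $O(n)$ path-data plus $O(m)$ edge-data when building the mergeability graph and verifying safety; Luby's algorithm on the auxiliary graph costs $\tilde O(m)$ work since the auxiliary graph has at most $O(m)$ edges (two paths are mergeable only through an actual edge of $G$). Summed over $O(\log n)$ phases this is $O(m\log^7 n)$, matching the claim. The doubly-linked-list representation is maintained throughout: a merge is a constant number of pointer updates, and any needed prefix-sum-style queries on a path (e.g., to split it or to find where components attach) are handled by \Cref{lemma:andersonmiller} in $O(\log n)$ depth and $O(n)$ work per invocation. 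I would present the argument by first stating and proving the structural separator-preservation lemma, then describing one phase precisely as an algorithm with its work/depth analysis, then bounding the number of phases by a potential argument on the path count, and finally noting that the terminal collection has $O(\sqrt n)$ paths by construction (we simply halt the phase loop once the threshold is reached).
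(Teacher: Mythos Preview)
Your proposal has a genuine structural gap: you model the merge step as a maximal matching on an auxiliary graph whose edges are single $G$-edges between path \emph{endpoints}. That is not what reducing the path count requires, and it is not what the paper does. Two separator paths in $Q$ need not be endpoint-adjacent; they are connected, if at all, by a path through $G-Q$, and the merge replaces a long path $l=l'xl''$ and a short path $s=s'ys''$ by $l'ps'$ where $p$ is such a connecting path with all internal vertices outside $Q$. Finding a large collection of \emph{vertex-disjoint} connecting paths with the right maximality and ``no path from the discarded part $L^*$ to $S-\hat S$'' properties is exactly the step Aggarwal--Anderson handled with min-weight perfect matching, and it is what your single-edge mergeability graph plus Luby cannot deliver. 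In particular, once vertices have been discarded from $Q$ (so $G-Q$ is nonempty), your endpoint-edge model will simply fail to find legal merges that exist, and your constant-fraction progress claim breaks.

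Relatedly, you have not identified where the $\tilde O(\sqrt n)$ depth actually comes from. It is not from path-length manipulation or a cleanup over $O(\sqrt n)$ final paths. In the paper, each path-reduction round runs a greedy ``head-extension'' process on the contracted graph $G'$: every long path carries a head that in each step either grabs an available neighbor (via a maximal matching among competing heads) or dies and backtracks; the process halts once fewer than $\sqrt n$ heads remain active. Because every step changes the state of at least $\sqrt n$ vertices while there are $\geq\sqrt n$ active heads, and each vertex changes state $O(1)$ times, the number of steps is $O(\sqrt n)$ --- that is the entire source of the $\sqrt n$ factor. The per-step cost is then controlled by a decremental neighbor data structure plus Luby in $O(\log n)$ doubling phases, charging work to edges whose endpoints change state. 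Your proposal lacks this mechanism (and the long/short split that makes the three structural properties provable), so as written it neither finds the needed connecting paths nor accounts correctly for the depth.
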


\subsection{Construting an initial segment from the separator paths}
\label{subsec:initialSegment}
The second part assumes that we have a separator consisting of several paths, and absorbs these paths into the current partial DFS tree, forming a new initial DFS segment that includes all vertices of these paths (and potentially some more).

\paragraph{Aggarwal-Anderson} Aggarwal and Anderson find a separator $Q$ that consists of only $O(1)$ paths. Then, they add these paths to the partial DFS tree essentially one by one. They find a path $p$ from the lowest node in the partial tree $T'$ to a path $l$ in the separator. Then, by adding the path $p$ to the partial DFS tree $T'$, they can absorb at least half of the vertices of $l$ into $T'$. These operations can be done using basic parallel spanning tree algorithms, in $\tilde{O}(m)$ depth and $\poly(\log n)$ work. After repeating the above procedure at most $O(\log n)$ times, all vertices in the separator get absorbed into $T'$.


\paragraph{Our absorption algorithm} 
In contrast to the $O(1)$-path separator of Aggarwal and Anderson, our work-efficient separator construction produces $O(\sqrt{n})$ paths. If we were to trivially absorb these paths, the work required with the basic solution would be $\tilde{O}(m \sqrt{n})$. To make this part work-efficient, we need that the total work over all absorptions is $\tilde{O}(m)$. We will perform each absorption using depth $\poly(\log n)$ and work linearly proportional to the sum of the number of vertices on the paths and the number of edges adjacent to the vertices on the path. These vertices and edges get deleted from the remaining graph due to the absorption, and hence we can argue that the overall work is $\tilde{O}(m)$. The key algorithmic ingredient will be certain batch-dynamic parallel data structures, which can perform large batches of updates to the graph using depth $\poly(\log n)$ and work proportional to the total number of changes in the graph. In \Cref{sec:initialsegment} and \Cref{sec:dataStructure}, we present the algorithms that provide this and prove the following theorem statement.

\begin{restatable}[\textbf{Absorption Theorem}]{theorem}{merging}
\label{thm:merging}
    Given an $O(\sqrt{n})$-path separator, where each path is stored as a doubly-linked list, there is an algorithm that constructs an initial segment $T' \subseteq G$, where $T'$ is also a separator for $G$. The algorithm uses $O(\sqrt{n} \log^3 n)$ depth w.h.p. and $O(m \log^3 n)$ work in expectation.
\end{restatable}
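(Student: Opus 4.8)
The plan is to lift the iterative ``route-and-absorb'' scheme of Aggarwal and Anderson to a near-work-efficient implementation by maintaining throughout a batch-dynamic representation of the still-unabsorbed part of the graph, and by charging the cost of each absorption against the vertices and edges that it permanently deletes from that part. Concretely, I would keep a rooted initial DFS segment $T'$, initialized to the root $r$ alone, together with the \emph{active graph} $G' := G \setminus V(T')$ stored in the batch-dynamic structure of \Cref{sec:dataStructure}. By \Cref{obs:UniqueMin}, the invariant that $T'$ is an initial segment is equivalent to: inside every connected component $C$ of $G'$, all $T'$-vertices with a neighbor in $C$ lie on one root-to-leaf path of $T'$; call the bottommost such vertex the \emph{attachment point} of $C$. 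A single absorption step acts on a component $C$ of $G'$ that still contains an unabsorbed separator fragment $\ell$: it locates the attachment point $u \in T'$ of $C$ and a neighbor $y \in C$ of $u$, pulls a path $P \subseteq C$ from $y$ to a vertex $w$ of $\ell$ out of the maintained spanning forest of $G'$, splits $\ell$ at $w$ and retains the longer side $\ell_w$ (so $|\ell_w| \ge |\ell|/2$), and attaches $P \cdot \ell_w$ to $T'$ as a new branch below $u$, using \Cref{lemma:andersonmiller} to orient and renumber the doubly-linked lists and install parent pointers in $O(\log n)$ depth. Because $P \cdot \ell_w$ hangs off the bottom of the chain to which $C$ was attached, the extended $T'$ is again an initial segment; the step then deletes $V(P \cup \ell_w)$ and all incident edges from $G'$ in one batch update, possibly splitting $C$ and redistributing the leftover $\ell \setminus \ell_w$ and the other fragments among the pieces.

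For the work bound I would rely on the batch-dynamic structure of \Cref{sec:dataStructure} supporting, in $\poly(\log n)$ depth and work proportional to the batch/output size: (i) batched deletion of a vertex set together with its incident edges; (ii) component identifiers and a spanning forest of $G'$; (iii) for each component, its minimum-depth adjacent $T'$-vertex; and (iv) spanning-forest paths between prescribed same-component vertex pairs. Granting these, the total-work analysis is a charging argument: the vertices and edges an absorption step touches are exactly those it removes from $G'$ (up to $\poly(\log n)$ bookkeeping per element), each vertex of $G$ leaves $G'$ at most once, and each edge at most once per endpoint, so the work summed over all steps is $O(m\,\poly(\log n))$ --- in expectation, reflecting the randomization in the data structure.

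For the depth I would run the absorptions in $O(\log n)$ \emph{halving rounds}: in each round every component, in parallel, performs one absorption step per fragment it currently contains, which at least halves the number of still-unabsorbed separator vertices in each component, so $O(\log n)$ rounds suffice. The steps inside a single component are inherently sequential --- attaching a branch changes both the attachment point and the component structure seen by later steps, and attaching two sibling branches simultaneously would violate the initial-segment invariant --- so, provided the routing of the paths $P$ is arranged so that a step does not blow up a component's fragment count, one component meets only $O(\sqrt n)$ fragments and a round contributes $O(\sqrt n)$ sequential steps of $\poly(\log n)$ depth, for $O(\sqrt n \log^3 n)$ depth w.h.p. Finally, when no separator fragment remains we have $Q \subseteq V(T')$, so every component of $G \setminus V(T')$ is contained in a component of $G \setminus Q$ and hence has at most $n/2$ vertices; thus the constructed $T'$ is also a separator of $G$, as the theorem requires.

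The main obstacle is the work-efficiency: realizing primitives (i)--(iv) as genuine batch-dynamic data structures that digest large batches of vertex deletions together with spanning-forest, attachment-point, and path queries in $\poly(\log n)$ depth and work proportional only to the number of changes. This is exactly what \Cref{sec:dataStructure} has to deliver, and it is what separates $\tilde{O}(m)$ work from the naive $\tilde{O}(m\sqrt n)$. A second, more delicate point is keeping the critical path down to $\tilde{O}(\sqrt n)$ absorption steps: a path routed toward one fragment may pass through others, shortening them ``for free'' but also risking fragmenting them, so the routing and the order in which a component's fragments are absorbed have to be chosen so that each of the $O(\sqrt n)$ original separator paths accounts for only $O(\log n)$ steps on any lineage. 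The rest --- maintaining the initial-segment invariant via \Cref{obs:UniqueMin}, manipulating the linked-list paths via \Cref{lemma:andersonmiller}, and the charging --- is routine once these are in place.
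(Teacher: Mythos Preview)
Your high-level scheme matches the paper's: maintain $G' = G \setminus V(T')$ under a batch-dynamic structure, repeatedly find the attachment point of a component that still contains separator vertices, route a path to a separator fragment, absorb the longer half, and charge the work to the vertices and edges permanently deleted. The primitives you list in (i)--(iii) correspond directly to the paper's \textsf{BatchDelete}, \textsf{FindCC}, and \textsf{LowestNode}.

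However, the paper's execution is simpler than yours, and the simplification hinges exactly on the obstacle you flag as ``delicate'' but leave unresolved. The paper does not organize absorptions into halving rounds, nor does it parallelize across components: it runs absorptions one at a time, purely sequentially, for a total of $O(\sqrt{n}\log n)$ iterations. What makes this count work is that the path primitive in \Cref{lemma:parallel_data_structure} --- \textsf{FindPathS2P} --- returns a path $p$ from the attachment point to some $q\in Q$ with the guarantee that \emph{every internal vertex of $p$ lies outside $Q$}. This is implemented by augmenting each cluster of the rake-and-compress tree with a ``separator flag'' indicating whether it contains a $Q$-vertex, and then descending through the RC hierarchy toward the nearest flagged cluster. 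With that guarantee, a single absorption touches exactly one separator path $\ell$ (the one containing $q$), splits it at $q$, and removes the longer half; no other fragment is ever cut by the routing path. Hence each of the $O(\sqrt{n})$ original separator paths is halved at most $O(\log n)$ times, yielding the $O(\sqrt{n}\log n)$ iteration bound directly, with $\poly(\log n)$ depth per iteration.

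Your primitive (iv), generic ``spanning-forest paths between prescribed same-component vertex pairs,'' does not carry this avoid-$Q$ property, which is precisely why you are forced to worry about fragment proliferation and to introduce the round structure. That extra machinery is unnecessary once the path query is made $Q$-aware; conversely, without it, your claim that ``each of the $O(\sqrt{n})$ original separator paths accounts for only $O(\log n)$ steps on any lineage'' is asserted but not established. So the missing idea is exactly this: route the connecting path so that it meets $Q$ only at its far endpoint, and design the data structure (via the separator-flag augmentation) to support that specific query.
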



\section{Separator construction}
\label{separatorconstruction}

In this section, we prove the following theorem.
\separator*
\begin{proof}
    We start with $Q$ consisting of one path for each vertex. By \cref{Lemma reduce path}, when $Q$ consists of more than $48 \sqrt{n}$ paths, we can repeatedly reduce the number of paths in $Q$ by a constant factor in $O(\sqrt{n} \log^7 n)$ depth and $O(m \log^6 n)$ work. After $O(\log n)$ iterations, we get a separator $Q$ consisting of at most $48\sqrt{n}$ paths.
\end{proof}

\subsection{Path Reduction}

\subsubsection{Review of Path Reductions in \cite{Aggarwal87}}
The crucial part of the separator construction is to reduce the number of paths by a constant fraction while preserving the separator property of $Q$. 

Suppose that initially, $Q$ has $k$ paths. Aggarwal and Anderson~\cite{Aggarwal87} first divide the paths into a set $L$ of long paths and a set $S$ of short paths. Initially, $\frac{1}{4}k$ of the paths are placed in $L$ and the rest $\frac{3}{4}k$ are placed in $S$. The idea is to find a set of vertex disjoint paths $P$ between $L$ and $S$. Each path $p$ in $P$ has one end on a long path and the other end on a short path. All internal vertices of $p$ are not contained in $Q$, and each path in $Q$ intersects at most one path in $P$. We refer to a set of paths $P$ that satisfy all properties stated above as \emph{valid}. Suppose the path $p$ joins the path $l \in L$ and $s \in S$. Let $l = l'xl''$ and $s = s'ys''$ where $x$ and $y$ are the endpoints of $p$, and $s'$ is equal or longer than $s''$. Then $l$ is replaced by $l'ps'$, and $s$ by $s''$, while we discard the path $l''$.   

Let $\hat{L}$ and $\hat{S}$ be the long and short paths that are joined, and $L^*$ be the part of long paths that get discarded. Besides $P$ being \emph{valid}, Aggarwal and Anderson~\cite{Aggarwal87} want two more properties:
\begin{enumerate}
    \item The set of paths $P$ are maximal.  
    \item There is no path between $L^*$ and $S-\hat{S}$.
\end{enumerate}

Suppose that at least $\frac{1}{12}k$ of the paths are joined, and $Q$ remains a separator after the replacement. Then the length of at least $\frac{1}{9}$ of the short paths are reduced by $\frac{1}{2}$. In at most $O(\log n)$ such steps, at least $\frac{1}{4}k$ of the short paths are removed. When this happens, one can stop because the number of paths in the separator has been reduced by a constant fraction.  

If either of the above two conditions fails, then they~\cite{Aggarwal87} can directly find a different $Q'$ such that it consists of less than a constant fraction of paths than $Q$. 

\subsubsection{Path Reduction Algorithm}
\label{pathreductionalgorithm}
The bottleneck for the work in \cite{Aggarwal87} is the algorithm that finds a maximal $P$. They reduce this problem to the problem of minimum weight perfect matching, and solve it in $\poly(\log n)$ depth and using $\Omega(n^3)$ work via known perfect matching algorithms~\cite{karp1985constructing}. In our paper, we crucially want work-efficient algorithms, which use $\tilde{O}(m)$ work: 
\begin{lemma}
\label{Lemma reduce path}
    Given a separator $Q$ that consists of $k$ paths, with $k > 48\sqrt{n}$, there is an algorithm that finds $Q'$ that consists of at most $\frac{47}{48}k$ paths in $O(\sqrt{n} \log^7 n)$ depth and $O(m \log^6 n)$ work .
\end{lemma}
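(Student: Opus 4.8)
The plan is to follow the path-reduction framework of Aggarwal and Anderson reviewed just above and to change only the one step that is responsible for their $\Omega(n^3)$ work: the computation of a \emph{maximal valid} set $P$ of vertex-disjoint paths between the long paths $L$ and the short paths $S$. They obtain $P$ via a minimum-weight perfect matching; I would instead obtain a (sufficiently) maximal $P$ using only parallel connectivity and \emph{maximal} matching, both of which run in $\tilde O(m)$ work and $\poly(\log n)$ depth (\Cref{lem:lubylemma} for matching, classical parallel connectivity~\cite{JaJ92}). Everything else in a reduction step carries over from \cite{Aggarwal87}: performing the replacements $l\mapsto l'ps'$, $s\mapsto s''$, discarding $l''$, verifying that $Q$ is still a separator (largest component of $G-Q$ of size $\le n/2$), and, when no reduction step is possible, reading off a smaller separator $Q'$ directly. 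Each of these is implemented work-efficiently; in particular, cutting and splicing the doubly-linked lists that represent the paths is done in $O(\log n)$ depth by locating split points with the prefix-sum primitive of \Cref{lemma:andersonmiller} rather than by walking a list. Since a reduction step consists of $O(\log n)$ short-path-halving sub-steps, and (see below) each sub-step costs $\tilde O(\sqrt n)$ depth and $\tilde O(m)$ work, the bounds of the lemma follow by composing the polylogarithmic overheads.

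To build a valid $P$ work-efficiently, note that because every path of $Q$ is hit by at most one path of $P$, and every $P$-path has exactly one endpoint on $L$ and one on $S$, a valid $P$ is nothing but a matching between the paths of $L$ and the paths of $S$ equipped with pairwise vertex-disjoint routing paths whose interiors lie in $V-Q$; moreover the interior of each routing path lies entirely inside a single connected component of $G[V-Q]$. So I would: (i) compute the connected components of $G[V-Q]$ together with, for each component, its adjacencies to the paths of $Q$; (ii) form an auxiliary graph in which a maximal matching selects, using each component at most once, pairs $(l,s)$ with $l$ adjacent to the component on the $L$-side and $s$ on the $S$-side (plus direct $L$--$S$ adjacencies), so that a maximal set of vertex-disjoint length-$\le 2$ paths through contracted components is computed by one call to \Cref{lem:lubylemma}; (iii) for each selected pair, route an actual path through the corresponding component using a spanning tree of that component with an Euler tour and a level-ancestor structure, so that the path is extracted in $O(\log n)$ depth without walking the tree; (iv) delete the used vertices, recompute, and repeat. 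When a round adds nothing, $P$ is maximal: any remaining valid path would be vertex-disjoint from everything already used and hence would have survived in the residual graph. I expect $O(\log n)$ rounds to suffice because each round's maximal matching recovers a constant fraction of the still-attainable valid paths provided the routing paths inside components are chosen carefully; should the number of useful rounds turn out to be $\omega(\log n)$, capping it at $\tilde O(\sqrt n)$ still fits the depth budget, and the slack in the target $\tfrac{47}{48}$ ratio -- we only need a constant-factor reduction, not \cite{Aggarwal87}'s reduction to $O(1)$ paths -- leaves room to absorb a slightly-less-than-maximal $P$.

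For the other branch: whenever the valid $P$ just produced joins fewer than $\tfrac{1}{12}k$ paths, or the replacement would destroy the separator property, or the $L^*$ condition of \cite{Aggarwal87} fails, the obstruction is a small vertex set -- certified by the maximal-matching / connectivity computation (a König-type vertex cover in the contracted graph, or a component whose removal rebalances $G-Q$) -- whose removal or contraction already reduces the number of paths of $Q$ by a constant factor while preserving the separator property, and we output that $Q'$. This is exactly the alternative branch of \cite{Aggarwal87} adapted to use a cut rather than an exact matching; it is a connectivity/cut computation, hence runnable in $\tilde O(m)$ work and $\poly(\log n)$ depth.

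The main obstacle is steps (ii)--(iv): matching primitives yield only \emph{maximal} (hence $2$-approximate), not maximum, matchings, whereas the analysis of \cite{Aggarwal87} is written for a genuinely maximal valid $P$. The work is to show that the iterated-maximal-matching construction reaches maximality -- or close enough that the progress argument and the separator-preservation argument of \cite{Aggarwal87} still go through given the $\tfrac{1}{48}$ slack -- within $\tilde O(\sqrt n)$ rounds of $\poly(\log n)$-depth computation, and in particular to pick the routing path through each used component so that few still-realizable valid paths are destroyed per round. A secondary concern is that paths in $Q$ and in $P$ can have length $\Theta(n)$, so every operation on them (locating the endpoints $x,y$, cutting $l$ into $l'xl''$, concatenating $l'ps'$, extracting a path inside a component) must be done by pointer-jumping and prefix sums (\Cref{lemma:andersonmiller}) rather than by traversal, which is what keeps the per-sub-step depth at $\tilde O(\sqrt n)$ rather than $\tilde O(n)$.
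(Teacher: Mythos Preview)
Your proposal has a genuine gap centered on Property~2 of the path set $P$: there must be no $D$-path between the \emph{discarded} tails $L^*$ of matched long paths and the unmatched short paths $S-\hat S$. Your iterated maximal-matching-through-components construction, even if it terminates, would at best give Property~1 (no $D$-path from an \emph{unmatched} long path in $L-\hat L$ to $S-\hat S$). Property~2 concerns the tail $l''$ of a \emph{matched} long path $l=l'xl''$, and nothing in your construction controls where on $l$ the routing path $p$ attaches; $l''$ may well be adjacent to a component of $D$ that touches some $s\in S-\hat S$. This is precisely what Aggarwal--Anderson's minimum-weight matching buys them, and it is essential for the ``discarded part breaks the separator'' branch: one needs $L^*$ and $S-\hat S$ to lie in different components of $D\cup L^*\cup(S-\hat S)$ to conclude that $L\cup\hat S\cup P$ is a separator. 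Your ``K\"onig-type vertex cover'' description of the alternative branch is also off --- the actual argument just uses Properties~1 and~2 to certify that one of two explicit path collections is already a separator, with no cut computation beyond checking component sizes.

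The paper takes a quite different route that sidesteps both issues. Each long path designates one endpoint as a \emph{head}, and all heads simultaneously run a DFS-like exploration on the graph with short paths contracted: in each step a head extends to an available neighbor (conflicts among heads resolved by one maximal-matching call on a small auxiliary graph), and a head with no available neighbor \emph{dies} and backtracks one vertex. The process halts once fewer than $\sqrt n$ heads remain active; those survivors that never reached a short path form a new class $P_2$ of ``unmatched'' paths with $|P_2|<\sqrt n\le k/48$ --- this is the added Property~3, and it is where both the $\sqrt n$ depth budget and the hypothesis $k>48\sqrt n$ enter. Crucially, every vertex of $L-\hat L$ and of $L^*$ is a \emph{dead} vertex, and a one-line induction shows a dead vertex has no $D$-path to any still-available (hence unmatched) short path; Properties~1 and~2 thus fall out simultaneously. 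Work-efficiency per step comes from a decremental neighbor data structure so that each step costs $\tilde O(N_{\mathrm{change}})$, and the number of steps is $O(\sqrt n)$ because each step changes the state of at least $\sqrt n$ vertices.
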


To prove this lemma, we find a set of vertex disjoint paths $P$ that are valid (as defined above) with the exception that some of the paths in $P$ have no endpoint in $S$. We divide $P$ into $P_1$ and $P_2$ where $P_1$ are the ``matched'' paths that have the other end on $S$, while $P_2$ are the ``unmatched'' paths, meaning that their other ends are not on $S$.

Similarly, let $\hat{L}_1$ be the paths of $L$ that are joined with $S$ using $P_1$. Suppose that the long path $l = l'xl'' \in  \hat{L}_1$ and the short path $s = s'ys'' \in S$ are joined by $p\in P_1$. We would like to learn whether $|s'| \geq |s''|$ so that we can determine which part of the short path to join the long path. Since $s$ is provided as a doubly-linked list, we make a copy of $s$ and only keep one direction of the doubly-linked direction. Without loss of generality, let $s= s'ys''$ be the direction that is kept. Then we assign the value $1$ to each element on the copied linked list, and we invoke \cref{lemma:andersonmiller} on the copied $s$ to learn $y$'s rank on the list. If $y$'s rank is greater or equal to $\frac{1}{2}$ of the rank of the last vertex on the list, then  $|s'| \geq |s''|$. Otherwise, $|s''| > |s'|$. This operation can be performed simultaneously for all paths with work proportional to the length of the paths and depth $O(\log n)$. Without loss of generality we assume $|s'| \geq |s''|$, we then replace $l$ by $l'ps'$, $s$ by $s''$, and we add the discarded part $l''$ to $L^*_1$.

For the path $l = l'xl''\in \hat{L}_2$ and the unmatched path $p \in P_2$. We replace $l$ with $l'p$, discard $l''$. We add the discarded part $l''$ to $L^*_2$.  Denote the discarded parts as $L^* = L_1^* \cup L_2^*$, the long paths that are attached to $P$ as $\hat{L} = \hat{L}_1 \cup \hat{L}_2$, and the short paths that are attached to $P$ as $\hat{S}$. Let $D = V-P-L-S$ be the set of vertices not on any paths. A picture for illustration is shown in \cref{fig:LnS}. The three properties we need are 
\begin{enumerate}
    \item The set of paths $P$ are maximal in the sense that there is no path from $L - \hat{L}$ to $S-\hat{S}$ with all internal vertices being in $D$.   
    \item There is no path between the discarded part $L^*$ and $S-\hat{S}$ with all internal vertices being in $D$.
    \item The number of unmatched paths is small; concretely the number of paths in $P_2$ is less or equal to $\frac{1}{48}k$.
\end{enumerate}
\begin{figure*}[th]
    \centering
    \includegraphics[scale=0.4]{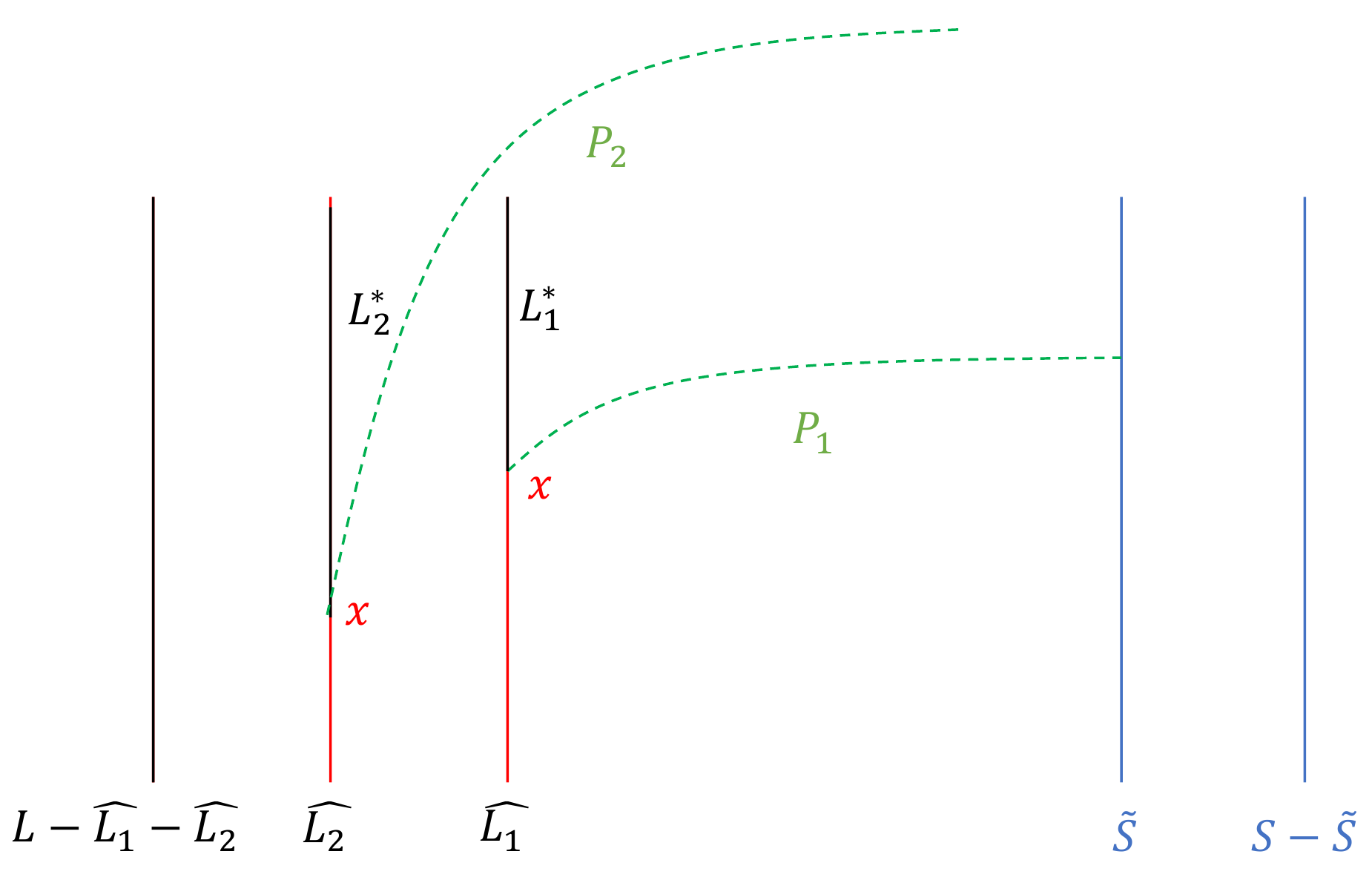}
    \caption{Merging long and short paths.}
    \label{fig:LnS}
\end{figure*}

Suppose that these three conditions hold; we argue that either the new $Q = L\cup P\cup S - L^*$ remains a separator or we can directly find a different $Q'$ consisting of less than a constant fraction of paths in $Q$. There are potentially two problems that can arise. In each case, we find a different separator with a constant reduction in the number of paths. We list the two potential problems below:

\begin{enumerate}
    \item The discarded parts of $L^*$ caused $Q$ to no longer be a separator.
    \item The number of matched paths is too small, meaning the number of paths in $P_1$ is less than $\frac{1}{12}k$.
\end{enumerate}

Aggarwal and Anderson~\cite{Aggarwal87} show that in either case, we can immediately find another separator consisting of less than $\frac{47}{48}k$ paths. In our algorithm, we can follow the same solutions for these issues. \fullOnly{To have a self-contained description, we reproduce their explanations in \Cref{app:singularities}.}\shortOnly{For a review of their explanations, please see the full version of our paper.} 
%
%
For the rest of this section, we can assume that neither problem arises. Thus $Q$ remains a separator and at least $\frac{1}{12}k$ short paths belong to $\hat{S}$. Recall that originally, there are $\frac{3}{4}k$ short paths. So at least ${(\frac{k}{12})}/{(\frac{3k}{4})} = \frac{1}{9}$ fraction of the original short paths have their size cut in half. Thus, in $O(\log n)$ iterations, all short paths will be absorbed. 

We will show in \cref{lemmapathmerging} that finding the desired set $P$ can be done in $O(\sqrt{n}  \log^6 n)$ depth and $O(m \log^5 n)$ work. For short paths of the form $s=s'ys''$, determining whether $|s'| \geq |s''|$ can be done in $O(\log n)$ depth and $O(n \log n)$ work across all short paths. Since we repeat the procedure at most $9\log n$ times, we prove the work and depth bound of \cref{thm:separator}. 
 
\subsection{Path Merging}
\label{subsec:path_merging}
We first give a high-level description of our path-merging algorithm. In the beginning, each long path chooses one end as its head. Then each path simultaneously tries to extend itself from the head until it reaches a short path. If a path cannot extend, it kills the head vertex and backtracks one vertex. We continue the procedure until the number of "active" long paths is less than $\sqrt{n}$. This way, we ensure that roughly at least $\sqrt{n}$ of the work is parallelized, which in turn bounds the total depth of the algorithm.

\begin{lemma}
\label{lemmapathmerging}
    Assume $k \geq 48\sqrt{n}$. There is a parallel algorithm that finds the set of paths $P$ satisfying the following three properties using depth $O(\sqrt{n} \log^6 n)$ and work $O(m \log^5 n)$.
\begin{enumerate}
    \item The set $P$ of paths is maximal, i.e., there is no path from $L - \hat{L}$ to $S-\hat{S}$ such that the internal vertices of the path are in $D$.   
    \item There is no path between the discarded part $L^*$ and $S-\hat{S}$ such that the internal vertices of the path are in $D$.
    \item The number of unmatched paths is small; concretely the number of paths in $P_2$ is less or equal to $\frac{1}{48}k$.
\end{enumerate}
\end{lemma}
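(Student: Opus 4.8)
The plan is to implement the high-level ``grow from the head, backtrack on failure'' strategy described in the text, and to analyze it carefully so that the depth bound hinges on the promise $k \geq 48\sqrt{n}$. Concretely, I would orient each long path in $L$ by picking one endpoint as its \emph{head}, and maintain a frontier consisting of the current head of each still-active long path. In each round, every active head simultaneously looks at its incident edges in the remaining graph $G[D \cup (\text{endpoints of short paths})]$; if some head can step to a vertex of $D$ that has not yet been claimed, or directly onto a short path, we extend along that edge, resolving write conflicts arbitrarily (CRCW) so that each $D$-vertex is claimed by at most one extending path. A head that reaches a short path becomes \emph{matched}, is frozen, and contributes its grown path to $P_1$. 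A head that has no available move \emph{kills} its current head vertex (removing it from $D$ and marking it dead) and backtracks one vertex along its grown path; if it backtracks all the way past the original long path, that path is declared \emph{dead} and will be handled as an unmatched path. The process stops once the number of active heads drops below $\sqrt{n}$.

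The key analysis steps, in order, are as follows. (i) \textbf{Correctness of maximality (properties 1 and 2).} I would argue that when a vertex $v \in D$ is killed, at that moment $v$ has no neighbor in the ``reachable'' region that could lead to an unmatched short path — i.e. killing is only performed when every neighbor of $v$ is dead, claimed by a frozen/matched path, or belongs to $\hat{S} \cup \hat{L}$. A clean invariant to maintain is: \emph{the set of dead vertices has no edge to $S - \hat{S}$ using only $D$-internal vertices}, which I would prove by induction on kills. Maximality of $P$ (no augmenting path from $L - \hat{L}$ to $S - \hat{S}$ through $D$) then follows because every path in $L - \hat{L}$ has died, so its original vertices are dead, and the dead set is ``closed'' in the above sense. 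Property 2 is the same statement restricted to $L^* \subseteq$ dead vertices, which is immediate. I also need to check that $P$ is genuinely \emph{valid}: vertex-disjointness comes from the ``claim at most once'' rule, and ``each path of $Q$ meets at most one path of $P$'' comes from freezing a short path the moment it is hit and from each long path growing exactly one tentacle.

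The technically delicate part — and the one I expect to be the main obstacle — is the \textbf{depth bound via the termination condition (and property 3)}. The argument should be: the algorithm stops as soon as fewer than $\sqrt{n}$ long paths are still active; the unmatched paths $P_2$ are exactly the long paths that died, and I must show $|P_2| \le k/48$. Since $|L| = k/4$ and we only ever match long paths to short paths one-for-one, as long as no stopping has occurred the number of matched paths equals $|L| - (\text{active}) - (\text{dead})$; the point is that a \emph{dead} long path caused a whole sequence of kills that exhausted a disjoint ``territory'' of $D$-vertices, so $|D| \geq$ (total kills) $\geq$ something like (number of dead paths), while $|D| \le n$. This is not quite enough on its own, so the real work is a potential/charging argument: each backtrack step either kills a distinct $D$-vertex or shortens a tentacle whose length is bounded, giving total work $O(m \log^{O(1)} n)$ across all rounds, and total number of rounds $O(\sqrt{n} \log^{O(1)} n)$ because whenever $\geq \sqrt{n}$ paths are active each round makes $\Omega(\sqrt{n})$ units of ``progress'' (extensions plus kills), and the total progress budget is $O(n + m)$. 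For property 3 specifically, I would observe that a long path only becomes dead after its entire initial segment plus all grown territory is killed, these territories are vertex-disjoint subsets of $D$, hence at most $|D| \le n$ vertices are killed in total; combined with $k \ge 48\sqrt{n}$ and the fact that we stop while $\ge \sqrt{n}$ paths remain active, a counting argument bounds $|P_2|$. Making this charging tight enough to land exactly at the $\frac{1}{48}k$ threshold, while simultaneously keeping the per-round depth polylogarithmic (each round is a constant number of parallel primitives: edge scans, conflict resolution, and a backtrack step implemented with the linked-list / pointer-jumping tools of \Cref{lemma:andersonmiller}), is where I expect to spend most of the effort; the bookkeeping of ``which vertex is claimed/dead/frozen'' across a batch of simultaneous updates is exactly what forces the $\log^{O(1)} n$ factors and must be done with care to avoid an extra $\sqrt{n}$ blow-up.
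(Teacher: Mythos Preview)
Your high-level algorithm is right and matches the paper, but you have misidentified what $P_2$ is, and this derails your analysis of property~3. The paths in $P_2$ are \emph{not} the long paths that died; they are the tentacles of the long paths that are \emph{still active} when the process halts because fewer than $\sqrt{n}$ heads remain. A long path that dies completely contributes no path to $P$ at all --- it lands in $L - \hat{L}$. With this correction, property~3 is immediate: $|P_2|$ is the number of still-active long paths at termination, which is less than $\sqrt{n} \le k/48$ by the stopping rule and the hypothesis $k \ge 48\sqrt{n}$. Your charging argument via ``disjoint territories'' is unnecessary and could not give the bound anyway, since in principle all $k/4$ long paths can die (e.g.\ if $L$ has no edges into $D\cup S$). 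The same misidentification muddles your property~1 argument: you assert ``every path in $L-\hat L$ has died,'' but this is only consistent if the still-active paths are placed in $\hat L_2$; otherwise those paths sit in $L-\hat L$ with vertices that are not dead. Once the bookkeeping is fixed, your intended invariant (a dead vertex has no $D$-internal path to $S-\hat S$) is exactly what the paper proves, and properties~1 and~2 follow because all vertices of $L-\hat L$ and of $L^*$ are dead.

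A secondary gap: ``resolve write conflicts arbitrarily (CRCW)'' is not enough for the per-round matching. A head may be killed only when \emph{all} of its available neighbors are taken by other heads in that round; arbitrary conflict resolution lets a head lose on one contested neighbor while other available neighbors remain, and you cannot afford to rescan all of a head's edges every round. The paper fixes this by computing a \emph{maximal matching} between active heads and available vertices in each round (Luby's algorithm on a small bipartite graph, built with a decremental neighbor-query data structure), which guarantees the needed ``fail only if every available neighbor is matched elsewhere'' condition while keeping the total work proportional to the number of edge state-changes.
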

 We work on an auxiliary graph $G'$ where we contract each short path $s\in S$ into a single vertex $v_s$. Throughout the process, each vertex in $G'$ is in one of three states, namely \emph{available}, \emph{contained in a (long) path}, or \emph{dead}. Initially, all vertices not contained in a long path are \emph{available}.
 For each long path $l$, we pick an arbitrary end as the head of that path, denoted as $u_l$. Now, in each step, a head vertex $u_l$ either gets matched to a neighbor $v_l$ that is still \emph{available} or it doesn't get matched. If $u_l$ gets matched to $v_l$, then $v_l$ joins the path $l$, and $v_l$ becomes the new "head" of the path $l$.
 If $u_l$ does not get matched, then only because of two possible reasons: Either, the node $u_l$ corresponds to a contracted short vertex $v_s$. If that happens, we say that $l$ succeeded. If $l$ succeeded, then its head vertex does not get matched and $l$ remains unchanged.
 Otherwise, if $u_l$ does not correspond to a contracted short vertex $v_s$, then $u_l$ has not been matched because all available neighboring vertices of $u_l$ have been matched to different head vertices. In that case, $u_l$ dies and is removed from the path $l$, and its predecessor $w_l$ in the path $l$ becomes the new head $l$. If $u_l$ was the only node in $l$, then the path $l$  does not participate in the matching process anymore. 
 If the number of long paths attempting matching (or equivalently, the number of head vertices not corresponding to a contracted vertex $v_s$) is less than $\sqrt{n}$, then the process terminates, and we go to the postprocessing phase described below.  

For a path $l$ that successfully joins a short path, let $l_{old}$ be its form before the above update, and $l'_{new} = \{v_1 \codots v_k\}$ be its form after the update. The last vertex $v_k \in G'$ on $l_{new}$ corresponds to a contracted short path $s\in S$ in the original graph $G$. Thus, the edge $(v_{k-1},v_{k}) \in G'$ corresponds to an edge $(v_{k-1},v'_k) \in G$. We replace $v_k$ with $v'_k$ to get the corresponding path in $l_{new}$ in the graph $G$. Then $p_1 = (l_{new}- l_{old})\cup x$, where $x$ is defined to be the only vertex in $l_{old}\cap l_{new}$ that was a head vertex during the path merging algorithm. So $p_1$ belongs to $P_1$ and path $l_{old}$ belongs to $\hat{L}_1$. 

Similarly, for a path $l$ that still participates in the matching process when the algorithm terminates, let $l_{old}$ be its old form and $l_{new}$ be its form after the update. Let $p_2 = (l_{new} - l_{old})\cup x \in P_2$ where $x$ is defined analogously. Path $l_{old}$ belongs to $\hat{L}_2$.
For a path $l$ that is not participating because all the vertices on the paths are dead, it belongs to $L - \hat{L}_1-\hat{L}_2$. 

We see that $P_1$ has one end on a long path and one end on a short path. And $P_2$ has one end on a long path and another on a vertex not in $L$ or $S$. Now we show why the paths we find satisfy the above three properties. Property 3 holds because $k \geq 48\sqrt{n}$ and the number of paths in $P_2$ is less than $\sqrt{n}$. So we have that the number of paths in $P_2$ is less than $\frac{1}{48}k$. For properties 1 and 2, we observe that all vertices in $L-\hat{L}_1-\hat{L}_2$ or $L^*$ are dead. Notice that a vertex can only become dead if it was a head vertex of a long path but failed to get matched in the matching process. Hence, we get properties 1 and 2 from the following lemma.
\begin{lemma}
Suppose that a vertex $v$ becomes dead during the above algorithm, then there is no path from $v$ to a vertex in $S - \hat{S}$ such that the internal vertices of the path are in $D$.  
\end{lemma}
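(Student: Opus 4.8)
The plan is to argue inside the auxiliary graph $G'$ (each short path $s\in S$ contracted to a vertex $v_s$) while tracking the evolving set of \emph{available} vertices. Two structural facts drive everything. First, a vertex's state only moves from available, to contained-in-a-long-path, to dead, so the set of available vertices shrinks monotonically over the rounds and a vertex that ever becomes non-available stays non-available. Second, the \emph{reason} a vertex $v$ dies in a round $\tau$ is that every neighbor of $v$ in $G'$ that was available at the start of round $\tau$ got matched into a long path during that round; combined with the first fact, this means that from the end of round $\tau$ onward $v$ has \emph{no} available neighbor in $G'$. I would also first translate the hypothesis: since $D\cap S=\emptyset$ and no vertex of a short path is ever a failed head (so none ever dies), a $G$-path from a dead vertex $v$ to a vertex of some $s\in S-\hat S$ with all internal vertices in $D$ lifts to a simple path in $G'$ from $v$ to $v_s$ all of whose internal vertices lie in $D$, so each such internal vertex is either a non-short vertex that is available at the end of the process or a dead vertex, and $v_s$ itself is available at the end because $s$ was never joined.

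The core is an invariant $I(\tau)$ proved by induction on the round number: \emph{for every vertex $v$ that is dead by the end of round $\tau$ and every contracted short vertex $v_s$ that is available at the end of round $\tau$, there is no path in $G'$ from $v$ to $v_s$ all of whose internal vertices are available at the end of round $\tau$.} For the inductive step, fix such $v$, $v_s$ and a candidate path $(v=z_0,z_1,\dots,z_\ell=v_s)$. If $v$ was already dead at the end of round $\tau-1$, then by monotonicity $v_s$ and $z_1,\dots,z_{\ell-1}$ were also available at the end of round $\tau-1$, so the path contradicts $I(\tau-1)$. Otherwise $v$ dies during round $\tau$, and by the second fact above $v$ has no neighbor available at the end of round $\tau$; but $z_1$ is a neighbor of $v$ that is available at the end of round $\tau$ (it is either an internal vertex of the path or equals $v_s$, and $\ell\ge 1$ since a dead vertex cannot equal an available $v_s$), a contradiction. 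This induction goes through cleanly precisely because $I(\tau)$ only ever constrains paths through \emph{available} vertices, so no path is ever routed through another dead vertex inside the argument.

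Finally I would deduce the lemma from $I(T)$, where $T$ is the last round of the matching process. Take the lifted $G'$-path $(v=z_0,\dots,z_\ell=v_s)$ from above and let $z_j$ be the largest-index vertex among $z_0,\dots,z_{\ell-1}$ that is dead (it exists because $z_0=v$ is dead). Then $z_{j+1},\dots,z_{\ell-1}$ are all non-dead and lie in $D$, hence are available at the end, and $z_\ell=v_s$ is available at the end; so $(z_j,z_{j+1},\dots,z_\ell)$ is a $G'$-path from a dead vertex to an available contracted short vertex with all internal vertices available at the end of round $T$, contradicting $I(T)$. Therefore no such path exists, and in particular this settles properties~1 and~2 of \cref{lemmapathmerging} via the observation (already noted) that every vertex of $L-\hat L$ and of $L^*$ is dead.

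The main obstacle, and the reason for splitting the argument into an invariant about available-only paths plus a ``peel off the last dead vertex'' step, is exactly that the lemma lets the path's internal vertices lie in $D$, and $D$ contains dead vertices, possibly ones that died in the same round as $v$; a naive induction on the death time of the first vertex following $v$ would then be circular. The remaining points are routine but worth checking carefully: that in $G'$ the set $D$ is precisely the available-at-end non-short vertices together with the dead vertices (in particular that $L^*$ and the vertices of completely-killed long paths are dead, hence in $D$), that the lift from $G$ to $G'$ really produces a path avoiding every $v_s$ internally, and that $v_s$ with $s\in S-\hat S$ is available in every round.
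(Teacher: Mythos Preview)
Your argument is correct and is at heart the same as the paper's: both hinge on locating, along the purported path, an edge from a dead vertex to a vertex that is available at termination (hence always available, by monotonicity), which contradicts the death condition. The paper does this in three lines by directly picking a transition index $i$ with $v_i$ dead and $v_{i+1}$ available and observing that $v_i$ could then have matched to $v_{i+1}$; your round-indexed invariant $I(\tau)$ is extra scaffolding, since Case~2 of your inductive step already applies to \emph{any} dead vertex regardless of when it died (a vertex dead by round $\tau'$ has no available neighbor at any round $\ge\tau'$), so $I(T)$ holds without ever invoking $I(\tau-1)$, and your peel-off step together with $I(T)$ then reproduces exactly the paper's direct argument.

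One minor inaccuracy worth flagging (it does not affect the proof): $D$ is not \emph{equal} to the set of available-at-end non-short vertices together with all dead vertices, because dead vertices that were originally on a long path---in particular those of $L-\hat L$ and of $L^*$---lie in $L$, not in $D$. Only the inclusion $D\subseteq\{\text{available-at-end}\}\cup\{\text{dead}\}$ holds, and that is precisely the direction you use.
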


\begin{proof}
      For a vertex $v$ to die, it must be a head vertex of a long path during the matching process. It becomes dead if all its neighbors are dead or they joined some other long paths. For the sake of contradiction, suppose that there is a path $p = (v_1 = v, v_2 \codots v_k=v_s)$ from $v$ to a contracted vertex $v_s$ formed from a short path with all internal vertices being in $D$. As $v_1$ is dead and $v_k$ is available, there exists some index $i$ with $v_i$ being dead and $v_{i+1}$ being available. Since $v_{i+1}$ is available at the end of the matching process, it must have been in the available state throughout the whole matching process. This yields a contradiction as when $v_i$ was attempting matching, it could have been matched to $v_{i+1}$ which means $v_i$ would not be dead. 
\end{proof}

\begin{proof}[Proof of \cref{lemmapathmerging}]
We have shown that the output satisfies all the guarantees of \cref{lemmapathmerging}. \cref{lem:pram_implementation}, proven in the next section, shows that there exists a parallel implementation of the procedure with work $O(m \log^5 n)$ and depth $O(\sqrt{n} \log^6 n)$, which finishes the proof of \cref{lemmapathmerging}.
\end{proof}

\subsection{Parallel Implementation}
\label{subsec:pram_implementation}
It remains to discuss the parallel implementation of the procedure described in \cref{subsec:path_merging}. In particular, the remaining part of this section is dedicated to proving the lemma below. We note that some parts of the matching procedure are nontrivial. The main reason the parallel implementation is nontrivial is that the number of steps for computing the paths can be up to $\Theta(\sqrt{n})$. However, we can only afford $O(m\log^5 n)$ work overall. Thus, the algorithm cannot afford to read the whole input in each iteration, and we should ensure that each edge is read only $\poly(\log n)$ times, in an amortized sense.

\begin{lemma}
    \label{lem:pram_implementation}
    There is a parallel algorithm that implements the procedure described in \cref{subsec:path_merging} with $O(m\log^5 n)$ work and $O(\sqrt{n}\log^6 n)$ depth.
\end{lemma}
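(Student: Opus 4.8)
The plan is to implement the path-merging procedure of \cref{subsec:path_merging} round by round, where one "round" corresponds to one parallel matching step on the head vertices, and to charge all work in a round either to a batch of vertices that permanently change state (a vertex becoming dead or being absorbed into a long path, a long path succeeding and leaving the process) or to the $\Theta(\sqrt{n})$ bound on the number of still-active long paths. Concretely, in each round the active head vertices $\{u_l\}$ together with their available neighbors induce a bipartite-like conflict graph; we run one round of a maximal-matching-type computation on it so that every head either grabs an available neighbor or certifies that all its available neighbors were grabbed by other heads. For this I would invoke \cref{lem:lubylemma}, but crucially only on the subgraph induced by the current heads and their incident available vertices, and I would argue that the total size of these subgraphs, summed over all rounds, is $\tilde{O}(m)$ (each edge is incident to a head vertex, and between two consecutive times an edge is "looked at" one of its endpoints has changed state or the path it belongs to has advanced or retreated — so it can be touched only $O(\log n)$ times amortized against the potential). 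Here I would also need the prefix-sum primitive of \cref{lemma:andersonmiller} to maintain the doubly-linked lists representing the long paths under head insertions/deletions, so that "append a vertex at the head", "delete the head", and "find the predecessor of the head" are all $O(\log n)$-depth, $O(1)$-amortized-work operations on a batch.

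The key steps, in order, are: (1) set up the auxiliary graph $G'$ by contracting each short path $s\in S$ to a vertex $v_s$ (classical parallel connectivity / list-contraction, $\tilde O(m)$ work, $\poly\log n$ depth) and mark the contracted vertices as permanently "available but terminal"; (2) initialize each long path's head and the three-state labeling; (3) the main loop: while the number of active (non-terminal-headed) long paths is $\ge \sqrt n$, build the conflict subgraph on current heads and their available neighbors, run one maximal-matching round on it via \cref{lem:lubylemma}, then in parallel update states — matched heads absorb their partner and pass the head forward, unmatched non-terminal heads die and pass the head backward (or the path leaves if it becomes empty), terminal heads mark their path as succeeded; (4) when the loop ends, at most $\sqrt n$ long paths remain active, so we can afford to finish them off with a cruder subroutine at $\tilde{O}(\sqrt n)$ extra depth per remaining path only if each such finishing step is cheap — more precisely, the postprocessing of \cref{subsec:path_merging} that turns the current configuration into $P_1,P_2,\hat L,\hat S,L^*$, which is a constant number of list operations per path; (5) translate each successful path's last edge in $G'$ back to the corresponding edge in $G$ (undoing the short-path contraction), again via the stored pointers.

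For the complexity bound, depth is the product of the number of rounds and the per-round depth. Each round either halves... no — each round, a vertex that fails to match dies, so a long path's head can retreat at most once per vertex it ever contained; heads that advance consume fresh available vertices. The subtle point is that a head can oscillate (advance, then later that vertex dies and it retreats), so the right potential is: total number of (vertex, long-path) incidences ever created, which is $O(n)$ per long path in the worst case but the loop guard forces the process to stop once fewer than $\sqrt n$ long paths are active, and each active long path advances or retreats by one vertex per round, so the number of rounds is $O(\sqrt n \cdot \text{polylog})$ — I would need to argue that in $\Theta(\sqrt n)$ rounds either a long path succeeds, dies out, or we drop below the $\sqrt n$ threshold, using that the paths are vertex-disjoint and total vertex count is $n$. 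Multiplying by the $O(\log^5 n)$ depth of each maximal-matching round (\cref{lem:lubylemma}) and an extra $O(\log n)$ for the list-maintenance gives $O(\sqrt n \log^6 n)$ depth. For work: $\sum_{\text{rounds}} O(|\text{conflict subgraph}| \cdot \log^5 n)$; the amortization argument bounds $\sum_{\text{rounds}} |\text{conflict subgraph}|$ by $O(m\log n)$ because each edge enters a conflict subgraph only when one of its endpoints is a current head, and between two such occurrences the head must have moved (so the path changed) or the endpoint changed state — charging $O(\log n)$ occurrences per edge — yielding $O(m\log^6 n)$; tightening the bookkeeping (e.g., not rebuilding the subgraph from scratch, only processing edges incident to newly-created heads) recovers the claimed $O(m\log^5 n)$.

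The main obstacle I expect is precisely this work amortization: naively, running $\Theta(\sqrt n)$ matching rounds, each of which might inspect all $m$ edges, costs $\Theta(m\sqrt n)$, which is exactly the bound we are trying to beat. Making this work-efficient requires a careful invariant that each edge is "charged" only $\poly\log n$ times — i.e., an edge is inspected only in the round when one of its endpoints first becomes a head (or re-becomes a head after a retreat), and a retreat permanently kills a vertex, so the number of distinct head-vertices a single long path ever has is bounded by (its current length) $+$ (number of deaths on it), and deaths are globally bounded by $n$. Turning this counting into a clean potential function, and ensuring the data-structure operations (maintaining the long-path linked lists, the available/dead labels, the "list of edges incident to current heads") all support batched updates in $O(\log n)$ depth and work proportional to the batch size, is the technical heart of the proof; this is exactly where the batch-dynamic parallel data structures alluded to in the overview do the heavy lifting.
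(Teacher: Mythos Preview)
Your high-level structure (rounds, head vertices, three-state labeling, terminate when fewer than $\sqrt{n}$ heads remain) matches the paper, and your bound of $O(\sqrt{n})$ rounds is correct, though the clean argument is simply: each round every active head either absorbs a fresh vertex or dies, so at least $\sqrt{n}$ vertices change state per round, and each vertex changes state at most twice.

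The genuine gap is the work amortization. You propose to run Luby's maximal matching on the full bipartite graph between the current heads and \emph{all} their available neighbors, and then to argue that the summed size of these graphs is $\tilde O(m)$ because ``between two such occurrences the head must have moved or the endpoint changed state --- charging $O(\log n)$ occurrences per edge.'' That charging does not hold. A single vertex $u$ can be a head many times: each time $u$ advances to a neighbor $v_i$, and each time that neighbor later dies the head retreats back to $u$. Since the $v_i$ are distinct, $u$ can be a head up to $\min(\deg(u),\sqrt{n})$ times, and on each such round you would scan all of $u$'s still-available neighbors. For a vertex of degree $d\approx\sqrt{n}$ this costs $d+(d-1)+\cdots=\Theta(d^2)$, and packing many such vertices gives total work $\Theta(m\sqrt{n})$ --- exactly the bound you are trying to beat. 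Your fallback suggestion (``only processing edges incident to newly-created heads'') does not help, because a re-created head is still a head and Luby's algorithm needs its edges.

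The paper's fix is a geometric-doubling matching \emph{within} each round, supported by a decremental data structure (\cref{lem:decremental_graph}) that, given a vertex and a parameter $t$, returns $t$ of its still-available neighbors in $O(t\log n)$ work. In phase $i$ of a round, each still-unmatched head requests only $2^i$ available neighbors; Luby's algorithm runs on this sparse bipartite graph $H_i$. If a head $u$ survives to phase $i$, then all $2^{i-1}$ neighbors it proposed in phase $i-1$ were matched to other heads, so $2^{i-1}$ edges incident to $u$ changed state; the $O(2^i\log^5 n)$ work done on $u$'s behalf is charged to those edges at $O(\log^5 n)$ each. Since each edge changes state $O(1)$ times globally, the total work is $O(m\log^5 n)$. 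This doubling-plus-charging idea is the missing ingredient in your plan.
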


Recall that, at any point in time, a node is in one of three states and can change its state at most twice.
Moreover, in each step, the number of vertices changing their state is equal to the number of head vertices attempting to get matched.
As the process stops once less than $\sqrt{n}$ head vertices attempt to get matched, this implies that at least $\sqrt{n}$ vertices change their state in each step.
Hence, the total number of steps is upper bounded by $2n/\sqrt{n} = O(\sqrt{n})$.
Therefore, it suffices to show that each step can be implemented with $O(\log^6 n)$ depth.
We say that an edge changes its state if one of its endpoints changes its state. In particular, each edge changes its state $O(1)$ times. Therefore, it suffices to show that each step can be implemented with work $O(N_{change}\log^5 n)$, where $N_{change}$ is the total number of vertices and edges changing their state.
Achieving this bound is nontrivial; just reading all available neighbors of a vertex $u_l$ trying to get matched might already exceed it.
Thus, our matching routine makes use of a data structure that allows to efficiently get access to a subset of $u_l$'s neighbors that are still available. We use the data structure from the lemma below.

\begin{restatable}{lemma}{decrementalgraph}
\label{lem:decremental_graph}
There is a data structure with the following guarantees:
The initial input is a graph $G = (\{v_1,v_2,\ldots,v_n\},\{e_1,e_2,\ldots,e_m\})$ with $n \geq 2$ where all vertices are active in the beginning. The data structure supports the following operations after initialization:
\begin{itemize}
        \item \textbf{MakeInactive$(\{i_1, i_2, \ldots, i_k\})$} takes an array consisting of $k \geq 1$ distinct indices between $1$ and $n$ and marks the corresponding vertices as inactive. This operation can be done in $O((k + \sum_{j=1}^k deg_G(v_{i_j})) \log n)$ work and $O(\log n)$ depth.
        \item \textbf{Query($i_1,i_2,\ldots,i_k,t$)} takes an array consisting of $k$ distinct indices between $1$ and $n$ and a number $t$. The output is an array $a_j$ for every $j \in [k]$ containing $t$ distinct \emph{active} neighbors of vertex $v_{i_j}$. If $v_{i_j}$ has fewer than $t$ active neighbors, then $a_j$ contains one entry for each active neighbor of $v_{i_j}$. The work is $O(k \cdot t \cdot \log n)$, and the depth is $O(\log n)$.
    \end{itemize} 
Initialization takes $O((m+n)\log n)$ work and $O(\log n)$ depth.
\end{restatable}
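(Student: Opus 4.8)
The plan is to implement the data structure so that each active vertex maintains a dynamically shrinking list of its active neighbors, and the only nontrivial work is the lazy cleanup of stale entries triggered by MakeInactive. Concretely, for each vertex $v_i$ I would store its incident edges in a doubly-linked list $A_i$ (the ``active adjacency list''), together with a pointer, for each edge $e=(v_i,v_j)$, to its twin occurrence in $A_j$. Initialization builds all these lists and cross-pointers in $O((m+n)\log n)$ work and $O(\log n)$ depth using standard parallel primitives (sorting the edges by endpoint, linking consecutive entries, and matching twins). I also keep a boolean $\mathrm{active}[i]$ array.

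For \textbf{MakeInactive}$(\{i_1,\dots,i_k\})$, I would first set $\mathrm{active}[i_j]\gets\mathsf{false}$ for all $j$ in parallel. Then, for each newly-inactive $v_{i_j}$, I walk its list $A_{i_j}$ (total length $\sum_j \deg_G(v_{i_j})$) and, for every edge $e=(v_{i_j},v_\ell)$, follow the twin pointer and splice $e$ out of $A_\ell$ in $O(1)$ work; finally I discard $A_{i_j}$ itself. The subtlety is concurrent deletions: if two inactive vertices share an edge, or several threads splice into the same $A_\ell$ simultaneously, naive doubly-linked-list removal races. I would handle this by first \emph{marking} every edge to be removed (each edge $e$ in some $A_{i_j}$, and its twin), then, in a second parallel pass over each affected list $A_\ell$, rebuilding $A_\ell$ as the subsequence of unmarked entries via a prefix-sum / list-ranking compaction (\Cref{lemma:andersonmiller}); the work is proportional to the lengths of the touched lists, i.e.\ $O\big((k+\sum_j \deg_G(v_{i_j}))\log n\big)$, and the depth is $O(\log n)$. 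Here I use that a list $A_\ell$ is only touched by MakeInactive calls that delete edges incident to $v_\ell$, so charging the rebuild of $A_\ell$ to those deleted edges is valid, but I must be careful that the \emph{current} length of $A_\ell$ is not much larger than the number of entries being deleted from it right now — this is true because $A_\ell$ only ever shrinks, and each edge it ever contained is deleted at most once, so across the whole lifetime the total rebuild work telescopes; for a single call, recompaction of $A_\ell$ costs $O(|A_\ell|)$, which I bound by $O(\deg_G(v_\ell))$ only in the amortized/aggregate sense. To make the per-call bound hold without amortization I would instead compact $A_\ell$ only when at least half its entries are marked (lazy rebuild), accepting stale-but-marked entries otherwise, and have Query skip marked entries — this keeps each list at most a constant factor above its true active size.

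For \textbf{Query}$(i_1,\dots,i_k,t)$, I would, in parallel for each $j$, read the first $O(t)$ entries of $A_{i_j}$, skipping any marked (stale) ones; because of the lazy-rebuild invariant the fraction of stale entries among any prefix is $O(1)$, so reading $O(t)$ entries suffices to collect $t$ active neighbors (or all of them if fewer exist), giving $O(k\cdot t\cdot\log n)$ work and $O(\log n)$ depth. I would also need to verify an entry $e=(v_{i_j},v_\ell)$ is genuinely active by checking $\mathrm{active}[\ell]$, which is $O(1)$.

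The main obstacle I anticipate is the concurrency of list updates in MakeInactive: getting a clean per-operation (not merely amortized) work bound of $O\big((k+\sum_j\deg_G(v_{i_j}))\log n\big)$ while multiple threads delete possibly-overlapping edge sets from shared adjacency lists, without a global rebuild of untouched lists. The lazy-rebuild-on-half-marked trick, combined with mark-then-compact via list ranking, is what I would lean on to resolve it; proving that this maintains the ``$O(1)$ stale fraction in every prefix'' invariant under arbitrary interleavings of MakeInactive calls is the delicate bookkeeping step, and I would isolate it as a short invariant lemma before assembling the final proof.
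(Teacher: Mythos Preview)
Your approach differs substantially from the paper's. The paper stores, for each vertex $v$, its neighbor list as the leaves of a static perfectly balanced binary tree, where every internal node maintains the count of active leaves in its subtree. MakeInactive on one entry zeroes the leaf and propagates the decrement up one root-to-leaf path ($O(\log n)$ work and depth, worst-case, no amortization or concurrency hazards). Query($t$) walks down from the root, at each node splitting $t$ between its two children according to their stored active counts; this reaches $\min(t,N_{active})$ leaves while touching $O(t\log n)$ nodes in total, with depth $O(\log n)$.

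Your linked-list-with-lazy-cleanup proposal has a real gap in the Query analysis. The invariant ``rebuild $A_\ell$ when at least half its entries are marked'' only guarantees that the \emph{global} stale fraction is below $1/2$; it does \emph{not} imply that the stale fraction in every \emph{prefix} is $O(1)$, which is what you actually use. Nothing prevents the marked entries from clustering at the front: a single MakeInactive that marks exactly the first $\lfloor |A_\ell|/2\rfloor-1$ entries triggers no rebuild, and a subsequent Query with small $t$ must then skip $\Theta(|A_\ell|)$ stale entries before reaching any active one. This breaks the $O(k\cdot t\cdot\log n)$ work bound. It also breaks the $O(\log n)$ depth bound: if $A_\ell$ is a doubly-linked list, traversing its first $\Theta(|A_\ell|)$ entries is sequential pointer-chasing; if instead you keep $A_\ell$ as a flat array after each rebuild, locating the first $t$ unmarked positions still requires a scan or a prefix sum over the whole array. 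The ``$O(1)$ stale fraction in every prefix'' lemma you propose to isolate is simply false under this rebuild rule.

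What is needed to skip over arbitrary runs of inactive entries in $O(\log n)$ depth while paying work proportional to $t$ (not to $|A_\ell|$) is some hierarchical summary of active counts over the list---and once you add that, you have essentially reinvented the paper's balanced binary tree. That structure also removes all the concurrency and amortization headaches in your MakeInactive, since updates are just count decrements along disjoint-or-shared root-to-leaf paths and can be combined with a simple bottom-up sweep.
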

\fullOnly{The data structure uses a common technique, and we defer the detailed description of it to \cref{sec:appendix_missing_proofs}.}\shortOnly{The data structure uses a common technique, and we defer the detailed description of it to the full version of this paper.} On a high level, the adjacency list of each node $v$ is augmented with a balanced binary tree. The leaves correspond to $v$'s neighbors, and each internal node keeps track of how many neighbors in the corresponding subtree are still active.
In our concrete case, the input graph is the graph $G'$. We also maintain the invariant that a node is available if and only if it is marked as being active in the data structure. 
Note that we can initialize the data structure right at the beginning with $O(m \log n)$ work and $O(\log n)$ depth.
Now, let's consider an arbitrary step of the algorithm. We denote by $U$ the set consisting of all head vertices $u_l$ trying to get matched. 
Our matching procedure builds the matching gradually in $O(\log n)$ phases. In each phase, it uses Luby's deterministic parallel maximal matching algorithm (\cref{lem:lubylemma}) as a black box on a graph that is constructed with the help of the data structure. In more detail, in phase $i$, each node $u \in U$ that has not been matched in previous phases first selects $2^i$ arbitrary neighbors that are still available (in particular, which haven't been matched in previous phases). If $u$ has fewer such neighbors, it selects all of them. By making use of the data structure, we can do the selection using $O(2^i\log n)$ work per node $u \in U$ that has not been matched and $O(\log n)$ depth. Let $H_i$ be the bipartite graph where one side of the bipartition consists of all nodes in $U$ that have not been previously matched, and the other side consists of all available nodes that have been selected by at least one node. Moreover, there is an edge between a head vertex $u$ and an available vertex $v$ if and only if $u$ has selected $v$.
We then compute a maximal matching of $H_i$ in $O(|V(H_i)|+|E(H_i)|\log^5 n)$ work and $O(\log^5 n)$ depth using the algorithm of \cref{lem:lubylemma}. In particular, one has $O(2^i\log^5 n)$ work per vertex in $U$ that has not been matched before. Then, the data structure is updated by marking all previously available vertices that have been matched as inactive. Note that we can pay for this operation by charging $O(\log n)$ to each such vertex and $O(\log n)$ to each incident edge. We can do this charging as all previously available vertices that have been matched change their state (and thus also the incident edges).
Also, after each phase, we remove vertices from $U$ that don't have any available neighbors.
Both the work and the depth of the algorithm are dominated by the $O(\log n)$ invocations of Luby's deterministic maximal matching algorithm. In particular, the overall depth is $O(\log n) \cdot O(\log^5 n) = O(\log^6 n)$. To upper bound the work, consider some vertex $u \in U$. First, consider that $u$ has been matched in some phase $i$. Then, informally speaking, the algorithm has done $O(2^i \log^5 n)$ work on behalf of $u$ in all phases combined. If $u$ has been matched in the first phase, then we charge $O(\log^5 n)$ work to the node that $u$ matched with. If $u$ has not been matched in the first phase, then at least $2^{i-1}$ neighbors of $u$ got matched in phase $i-1$. Thus, at least $2^{i-1}$ edges incident to $u$ change their state, and therefore $u$ can charge each such edge $O(\log^5 n)$.
We can use a similar charging argument if $u$ has not been matched. 
Therefore, the work in each step is $O(N_{change}\log^5 n)$.
This shows the work and depth bound of \cref{lem:pram_implementation}.
It remains to argue correctness. In the last phase $i$, we have $2^i \geq n$. Therefore, each vertex $u \in U$ that has not been matched before selects all its neighbors that are still available. Thus, if the maximal matching in $H_i$ does not match $u$, then all its available neighbors have been matched to other head vertices, which shows that the final matching indeed satisfies the guarantees stated in the previous section.

\section{Constructing an initial segment from the separator paths}
\label{sec:initialsegment}
In this section, we prove \Cref{thm:merging}, which shows how we can add the $O(\sqrt{n})$ paths in the separator to the partial DFS one by one, using $\poly(\log n)$ depth for each path, and $\tilde{O}(m)$ work overall. For the sake of readability, we first restate the lemma.

\merging*

We want to construct an initial segment $T'$ that contains all the vertices in $Q$---the set of all vertices of the separator paths. We do this by absorbing paths of $Q$ into $T'$ one by one. To have a work-efficient algorithm, we want to ensure that in each iteration of absorbing a path, the total work is near-linear with respect to the number of edges adjacent to the path that got absorbed into $T'$. 

Our algorithm uses a batch-dynamic parallel data structure. We next describe the interface of this data structure and use that to provide a proof for \Cref{thm:merging}. The actual data structure that proves this lemma is presented in \Cref{sec:dataStructure}.

\begin{restatable}{lemma}{datastructure}
\label{lemma:parallel_data_structure}   
Given a graph $G=(V,E)$, a separator $Q$ that consists of some paths, and a root $r\in G$ that forms the initial partial tree $T'$, there is a data structure that supports the following operations.
\begin{itemize}
    \item \textbf{FindCC()}  returns a connected $C \subseteq G-T'$ such that $C\cap Q \neq \emptyset$, if no such $C$ exists, the function returns $Success$.
    \item \textbf{LowestNode($C$)} takes a connected component $C \subseteq G-T'$, and returns the vertex $v\in C$ that is adjacent to a vertex in $x\in T'$. The vertex $x$ is the unique vertex with the lowest depth that is adjacent to $C$.
    \item \textbf{FindPathS2P($C$, $x$)} takes a connected component $C\subseteq G-T'$ and a vertex $x\in T$ and returns a path $p$ from $x$ to a vertex $q \in Q$. All the vertices in $p$ are not in $Q$ except for $q$. 
    \item \textbf{BatchDelete($p$}) takes a path $p$ consists of vertices $(v_1\codots v_k)$ and deletes the vertices from $G-T'$. 
\end{itemize}
Moreover, here are the work and depth of the above operations.
\begin{itemize}
    \item \textbf{FindCC()} has work and depth $O(1)$. 
    \item \textbf{LowestNode($C$)} has work and depth $O(1)$.
    \item \textbf{FindPathS2P($C$, $x$)} has depth $O(\log n)$ w.h.p. If the function returns a path $p$, the work is $O(|p|\log n)$ where $|p|$ is the number of vertices on the path $p$
    \item \textbf{BatchDelete($p$)} has work $O(|E(p)| \log^3 n)$ in expectation and depth $O(\log^2 n)$ w.h.p. where $|E(p)|$ are the number of edges adjacent to the vertices in $p$.
\end{itemize}
\end{restatable}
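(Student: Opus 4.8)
The plan is to implement the data structure on top of a parallel \emph{batch-dynamic connectivity} structure that maintains a spanning forest $F$ of $G-T'$, supports deleting a batch of $k$ edges in $O(k\log^3 n)$ expected work (amortized over the whole sequence of operations) and $O(\log^2 n)$ depth w.h.p., and supports the standard Euler-tour-tree operations (re-rooting, subtree/tree aggregates, path extraction) in $O(\log n)$ work and depth. The key structural fact to exploit is that $G-T'$ is \emph{decremental}: every vertex that a \textbf{BatchDelete} removes thereafter belongs to $T'$, so $F$ never needs an edge insertion and each edge of $G$ is deleted at most once over the whole run --- this is exactly what turns the amortized per-edge cost into the $\tilde O(m)$ total in \cref{thm:merging}. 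We augment $F$ so that each vertex $w$ of $G-T'$ stores the bit $\mathbbm{1}[w\in Q]$ and a value $d(w)$ equal to the minimum DFS-depth over the neighbours of $w$ that currently lie in $T'$ (together with the witnessing edge, or $+\infty$ if there is none), and so that the Euler-tour tree of each component maintains the sum of these bits and the minimum of the $d(\cdot)$ values over the component. Since $T'$ is always an initial segment, \cref{obs:UniqueMin} guarantees that the lowest-depth $T'$-vertex adjacent to a given component is unique, so this component-minimum pins down precisely the vertex that \textbf{LowestNode} must return. Finally, over the Euler tour of each tree we keep a balanced search tree that, given a designated root $v$, returns the first vertex of $Q$ in the preorder starting at $v$ in $O(\log n)$ depth.

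To make \textbf{FindCC} and \textbf{LowestNode} run in $O(1)$ work and depth, we keep a doubly linked list of the components of $G-T'$ that currently meet $Q$, and at a canonical representative of each such component a pointer into that list together with the cached witness vertex attaining $\min_w d(w)$. Then \textbf{FindCC} returns the head of the list (or $Success$) and \textbf{LowestNode}$(C)$ returns the cached witness, both in $O(1)$. These caches are refreshed only inside \textbf{BatchDelete} (and once at initialization), and we will argue that a single \textbf{BatchDelete}$(p)$ disturbs only $O(|p|)$ components, so the refresh fits the work budget. For \textbf{FindPathS2P}$(C,x)$, let $v$ be the $C$-endpoint of the edge witnessing that $x$ is the lowest $T'$-neighbour of $C$ --- exactly the vertex cached by \textbf{LowestNode}. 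We (conceptually) re-root the spanning tree of $C$ at $v$, use the preorder structure to find the first $Q$-vertex $q$ in the preorder from $v$, and output $x$ followed by the tree path from $v$ to $q$, which we extract with the forest's path-extraction primitive in $O(|p|\log n)$ work and $O(\log n)$ depth. Because ancestors precede descendants in preorder, every vertex of this path that lies in $C$ is outside $Q$ except its last vertex $q$, which is the required guarantee.

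For \textbf{BatchDelete}$(p)$ with $p=(v_1,\dots,v_k)$, we remove from $G-T'$ the $k$ vertices of $p$ and all $|E(p)|$ edges incident to them, and we assign each $v_i$ its DFS-depth in $T'$ (known from the attachment point of $p$ and the order along $p$). For every edge $\{v_i,w\}$ with $w\notin p$ we update $d(w)\leftarrow\min(d(w),\mathrm{depth}_{T'}(v_i))$ and push the change up $w$'s Euler-tour tree; over all such edges this costs $O(|E(p)|\log n)$ work and $O(\log n)$ depth. We then delete these $|E(p)|$ edges from the batch-dynamic connectivity structure, which rebuilds a spanning forest of the shrunken $G-T'$ (searching for replacement tree edges) in $O(|E(p)|\log^3 n)$ amortized expected work and $O(\log^2 n)$ depth w.h.p., maintaining the $Q$-count and $d(\cdot)$ aggregates along the way. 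Deleting a path of $k$ vertices out of a single component creates at most $k+1$ components, so in total $O(k)=O(|E(p)|)$ components are created or modified; for each we recompute its $Q$-membership and its $\min d(\cdot)$ witness from the Euler-tour aggregate and refresh the linked list and caches in $O(\log n)$ each. The connectivity update dominates, giving work $O(|E(p)|\log^3 n)$ in expectation and depth $O(\log^2 n)$ w.h.p.

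The main obstacle is the batch-dynamic connectivity core itself: one needs batch edge deletions in $O(k\log^3 n)$ amortized expected work and only $O(\log^2 n)$ depth, \emph{and} the underlying forest representation must simultaneously support re-rooting, subtree/tree sums and minima, preorder-first-$Q$ queries, and path extraction, all without degrading those bounds --- so the real work is in parallelizing a Holm--de~Lichtenberg--Thorup / Euler-tour-tree structure and verifying these augmentations are compatible with its amortization, which is what \cref{sec:dataStructure} carries out. A secondary subtlety is the $O(1)$ requirement for \textbf{FindCC} and \textbf{LowestNode}: it forces all non-constant bookkeeping into \textbf{BatchDelete} and hinges on the observation that a path deletion disturbs only $O(|p|)$ components --- precisely the quantity we are entitled to charge against the vertices and edges that the deletion permanently removes.
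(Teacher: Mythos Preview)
Your high-level plan matches the paper closely: maintain a spanning forest of $G-T'$ via a parallel batch-dynamic HDT-style connectivity structure, augment the forest with a per-vertex $Q$-bit and a ``lowest $T'$-neighbour depth'' value, aggregate these over components to answer \textbf{FindCC} and \textbf{LowestNode} in $O(1)$, and push all non-constant bookkeeping into \textbf{BatchDelete}. The paper does exactly this, and your observation that a path deletion touches $O(|p|)$ components is the same charging the paper uses.

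The substantive difference is your choice of forest representation. You propose Euler-tour trees, whereas the paper represents the HDT forest by a \emph{parallel rake-and-compress (RC) tree} (\cref{parallalRC}). Your preorder idea for \textbf{FindPathS2P} is correct: after re-rooting at $v$, the first $Q$-vertex $q$ in the Euler tour has all its proper ancestors preceding it, hence none of them lie in $Q$, so the $v$--$q$ tree path meets $Q$ only at $q$. Finding $q$ is a standard $O(\log n)$ select on the $Q$-counts. Where your sketch becomes thin is the next step: you invoke a ``path-extraction primitive'' on the Euler-tour tree to output the $v$--$q$ path in $O(|p|\log n)$ work and $O(\log n)$ depth. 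Euler-tour trees are well suited to subtree aggregates but do not come with a standard parallel path-reporting primitive; naively walking parent pointers gives $O(|p|)$ depth, and the obvious pointer-jumping fixes cost $\Omega(|C|)$ work. This is precisely the reason the paper layers an RC-tree on top of the HDT forest: the $O(\log n)$-level contraction hierarchy lets one unroll a root-to-vertex path recursively (\cref{lemmaadjacentst}, \cref{lemmaexposed}), achieving $O(|p|\cdot i)$ work and $O(i)$ depth at level $i$, hence $O(|p|\log n)$ work and $O(\log n)$ depth overall. The paper's \textbf{FindPathS2P} then combines this with the separator-flag hierarchy (children with the flag sorted first), which plays the same role as your preorder-first-$Q$ search. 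If you can substantiate a parallel ancestor-listing query on Euler-tour trees within the stated bounds (e.g., via a parallel interval-stabbing structure over the $[\text{first}(u),\text{last}(u)]$ intervals, maintained under the batch updates), your route would go through; otherwise the cleanest fix is to adopt the RC-tree representation, as the paper does.
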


Having this data structure, we can now prove \Cref{thm:merging}.

\begin{proof}[Proof of \cref{thm:merging}]

To absorb the separator into the tree, we sequentially find a path $p$ from a vertex $q\in Q\cap (G-T')$ to a vertex $x \in T'$, such that all the internal vertices of the path are in $G-T'$. Suppose that $q$ belongs to the path $l = l'ql''$ in the separator, and $|l'| \geq |l''|$. Then we take the longer half $l'$ and incorporate it into the tree $T'$, by adding the path $pql'$ to the tree $T'$. For the new $T'$ to remain an initial segment, we need $x$ to be the lowest vertex in $T'$ adjacent to the connected component containing $q$. 

For each path $l = l'ql''$, we need to learn whether $|l'| \geq |l''|$, so that we can determine which part of the path to absorb in the initial segment. Since $l$ is provided as a doubly-linked list, we make a copy of $l$ and only keep one direction of the doubly-linked direction. Without loss of generality, let $l= l'yl''$ be the direction that is kept. Then we assign the value $1$ to each element on the copied linked list, and we invoke \cref{lemma:andersonmiller} on the copied $l$ to learn the rank of $q$ in the list. If this rank is greater or equal to $\frac{1}{2}$ of the rank of the last vertex on the list, then  $|l'| \geq |l''|$. Otherwise, $|l''| > |l'|$. This check can be performed using depth $O(\log n)$ and work proportional to the length of the path.

We also need to absorb the path $pql'$ to $T'$, and vertices on $pql'$ should learn their depth in $T'$. Suppose the absorption is through the edge $(x,y)$ with $y\in T'$ where $x$ is the first vertex on the path $pql'$. We next make each vertex in $pql'$ learn its depth in $T'$, using $O(\log n)$ depth and work proportional to the length of $pql'$. For a prefix sum computation, we initiate vertex $x$ with the value $depth_y$---i.e., the depth of vertex $y$ in the existing partial tree $T'$---and we initiate each remaining vertex on the $pql'$ path with value $1$. Then we invoke \cref{lemma:andersonmiller} on the path to compute the prefix sum on the list. As a result, all vertices on $pql'$ learn their depth in $T'$. 
This operation can be done in $O(\log n)$ depth and $O(|pql'|)$ work where $|pql'|$ is the length of the path. 

We repeatedly call FindCC() to find a connected component containing a vertex from $Q\cap (G-T')$, then find $x$ using LowestNode($C$) and call FindPathS2P($C$, $x$) to find the desired path $p$. Then we determine whether $|l'| \geq |l''|$, and join the path $pql'$ to $T'$. Finally, we call BatchDelete(\{$pql'$\}) to delete the vertices from $G-T'$. 

\cref{lemma:parallel_data_structure} shows that the first two operations have depth and work $O(1)$, the third operation has work $O(|p| \log n)$ and depth $O(\log n)$, and the last operation has $O(|E(p)| \log^3 n)$ amortized work in expectation and depth $O(\log^2 n)$, with high probability. Here $|E(p)|$ are the number of edges adjacent to the vertices in $p$. Determining whether $|l'| \geq |l''|$ and joining the path $pql'$ to $T'$ can be done in $O(\log n)$ depth for one joining and $O(n)$ work in total overall joining operations. Since we repeat the above sequence of operations $O(\sqrt{n}\log n)$ times, the algorithm uses $O(\sqrt{n} \log^3 n)$ depth w.h.p. and $O(m \log^3 n)$ work.
\end{proof}

\section{Data Structures}
\label{sec:dataStructure}
The data structure we use is based on a combination of the ones developed by Acar et al. in \cite{acar2020parallel} and \cite{acar2019parallel}, though we also need some further modifications. 


\label{batchdynamicdatastructure}
We work on $G-T'$, and this graph undergoes vertex and edge deletions. During the construction of the initial segment in \cref{sec:initialsegment}, we need to repeatedly find a path from the lowest vertex in the partial tree $T'$ to a vertex in the separators, such that the path is made of internal vertices in $G-T'$. To do that efficiently, we maintain the connectivity structure of $G-T'$. We would like to keep one spanning tree for each connected component of $G-T'$. The are two main problems that we want to solve using this data structure. The first is that after joining a path to the partial DFS tree $T'$, thus deleting its vertices from $G-T'$, we need to update the connectivity structure of $G-T'$ and in particular, as some edges get deleted from the respective tree, we might have to find replacement edges for them. The second problem is to maintain the connectivity structure of $G-T'$ such that we can answer path queries of the following type efficiently: Given a set of vertices $C$ and a vertex $x$, we need to report a path in $G-T'$ connecting $x$ to a vertex in $C$, using $\poly(\log n)$ depth and work proportional to the path length.

In the batch-dynamic setting, a batch of updates (or queries) are applied simultaneously; in our case, we will have batches of deletions to $G-T'$. For each such batch, we would like the work to be near-linearly proportional to the number of updates while keeping the depth $O(\poly\log n)$. For this purpose, we use a modified and combined version of the data structures provided in \cite{acar2019parallel} and \cite{acar2020parallel}. We first provide a brief recap of these and then present the combined and adapted data structure that we need. 

\subsection{Connectivity and rake-and-compress data structures}
 Before recapping the data structures of \cite{acar2020parallel, acar2019parallel}, let us remark on a small subtlety: their algorithms are written as supporting edge deletions. However, we can generally treat vertex deletions as deleting all edges adjacent to the vertex. 

\subsubsection{Parallelized connectivity data structure Algorithm}
\label{parallalconnDS}
Consider a graph undergoing (batches of) edge deletions, and suppose we want to maintain a spanning tree for each connected component of it. Acar et al.~\cite{acar2019parallel} provide a solution for this, which is essentially a parallelized version of the sequential algorithm developed by Holm, de Lichtenberg, and Thorup (HDT)~\cite{holm2001poly}. The HDT algorithm maintains a maximal spanning forest, certifying the graph's connectivity. We also note that although the work in the following lemma is stated in expectation, by increasing the work by a factor of $O(\log n)$, the work bound also holds w.h.p. 
\begin{lemma}
\label{oldhdtlemma}
    There is a parallel batch-dynamic connectivity data structure that maintains a maximal forest for a graph undergoing vertex and edge deletions. Given any batch of edge deletions, the data structure uses $O(\log^2 n)$ expected amortized work per edge deletion. The depth to process a batch of edge deletions is $O(\log^3 n)$ w.h.p.
\end{lemma}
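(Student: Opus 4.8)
The plan is to take the sequential dynamic-connectivity structure of Holm, de Lichtenberg, and Thorup~\cite{holm2001poly} and parallelize each of its steps in the batch setting, essentially reconstructing the argument of Acar et al.~\cite{acar2019parallel}. Recall the HDT structure: it maintains a maximal spanning forest $F$ together with an integer level $\ell(e) \in \{0,1,\ldots,\lfloor \log_2 n\rfloor\}$ for every edge, subject to two invariants --- (i) every connected component of the subforest $F_{\ge i}$ induced by the tree edges of level at least $i$ has at most $n/2^i$ vertices, and (ii) $F$ has maximum total level among spanning forests (equivalently, for every non-tree edge $uv$, the tree path between $u$ and $v$ in $F$ uses only edges of level $\ge \ell(uv)$). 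Deleting a non-tree edge is trivial. Deleting a tree edge $e$ of level $\ell$ splits a component of $F_{\ge 0}$; one searches for a replacement by iterating $i$ from $\ell$ down to $0$, at each level promoting to level $i+1$ the level-$i$ tree edges of the smaller of the two sides of the split within $F_{\ge i}$ (legal by invariant (i)), then scanning the level-$i$ non-tree edges incident to that smaller side: the first one crossing the cut becomes the new tree edge, and every scanned edge that does not cross is promoted to level $i+1$. Vertex deletions are reduced to deleting all incident edges, which is why the bound is stated per edge deletion.

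To make this parallel and batch-dynamic I would (a) store $F$ in a parallel batch-dynamic forest structure, e.g.\ parallel Euler-tour trees, supporting a batch of $k$ links/cuts, subtree-size and connectivity queries, and queries of the form ``report the level-$i$ tree edges (resp.\ incident level-$i$ non-tree edges) of a given side of a split'', in $O(\operatorname{polylog} n)$ depth and work $\tilde O(k)$; and (b) keep, for each level $i$, the level-$i$ non-tree edges bucketed per endpoint so they can be enumerated on demand. To process a batch of deletions at once: first discard the deleted non-tree edges; then cut all deleted tree edges from $F$, which shatters some components of $F_{\ge 0}$ into many pieces. Now iterate $i$ from $\lfloor \log_2 n\rfloor$ down to $0$. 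At level $i$, each broken component $T$ of $F_{\ge i}$ (of original size $\le n/2^i$) is now a collection of pieces $C_1,\dots,C_t$, some already re-merged by replacement edges chosen at this same level; keep the level-$i$ tree edges of one largest piece unchanged and promote the internal level-$i$ tree edges of every other piece $C_j$ to level $i+1$, which is legal because any non-largest piece satisfies $|C_j| \le |T|/2 \le n/2^{i+1}$. Then enumerate the level-$i$ non-tree edges incident to the small pieces, contract each current piece to a single vertex, compute a static spanning forest of the resulting auxiliary multigraph with a parallel connectivity algorithm~\cite{JaJ92}, insert those spanning-forest edges into $F$ as new level-$i$ tree edges (a batch link), and promote every enumerated non-tree edge not chosen to level $i+1$.

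For the analysis I would use the standard HDT potential $\Phi = \sum_e \ell(e)$ (plus a term counting low-level tree edges): every unit of work beyond what is paid directly by the batch size is charged to a level promotion, each edge is promoted at most $\lfloor \log_2 n\rfloor$ times, and each promotion costs $O(\log n)$ work (an Euler-tour-tree update, a bucket move, or an amortized share of a static spanning-forest call, which is linear in its input in expectation). This yields $O(\log^2 n)$ expected amortized work per edge deletion, and (as noted in the surrounding text) a $w.h.p.$ bound at the cost of an extra $O(\log n)$ factor. For depth, each of the $O(\log n)$ levels performs a constant number of batch forest operations and one static parallel spanning-forest computation, each in $O(\log^2 n)$ depth, so a batch is processed in $O(\log^3 n)$ depth $w.h.p.$

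The main obstacle, and where the parallel version genuinely departs from sequential HDT, is the bookkeeping when a single batch of tree-edge deletions shatters a level-$\ge i$ component into many pieces rather than two: one must (1) preserve invariant (i) by promoting, per broken component, the internal level-$i$ tree edges of all but one largest piece --- using that every non-largest piece has size at most half the original component --- so each promoted piece can safely become one level-$\ge i+1$ component; (2) discover \emph{all} replacement edges of level $i$ in one shot via a contraction-plus-static-spanning-forest step instead of one cut at a time; and (3) still charge every enumerated-but-unchosen level-$i$ non-tree edge to its own promotion, so that batching does not inflate the amortized work, while threading all of this correctly through the decreasing loop over levels and confirming the parallel forest structure answers the required ``which side / which incident edges'' queries within the stated depth.
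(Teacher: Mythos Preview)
Your proposal is correct and faithfully reconstructs the parallelization of HDT due to Acar et al.~\cite{acar2019parallel}. Note, however, that the paper does not actually prove this lemma: it is stated as a black-box citation to~\cite{acar2019parallel}, accompanied only by a one-paragraph high-level sketch (a hierarchy of $\log n$ nested forests, component size at level $i$ bounded by $2^i$, each edge examined at most $\log n$ times before deletion, and replacement search parallelized by examining many candidates at once), which your write-up matches in substance---you simply use the opposite level convention (promoting levels upward with $|F_{\ge i}|$-components of size $\le n/2^i$, rather than pushing levels down with size $\le 2^i$).
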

The main challenge is that, when an edge of the spanning forest is deleted, which essentially breaks the tree of the component into pieces, we need to see if there is a \textit{replacement} edge that connects the pieces. Furthermore, we might have to do several of these simultaneously, for a batch of edge deletions. To solve this efficiently, the HDT algorithm maintains a set of $\log n$ nested forests. The topmost level of the nest forest represents a spanning forest of the entire graph. Each level contains all tree edges stored in levels below it. A key invariant we keep is that the largest component size at the level $i$ forest is at most $2^i$. When an edge is searched and fails to become a replacement edge in the forest, we decrease the level of that edge by 1. This way, we ensure that an edge is searched at most $\log n$ times before it is deleted from our graph. Acar et al.~\cite{acar2019parallel} parallelize the task of finding replacement non-tree edges by examining multiple potential replacements at once, which gives the lemma we stated above.

\subsubsection{Parallalized Rake and Compress Tree}
\label{parallalRC}

The rake and compress operation for a static tree is a simple recursive procedure that allows one to ``process" a tree in $O(\log n)$ simple iterations. The rake operation removes all leaves from the tree, except in the case of a pair of adjacent degree-one vertices where it removes the one with a smaller vertex id. The compress operation removes an independent set of vertices of degree two that are not adjacent to leaves. It is desired that this independent set has size within a constant fraction of the maximum independent set (in expectation). A simple analysis shows that after $O(\log n)$ iterations of rake and compress, the tree shrinks to a single vertex. This is because a constant fraction of the vertices in a forest are either leaves or degree 2 vertices due to the degree constraint.

\paragraph{Rake-and-compress as a low-depth hierarchical clustering} The rake and compress can be viewed as a recursive clustering process. A cluster is a connected subset of vertices and edges of the original forest. We note that a cluster may contain an edge without containing both of its endpoints. The boundary vertices of a cluster $C$ are the vertices $v\in C$ that are adjacent to an edge $e \in C$. The vertices and edges of the original forest form the \textit{base clusters}. Initially, each vertex and each edge form its own cluster. In the course of rake and compress, clusters are merged using the following rule: Whenever a vertex $v$ is removed during rake or compress operations, all of the clusters with $v$ as a boundary vertex are merged with the base cluster containing $v$; we say $v$ represents the new cluster. The children of a vertex are clusters that merged together to form it, which we store in an adjacency list. Thus we will have a collection of hierarchical forest $\{T_1 \codots T_k\}$ where $T_1$ is the original forest, and $T_{i+1}$ is generated from $T_i$ after one iteration of the rake-and-compress. $T_k$ is the final forest in which every connected component in the original forest is clustered into a single cluster. An example rake and compress tree are depicted in \cref{fig:Rake and Compress Tree}
\begin{figure*}[th]
    \centering
    \includegraphics[scale = 0.38]{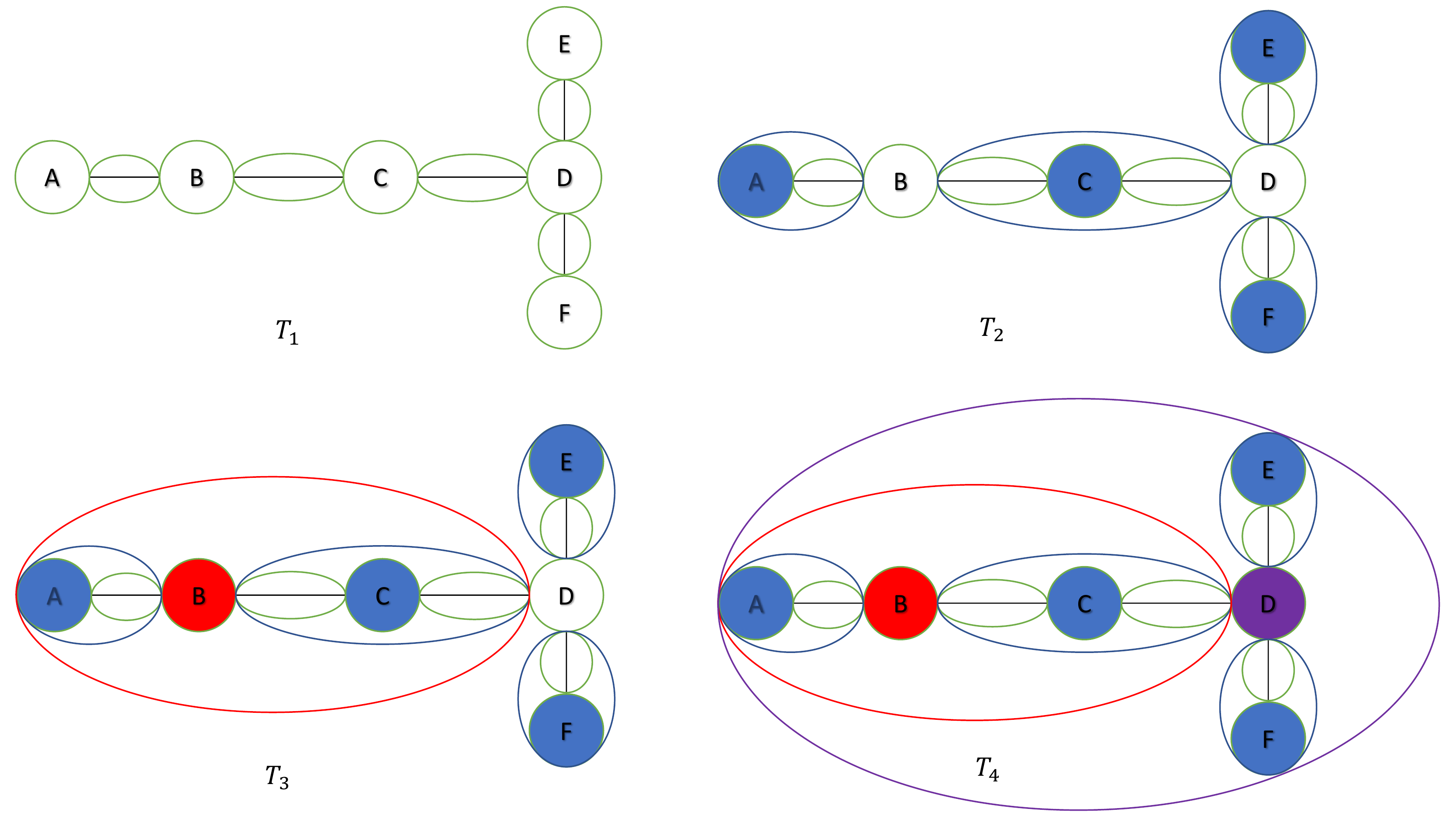}
    \caption{\footnotesize A recursive clustering of a tree. The tree consists of vertices $\{A,B,C,D,E,F\}$ and those vertices are connected by black edges. We use circles to represent different clusters. 
    $T_1$: The vertices and edges in the original tree form the green clusters. 
    $T_2$: The blue vertices are removed to form the blue clusters. Vertices $\{A,E,F\}$ are removed through the rake operation. Vertex $\{C\}$ is removed through the compress operation. 
    $T_3$: $\{B,D\}$ are both leaf vertices in the tree. Through tie-breaking, $\{B\}$ is removed to form the red cluster. 
    $T_4$: The only (leaf) vertex $D$ in the tree is removed through the rake operation to form the purple cluster.
    }
    \label{fig:Rake and Compress Tree}
\end{figure*}

\smallskip
We need a dynamic version of this, which will allow us to answer some type of queries in the tree, e.g., reporting a path in the tree between two nodes. Consider a forest that undergoes edge deletions and insertions (with the promise that all edges present after any of these updates form a forest). We would like to maintain the results of the rake and compresses operations during the $O(\log n)$ iterations, while the forest undergoes these updates.

Acar et al.~\cite{acar2020parallel} first observe that the decision of whether to remove a vertex $v$ only depends on $v$'s neighbors $N(v)$ and the leaf status of $N(v)$. Then they argue that after one edge insertion/deletion, only a constant number of vertices will have neighbors that change their leaf status. These vertices are called affected vertices. Meanwhile, for all other vertices, the decisions of whether to remove themselves are unaffected. In each subsequent iteration of the rake-and-compress process, the number of affected vertices due to that edge insertion/deletion grows by a constant additive factor. Thus the work induced by one edge insertion/deletion is only $O(\log n)$, yielding the following lemma. Like before, we note that although the work in the following lemma is stated in expectation, by increasing the work by a factor of $O(\log n)$, the work bound also holds w.h.p. 

\begin{lemma}
\label{oldrclemma} There is a dynamic data structure that maintains the hierarchical forests $\{T_1 \codots T_k\}$ where $T_1$ is the original forest, and $T_{i+1}$ is generated from $T_i$ after one iteration of the rake-and-compress process. And $T_k$ is the final forest in which every connected component in the original forest is clustered into a single cluster. For a $n$-node forest that undergoes edge insertions and deletions, processes batch insertions and deletions of $k$ edges in $O(k \log n)$ work in expectation and $O(\log(n)\log^*(n))$ span w.h.p.  
\end{lemma}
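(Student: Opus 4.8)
The plan is to reproduce the batch-dynamic rake-and-compress (RC) tree construction of Acar, Anderson, Blelloch, Dhulipala, and Westrick~\cite{acar2020parallel}, so the argument is essentially a recap adapted to our notation. The foundation is the \emph{locality} of the RC decisions. Fix, once and for all, an independent string of random bits for each vertex $v$, used to select the independent set of degree-two vertices in the compress step. Then whether $v$ is removed in passing from $T_i$ to $T_{i+1}$, whether this happens by a rake or a compress, and into which neighbor's base cluster $v$ gets merged, is determined solely by the degree of $v$ in $T_i$, the leaf/non-leaf status of the $O(1)$ neighbors of $v$ in $T_i$, and the fixed random bits of $v$. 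Consequently, re-running the RC process after an update alters a vertex's behavior only if one of these local inputs changed.

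Building on this, I would formalize the set of \emph{affected} clusters and prove the key structural claim by induction on the level. Inserting or deleting a single edge $e$ changes the local input of only $O(1)$ vertices of $T_1$ (the endpoints of $e$ plus any vertex whose leaf-status flips), so only $O(1)$ decisions differ at level $1$; every other base cluster of $T_2$ is identical to before. Inductively, a vertex can become newly affected at level $i+1$ only if one of its $O(1)$ neighbors in $T_{i+1}$ was affected at level $i$, and since each cluster carries $O(1)$ boundary vertices, the affected frontier grows by only an additive constant per level. Over the $O(\log n)$ RC levels a single edge update therefore touches only $O(\log n)$ clusters, so a batch of $k$ edge updates touches $O(k\log n)$ clusters in total, summed over all levels; this is what drives the work bound.

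With the structural claim in hand, the update algorithm processes the levels $T_1,\dots,T_k$ sequentially; at each level it recomputes, in parallel over the affected vertices of that level, their RC decisions, rebuilds the affected base clusters and their children adjacency lists, and passes the (again only additively larger) affected set up to the next level. Locating and rewiring clusters is done through a parallel dictionary keyed by cluster identifiers. Writing $a_i$ for the number of clusters affected at level $i$, the total work is $\sum_{i} O(a_i) = O(k\log n)$; the expectation arises both from the randomized compress (a constant-fraction independent set in expectation, so there are $O(\log n)$ levels in expectation and hence w.h.p.\ after the standard $O(\log n)$ blow-up) and from the parallel dictionary. The span is $O(\log n)$ levels handled one after another, each costing $O(\log^* n)$ depth for the parallel-dictionary operations and for the symmetry breaking inside a compress step, giving the stated $O(\log n\log^* n)$ span w.h.p.

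The part needing the most care — the main obstacle — is the two-sided nature of the induction: that the affected set grows only additively \emph{and} that every unaffected cluster stays literally unchanged at every level. This is exactly where reusing the fixed random bits consistently matters, and where one must check that a cluster's shape at level $i+1$ depends only on data in a constant-radius neighborhood of it at level $i$. The remaining difficulty is implementational: making each level's recomputation run in $O(\log^* n)$ depth and in work proportional to $a_i$, rather than re-reading whole adjacency structures — this is what the parallel-dictionary machinery of~\cite{acar2020parallel} provides. Both steps are carried out in~\cite{acar2020parallel}; in the later sections we only need to couple this data structure with the connectivity forest of \Cref{oldhdtlemma}, not to reprove this lemma.
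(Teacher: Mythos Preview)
Your proposal is correct and takes essentially the same approach as the paper: the lemma is quoted from Acar et al.~\cite{acar2020parallel}, and the paper merely sketches the locality argument (a vertex's RC decision depends only on its neighbors and their leaf status, so the set of affected vertices grows by an additive constant per level, giving $O(\log n)$ affected clusters per edge update across $O(\log n)$ levels). Your write-up is a more detailed version of the same sketch and correctly identifies that the result is not reproved here but imported as a black box.
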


\subsection{Overview of the combined data structure} 
The combined and adapted data structure we present proves \Cref{lemma:parallel_data_structure}. The rest of this section is dedicated to proving \Cref{lemma:parallel_data_structure}. We first provide an overview of the data structure, and we then discuss each of the operations and their complexities.

To explain the data structure, let us briefly summarize what we will use it for. Our data structure is meant to be working on $G-T'$, the graph induced by vertices that are not in the current initial segment. We construct and maintain a parallelized HDT connectivity forest (as overviewed in \Cref{parallalconnDS}). In addition, we keep a rake-and-compress representation (as overviewed in \Cref{parallalRC}) for a copy of the HDT forest. The parallelized HDT forest is responsible for keeping a maximal forest of $G-T'$ as it undergoes vertex and edge detection. Whenever some edge deletions happen, the replacement edges found by this parallelized HDT are also fed into the rake-and-compress representations of the trees. The rake-and-compress representation allows one to efficiently answer queries about the component, e.g., finding a path between two vertices. 

We augment the RC-tree with two different flags (from now on, we use the phrase RC-Tree to refer to the tree's rake-and-compress representation). The first flag is the separator flag which helps us to determine whether a connected component contains a vertex from the separator $Q$. In the RC-Tree, each vertex has a flag indicating whether it is in the separator $Q$. In the original tree $T_1$, a vertex has a \textit{separator flag} if it is in $Q$. When removing a vertex $v$ from the RC tree, we merge all the clusters with $v$ as the boundary vertex to form a new cluster. If any of the above clusters or $v$ has a separator flag, the newly formed cluster will have the separator flag. Moreover, in the adjacency list, we sort the children cluster such that the children with separator flags appear in front of the children without separator flags. Finally, the clusters representing connected components are sorted into a linked list where the connected components with separator flags appear before the ones without the flags.

The second augmentation is the lowest neighbor augmentation which helps us find a connection vertex of a connected component $C$ to the partial tree $T'$. Before the first stage of the rake-and-compress process, if a vertex $v$ has neighbors in the partial tree $T'$, it is augmented with $(v,d_v)$ where $d_v$ is the depth of the lowest depth tree neighbor of $v$. If $v$ has no tree neighbor, it is augmented with \textit{None}. When merging clusters formed by removing $v$, suppose any of the clusters or $v$ is augmented with anything other than \textit{None}. Let $\{(v_1,d_{v_1})\codots (v_k,d_{v_k})\}$ be the augmentations of $v$'s children clusters, the cluster formed by removing $v$ is augmented with $(v_i,d_{v_i})$ where $d_{v_i}$ has (one of) the lowest depth among $d_{v_1} \codots d_{v_k}$. Otherwise, it is augmented with \textit{None}. If $v$ is augmented with $(v_i,d_{v_i})$, then $v_i$ has a neighbor $y\in T'$ such that $y$ has the lowest depth among all tree neighbors of vertices in the cluster formed by $v$. 

Next, we discuss the complexities of the operations provided by the data structure. We first discuss FindCC(), LowestNode(), and BatchDelete(), as they are simpler and shorter. Then, in a separate subsection, we discuss FindPathS2P().

\subsection{FindCC(), LowestNode(), and BatchDelete()}
FindCC() has work and depth $O(1)$. This is because the connected components of $G-T'$ are sorted such that connected components with separator flags appear before those without. So we can just check the separator flag of the first connected component in the list. If it has the flag, then we return the connected component. Otherwise, the function returns $success$.

LowestNode($C$) has work and depth $O(1)$ because it is augmented with $(v,d_v)$.  As described above, $v$ is the unique vertex with the lowest depth that is adjacent to a vertex contained in the cluster, and that cluster is the connected component $C$. 

For BatchDelete($p$), vertices in $G-T'$ adjacent to $p$ learn the depth of their new tree neighbors, so their lowest neighbor augmentations are updated. Suppose vertices on $p$ have $k$ total neighbors. \cref{oldhdtlemma} shows that deleting edges from the parallelized HDT and finding the replacement edges have amortized work $O(k\log^3 n)$. Since we can find a total of at most $k$ replacement edges, by \cref{oldrclemma}, inserting $k$ edges to the RC-Tree can be done in $O(k \log(1 + n/k))$ work in expectation and $O(\log(n)\log^*(n))$ depth w.h.p. This is because the vertices with augmentation updates are the vertices adjacent to the deleted edges.

\subsection{FindPathS2P}
We first describe how to report a path between two given vertices. Then we explain how to report a path between one set of vertices and another vertex, which provides exactly FindPathS2P($C$, $x$). 
\subsubsection{Point-to-point path queries}
We first show that the Point-to-Point queries FindPathP2P($x$,$y$) between two vertices $x$ and $y$ in the RC-Tree can be done using depth $O(\log n)$ and work proportional to $d(x,y)$, where $d(x,y)$ is the distance between $x$ and $y$ in the original tree. 

Recall that the RC algorithm will produce a collection of hierarchical forests $\{T_1 \codots T_k\}$ where $T_1$ is the original forest, and $T_{i+1}$ is generated from $T_i$ after one iteration of the rake-and-compress process. Also, $T_k$ is the final forest in which every connected component in the original tree is clustered into a single cluster. Notice that in forest $T_i$, the edges are either edges in the original forest or clusters formed by the compression of the vertices. The vertices are either vertices in the original forest or clusters formed by the rake-and-compress of the vertices.

\begin{lemma}
\label{lemmast}
    For two vertices $x,y$ in the 1st level RC tree, there is an algorithm that finds the path between $x$ and $y$ in work $O(d(x,y) \log n)$ and depth $O( \log n)$ w.h.p. 
\end{lemma}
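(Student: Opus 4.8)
Proof proposal for Lemma~\ref{lemmast} (point-to-point path queries in the RC-tree).

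The plan is to exploit the hierarchical clustering structure $\{T_1, \ldots, T_k\}$ to locate a ``meeting cluster'' of $x$ and $y$, and then recursively unfold that cluster back down to level~$1$, reconstructing the actual tree path. First I would find, for each of $x$ and $y$, the sequence of clusters containing it across the levels: since each vertex is absorbed into the base cluster of the vertex that represents its merge, following the parent pointers from $x$ (resp.\ $y$) gives a chain of clusters $C_0^x \subseteq C_1^x \subseteq \cdots$ (resp.\ $C_0^y \subseteq C_1^y \subseteq \cdots$), with $C_0^x$ the base cluster of $x$. Walking up both chains in parallel, I identify the first level at which $x$ and $y$ land in a common cluster $C^\ast$; because the path between $x$ and $y$ in the original forest has length $d(x,y)$, the cluster $C^\ast$ — and all the clusters on the two chains below it — can be ``charged'' to vertices and edges on that path, so the total number of clusters touched is $O(d(x,y))$ (each up-step corresponds to a rake or compress that consumed a distinct vertex/edge on or adjacent to the $x$–$y$ path, and there are $O(\log n)$ levels), and the depth of this walk is $O(\log n)$.

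Next I would extract the path from $C^\ast$ downward. Each cluster stores (in an adjacency list) the children clusters that merged to form it, together with the edges of the original forest joining them; so $C^\ast$ is itself a small contracted forest whose nodes are its children and whose edges are original-forest edges (or further-contracted compress-clusters). I would recursively: within $C^\ast$, find the sub-path of children clusters connecting the child containing $x$ to the child containing $y$; then for each child cluster $C$ on that sub-path together with the two boundary vertices through which the path enters and leaves $C$, recurse into $C$ to find the corresponding sub-path one level lower. Since a cluster has $O(1)$ boundary vertices and the recursion tree's total size is proportional to the number of clusters that intersect the $x$–$y$ path, which is $O(d(x,y))$ with $O(\log n)$ levels, the whole unfolding runs in $O(d(x,y)\log n)$ work and $O(\log n)$ depth (the $\log n$ in the depth coming from the $O(\log n)$ levels, each processed in $O(1)$ or $O(\log n)$ depth using parallel list operations such as Lemma~\ref{lemma:andersonmiller} to concatenate the reconstructed segments). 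I would maintain, for each cluster, pointers to its (at most two) boundary vertices and to the base edge it represents, so that at each level the ``enter/leave'' boundary vertices needed for the recursive call are available in $O(1)$.

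The main obstacle I expect is the careful bookkeeping that keeps the work proportional to $d(x,y)$ rather than to the cluster sizes: a cluster on the chain may be large, but we must only ever inspect the $O(1)$ children of it that lie on the $x$–$y$ path and its $O(1)$ boundary vertices, never enumerate the whole cluster. This requires the RC-tree to support, for a cluster $C$ and two of its boundary vertices $a,b$, retrieving the (unique) sequence of child clusters realizing the $a$–$b$ path inside $C$ in time proportional to the length of that child-sequence — which in turn needs an auxiliary structure on each cluster (e.g.\ treating each cluster's internal contracted forest as itself a small tree with its own point-to-point query, or precomputing for each cluster the path between its two boundary vertices). Verifying that this recursion bottoms out correctly at level~$1$ with an actual edge-by-edge path, and that concatenation of the $O(d(x,y))$ reconstructed pieces in $O(\log n)$ depth is valid, is the part that needs the most care; the parallel-depth accounting and the charging argument (each touched cluster maps to a distinct original vertex/edge near the path, over $O(\log n)$ levels) are comparatively routine.
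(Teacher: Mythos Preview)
Your proposal follows essentially the same approach as the paper: climb from $x$ and $y$ through the cluster hierarchy to their first common cluster, then recursively unfold the path downward using boundary vertices. The paper organizes the downward recursion into two explicit inductive sub-lemmas rather than a single charging argument: \cref{lemmaadjacentst} recovers the original-tree path between two vertices that are \emph{adjacent} at some level $T_i$ (i.e., joined by a single edge-cluster, which was formed by compressing some vertex $z$, so one recurses on $(x,z)$ and $(z,y)$ one level lower), and \cref{lemmaexposed} recovers the path from $x$ to a designated boundary vertex of the cluster $X_i$ containing $x$ at level $i$. Both are proved by straightforward induction on the level $i$, with work bound $O(d(\cdot,\cdot)\cdot i)$, which yields $O(d(x,y)\log n)$ once $i=O(\log n)$ w.h.p.

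Your charging argument (``each touched cluster maps to a distinct original vertex/edge on or adjacent to the path'') is the right intuition but, as written, conflates two different things: the upward walk from $x$ and $y$ to $C^\ast$ is just $O(\log n)$ parent-pointer steps and has nothing to do with $d(x,y)$; the $O(d(x,y))$ count actually arises only in the \emph{downward} unfolding, where it is the number of leaves of a binary recursion tree (each split at a compressed vertex $z$ with $d(x,y)=d(x,z)+d(z,y)$). The paper's per-level induction sidesteps the need to make this global charging precise, and also makes it immediate that no auxiliary per-cluster path structure is needed: since a cluster has at most two boundary vertices and is formed by removing a single known vertex, the ``sequence of child clusters realizing the $a$--$b$ path'' you worry about is always just two children meeting at that removed vertex, retrievable in $O(1)$.
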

We first show that we can answer the path query between adjacent vertices in $T_i$. 
\begin{lemma}
\label{lemmaadjacentst}
    Let $x$, $y$ be two neighboring vertices in the $i^{th}$ level of the RC-Tree $T_i$ connected by the cluster edge $E$. There is an algorithm that finds the path in the original tree between $x$ and $y$ in work $O(d(x,y) \cdot i)$ and depth $O(i)$.
\end{lemma}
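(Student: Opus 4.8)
The plan is to induct on the level $i$. The base case $i=1$ is immediate: if $x,y$ are adjacent in the original forest $T_1$, the cluster edge $E$ is just the original edge $(x,y)$, so the path is the single edge and both work and depth are $O(1) = O(d(x,y) \cdot 1)$. For the inductive step, suppose $x,y$ are neighbors in $T_i$, connected by a cluster edge $E$, and that the claim holds for all smaller levels. The cluster edge $E$ in $T_i$ was created at some earlier level $j < i$ by a \emph{compress} operation: some vertex $v$ of degree two in $T_j$ was removed, and the two cluster edges incident to $v$ at level $j$ (say connecting $v$ to $a$ and $v$ to $b$) were merged, together with the base cluster of $v$, into $E$. (If $E$ is never a merged cluster but is carried up unchanged from level $1$, it is still an original edge and we are in the base case.) Crucially, $\{a,b\} = \{x,y\}$: the endpoints of the merged cluster are exactly the boundary vertices, which are the two neighbors of $v$ at that level. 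So to recover the original-tree path from $x$ to $y$ through $E$, it suffices to (i) recursively find the original path from $x$ to $v$ corresponding to the level-$j$ cluster edge $(x,v)$, and (ii) recursively find the original path from $v$ to $y$ corresponding to the level-$j$ cluster edge $(v,y)$, then concatenate them at $v$. Since $v$ is recorded as the representative of the cluster $E$ (and the children clusters are stored in $E$'s adjacency list), we can look up $v$, $a=x$, $b=y$ in $O(1)$.

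Running the two recursive calls \emph{in parallel} gives the depth recurrence $\mathrm{Depth}(i) \le \mathrm{Depth}(i-1) + O(1)$, hence $\mathrm{Depth}(i) = O(i)$. For the work, note that the two sub-paths (from $x$ to $v$ and from $v$ to $y$ in the original tree) are edge-disjoint and their union is exactly the original-tree path from $x$ to $y$, so their lengths sum to $d(x,y)$. Letting $W(i)$ denote the worst-case work to expand a level-$i$ cluster edge whose endpoints are at original-tree distance $d$, the recursion splits $d = d_1 + d_2$ across the two children, each expanded at a level $< i$, plus $O(i)$ overhead at this node for the bookkeeping (reading the adjacency list of $E$, stitching the lists). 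Unrolling, each original edge on the final path is "touched" once per level it is carried through, i.e.\ at most $i$ times, giving total work $O(d(x,y)\cdot i)$. One should be slightly careful that the level $j$ at which $E$ was formed may be strictly less than $i-1$; in that case the cluster edge $E$ sits unchanged in levels $j+1,\ldots,i$, and we simply jump directly to level $j$ to find its representative $v$ — this only helps the bounds.

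The main obstacle I anticipate is purely a matter of setting up the RC-tree bookkeeping precisely: making sure that every cluster edge stores (a) its representative vertex $v$ and (b) pointers to its (at most two) child cluster edges on either side of $v$, and arguing that this information is exactly what the data structure of \Cref{oldrclemma} maintains (or can be made to maintain with $O(\log n)$ extra work per update). Once that interface is pinned down, the recursion above is clean. A second minor point to get right is the parallel scheduling: the recursion tree has depth $O(i)$ and $O(d(x,y))$ leaves, so a standard parallel divide-and-conquer (forking the two recursive calls, then concatenating the returned doubly-linked lists in $O(1)$) achieves the stated $O(i)$ depth and $O(d(x,y)\cdot i)$ work without any load-balancing subtlety, since the returned objects are linked lists that concatenate in constant time. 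Finally, \Cref{lemmast} follows by applying this lemma at the top level $k = O(\log n)$: $x$ and $y$ lie in the same final cluster $T_k$, and walking down the cluster hierarchy from that single cluster to the base edges is exactly one invocation of the above expansion, giving work $O(d(x,y)\log n)$ and depth $O(\log n)$.
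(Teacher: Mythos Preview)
Your proof of the lemma is correct and follows essentially the same approach as the paper: induct on the level, split at the compressed vertex $v$ (the paper calls it $z$) that represented the cluster edge $E$, recurse in parallel on the two sub-edges at the lower level, and sum the distances using $d(x,y)=d(x,v)+d(v,y)$.

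One tangential remark: your final sentence sketching how \Cref{lemmast} follows is oversimplified. At the top level, $x$ and $y$ need not be the two boundary vertices of a single cluster \emph{edge}; the smallest cluster containing both may have been formed by a rake (or by the final collapse), with the representative vertex $x_{i+1}$ sitting between $x$ and $y$ but neither $x$ nor $y$ being a boundary of that cluster. The paper handles this by tracking the cluster sequences $(X_1,\ldots,X_k)$ and $(Y_1,\ldots,Y_k)$, finding the first level where they coincide, and then invoking a separate lemma (\Cref{lemmaexposed}) to reach from $x$ (respectively $y$) to a boundary vertex of its containing cluster. This does not affect the correctness of your proof of the present lemma, only your aside about how it plugs into \Cref{lemmast}.
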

\begin{proof}
We prove the above statement by induction. Suppose that $x,y$ are adjacent vertices in $T_1$. Then the path between $x,y$ is just $(x,y)$, and the operation has work and depth $O(1)$. Now in $T_i$, if $E$ is an edge from the original tree, then the path is still $(x,y)$. Suppose that $E$ is a cluster formed by the compression of the vertex $z$ in $T_j$ for $j\leq i-1$. Then $z$ is still neighboring with both $x$ and $y$ in $T_j$. This is because once a vertex $x$ becomes a boundary vertex of a cluster $E$ in $T_j$, it will remain so in subsequent trees until $E$ is merged with either $x$ or another vertex. Moreover, the path between $x$, $y$ in the original tree is the union of the paths formed by $(x,z)$ and $(z,y)$. By induction, we can find the paths between $(x,z)$ and $(z,y)$ in parallel in depth $O(i-1)$, and work $O(d(x,z)\cdot (i-1))$ and $O(d(z,y)\cdot (i-1))$ respectively. Since $d(x,y) = d(x,z)+d(y,z)$, the work of finding the path between $x,y$ in $T_i$ is 
$$O(d(x,z)\cdot (i-1))+O(d(z,y)\cdot (i-1)) + C = O(d(x,y)\cdot i)$$
where $C$ is the constant overhead cost, thus proving \cref{lemmaadjacentst}.
\end{proof}

Consider the general case where $x,y$ are two vertices in the original tree. We recursively generate a list $(X_1 \codots X_k),(Y_1 \codots Y_k)$ where $X_1 = x,Y_1=y$, and 
$$  X_{i+1} = \begin{cases}
        X_i & \text{if $X_i$ is not merged in the formation of $T_{i+1}$}\\
        C & \text{if $X_i$ is merged to form the cluster $C$ in  $T_{i+1}$}
    \end{cases}
$$

We find the largest $i$ such that $X_i \neq Y_i$. Since $X_{i+1} = Y_{i+1}$, we have that $X_i$ and $Y_i$ are children clusters of $X_{i+1}=Y_{i+1}$. Suppose $X_{i+1}$ is formed from the removal of vertex $x_{i+1}$. Finding the path between $x$ and $y$ is equivalent to finding the paths between $x$ and $x_{i+1}$, and between $y$ and $x_{i+1}$. Moreover, $X_i$ and $Y_i$ both have $x_{i+1}$ as one of their boundary vertices. In the following lemma, we show that this can be done in work $O(d(x,x_{i+1})\cdot i)$ and $O(d(y,x_{i+1})\cdot i)$ respectively, thus proving lemma \cref{lemmast}. 

\begin{lemma}
\label{lemmaexposed}
    Consider a vertex $x$ in the RC-Tree with $(x= X_1\codots X_k)$, where 
    $$  X_{i+1} = \begin{cases}
        X_i & \text{if $X_i$ is not merged in the formation of $T_{i+1}$}\\
        C & \text{if $X_i$ is merged to form the cluster $C$ in  $T_{i+1}$}
    \end{cases}
$$
    We denote by $x_i$ the vertex that is removed to form $X_i$. Suppose $X_i$ has $y$ as a one boundary vertex. Then there is an algorithm that finds a path between $x$ and $y$ in work $O(d(x,y)\cdot i)$ and depth $O(i)$.
\end{lemma}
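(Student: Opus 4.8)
The plan is to prove \cref{lemmaexposed} by induction on the level $i$, reducing a level-$i$ query to one level-$(i-1)$ query (handled by the induction hypothesis) plus a single application of \cref{lemmaadjacentst}. For the base case $i=1$, the cluster $X_1$ is just the base cluster consisting of the vertex $x$, so its only possible boundary vertex is $x$ itself; hence $y=x$, $d(x,y)=0$, and the trivial path is returned in $O(1)$ work and depth. For the inductive step, first dispose of the degenerate case where $X_i=X_{i-1}$, i.e.\ no merge happens in the step producing $T_i$: then $y$ is also a boundary vertex of $X_{i-1}$ (the boundary set of an unmerged cluster does not change), so the induction hypothesis gives the path directly with work $O(d(x,y)(i-1))\le O(d(x,y)\,i)$ and depth $O(i)$.

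The interesting case is when $X_i$ is the new cluster formed when the vertex $x_i$ is removed in the rake-or-compress step producing $T_i$. Then $X_i$ is the union of $X_{i-1}$, the base cluster $\{x_i\}$, and possibly some further sibling clusters, and crucially each of these constituents has $x_i$ as a boundary vertex; in particular $X_{i-1}$ has $x_i$ as a boundary vertex, which is exactly why it got absorbed. I would first invoke the induction hypothesis on $X_{i-1}$ with boundary vertex $x_i$ to obtain a path $P_1$ from $x$ to $x_i$, living inside $X_{i-1}$, in $O(d(x,x_i)(i-1))$ work and $O(i-1)$ depth. It then remains to connect $x_i$ to $y$. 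Since $y$ is a boundary vertex of $X_i$, it is a boundary vertex of one of the constituents: if that constituent is $X_{i-1}$ or $y=x_i$, we are already done; otherwise it is a sibling cluster $C^*$, and since $C^*$ has both $x_i$ and $y$ as boundary vertices while every cluster has at most two boundary vertices, $C^*$ is precisely a cluster-edge of the RC-forest joining $x_i$ and $y$, so \cref{lemmaadjacentst} produces a path $P_2$ from $x_i$ to $y$ in $O(d(x_i,y)\,i)$ work and $O(i)$ depth. I would then output the concatenation of $P_1$ and $P_2$, which takes $O(1)$ further depth.

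Two things make this output correct with the claimed bounds. First, $X_{i-1}$ and $C^*$ are subtrees of the original tree that meet only in the shared boundary vertex $x_i$, so the original-tree path from $x\in X_{i-1}$ to $y\in C^*$ must pass through $x_i$; hence $d(x,y)=d(x,x_i)+d(x_i,y)$ and $P_1$ followed by $P_2$ is exactly that path. Second, the total work is $O(d(x,x_i)(i-1))+O(d(x_i,y)\,i)+O(1)=O(d(x,y)\,i)$ by this additivity, and the two subcomputations run in parallel, so the depth is $O(i)$. Applying the lemma with $i\le k=O(\log n)$ is then what \cref{lemmast} needs.

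I expect the main obstacle to be the cluster-structure bookkeeping: verifying, from the definition of the rake-and-compress clustering, that when $X_i$ is formed every boundary vertex of $X_i$ can be pinned to a single constituent — either $X_{i-1}$, the base cluster of $x_i$, or a two-boundary sibling cluster spanning $x_i$ and $y$ — and that the meeting-only-at-$x_i$ property (hence the distance additivity) genuinely holds. Once these structural facts are in place, the induction and the work/depth accounting are routine.
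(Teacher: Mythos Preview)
Your proof is correct and follows essentially the same inductive structure as the paper's: reduce to level $i-1$ via the induction hypothesis, and when $y$ lies in a sibling cluster, bridge $x_i$ to $y$ using \cref{lemmaadjacentst} and exploit the additivity $d(x,y)=d(x,x_i)+d(x_i,y)$. One presentational nit: you commit to computing $P_1$ from $x$ to $x_i$ \emph{before} checking whether $y$ is already a boundary vertex of $X_{i-1}$, so ``we are already done'' is not quite right in that branch---the paper does the case split first and, when $y$ is a boundary of $X_{i-1}$, applies the induction hypothesis directly with target $y$.
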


\begin{proof}
The proof is by induction. If $i=1$, $x_1$ cannot have $y$ as a boundary vertex. This is because only clusters formed by the edges can have boundary vertices. If $i=2$, then the path between $x$ and $y$ is just the single edge path $(x,y)$.

Now we consider the case where $i>2$. Suppose $X_i = X_{i-1}$. Then $X_{i-1}$ still has $y$ as its boundary vertex in $T_{i-1}$. This is because once a vertex $v$ becomes a boundary vertex of a cluster $C$ in $T_j$, it will remain so in subsequent trees until $C$ is merged with either $v$ or another vertex. Thus we have the result from induction.

Suppose $X_i \neq X_{i-1}$. We know that $X_{i-1}$ is a child cluster of $X_i$ formed by removing $x_i$ in $T_{i-1}$. In $T_{i-1}$, if $X_{i-1}$ has $y$ as one of its boundary vertices, we have the result from induction. Suppose $X_{i-1}$ has $x_i$ as its boundary vertex and $(y,x_i)$ are neighbors in $T_{i-1}$. In this case, the path between $x$ and $y$ is the union of path between $(x_i, y)$ and $(x,x_i)$. The former path can be found in work $O(d(x_i,y) \cdot (i-1))$ by \cref{lemmaadjacentst}, and the latter path can be found by induction in work $O(d(x,x_i)\cdot (i-1))$, so the total work is still 
$$O(d(x_i,y) \cdot (i-1))+O(d(x,x_i)\cdot (i-1))+C = d(x,y)\cdot i$$
where $C$ is the constant overhead cost, thus proving \cref{lemmaexposed}.
\end{proof}

\subsubsection{Set-to-point path queries}
Recall that in the original tree $T_1$ each vertex is augmented with a flag. In $T_1$, a vertex has the separator flag if it is in $Q$. Moreover, a cluster has the separator if any of its child clusters has the separator flag.

If $x$ has the separator flag, we just return $x$. Otherwise, we recursively generate a list $(x=X_1 \codots X_k)$  where 
    $$  X_{i+1} = \begin{cases}
        X_i & \text{if $X_i$ is not merged in the formation of $T_{i+1}$}\\
        C & \text{if $X_i$ is merged to form the cluster $C$ in  $T_{i+1}$}
    \end{cases}
$$ We find the largest $i$ such that $X_i$ does not have the separator flag. There are two cases to consider.
\begin{enumerate}
    \item Suppose that $x_{i+1} \notin Q$. Since $X_{i+1}$ has a separator flag, it must have a child cluster $Y$ that has the separator flag, and the cluster $X_i$ does not have the separator flag. So we know that there is a path from $x$ to some vertex $q\in Y\cap Q$ using $x_{i+1}$, so we call FindPathP2P($x$,$x_{i+1}$) and FindPath$'_i$($Y$,$x_{i+1}$), where we will describe below, to find the paths. The function then returns the union of the above two paths. 
    \item Suppose that $x_{i+1} \in Q$. Then we can call FindPathP2P($x$,$x_{i+1}$) directly to find the path.
\end{enumerate}
For the function FindPath'$_i$($Y$,$z$), the subscript $i$ indicates that the function is called on $T_i$. We assume $Y$ has $z$ as its boundary vertex and $Y$ has the separator flag. Suppose $Y$ is formed from the removal of vertex $y$ in $T_j$. Let $E$ be the (cluster) edge that connects $y$ and $z$ in $T_j$, if $E$ has the separator flag, we go to the lower level tree and call FindPath'$_{i-1}$($E$,$z$). Otherwise, we know $E$ does not have any vertex that belongs to $Q$. Let $E'$ be another child of $Y$ that has the separator flag, then we can call FindPathP2P($z,y$) and   FindPath'$_{i-1}$($y$,$E'$) because the union of the above two paths is the desired path. 

Thus the FindPath' function follows the following recursive (informal) relationship.
$$FindPath'_i = FindPath'_{i-1}+FindPathP2P$$

The depth of FindPathS2P$(C,x)$ query is $O(\log n)$ w.h.p. This is because the main function FindPathS2P only calls the function FindPath'$_i$ at most once for some $i$. For the function FindPath'$_i$, as shown above, the depth of the recursion is at most $i=O(\log n)$ w.h.p. In the meanwhile, we can run the FindPathP2P queries in parallel, and those operations have depth $O(\log n)$ w.h.p.

For the function FindPathS2P$(C,x)$, suppose that the returned path is a path between $x$ and a vertex $y\in C$. Then the work is $O(d\log n)$ w.h.p. where $d$ is the distance between $(x,y)$ in the original tree. This is because the work cost of FindPathS2P is the sum of all the work from FindPathP2P and $O(\log n)$ overhead cost. The work of FindPathP2P is proportional to its length, and the final path returned by the function is the union of the paths returned by all calls of FindPathP2P.

\fullOnly{
\section{Discussions}
We presented the first nearly work-efficient parallel DFS algorithm with sublinear depth for undirected graphs, concretely achieving $\tilde{O}(\sqrt{n})$ depth and $\tilde{O}(m)$ work. We see three interesting follow-up questions:

\begin{enumerate} 
\item Is there a (properly) work-efficient DFS algorithm for undirected graphs with depth sublinear in $n$? That is, can we get $O(m)$ work without any extra logarithmic factors, using, say, $\tilde{O}(\sqrt{n})$ depth?
\item Is there a DFS algorithm with $\tilde{O}(m)$ work and $\poly(\log n)$ depth for undirected graphs?
\item Is there a DFS algorithm with $\tilde{O}(m)$ work and sublinear depth---e.g., $\tilde{O}(\sqrt{n})$ or lower---depth for directed graphs? Notice that a work of Aggarwal, Anderson, and Kao~\cite{aggarwal1989parallel} gives a DFS algorithm with $\poly(n)$ work and $\poly(\log n)$ depth for directed graphs.
 \end{enumerate}
 }

 \section{Acknowledgment}
C. G. was supported by the European Research Council (ERC) under the European Unions Horizon 2020 research and innovation program (grant agreement No. 853109).

 \shortOnly{
 \balance
 }

\bibliographystyle{alpha}
\bibliography{ref}
\fullOnly{
\clearpage
\appendix

\section{Managing singularity cases in path merging}
In \Cref{pathreductionalgorithm}, we mentioned that two problematic special cases might arise in path merging:

\begin{enumerate}
    \item The discarded parts of $L^*$ caused $Q$ to no longer be a separator (\cref{discardproblem}).
    \item The number of matched paths is too small, meaning the number of paths in $P_1$ is less than $\frac{1}{12}k$ (\cref{fewpaths}).
\end{enumerate}

Aggarwal and Anderson~\cite{Aggarwal87} show that each of these singularity cases can be handled easily, and in either case, we can immediately find another separator consisting of less than $\frac{47}{48}k$ paths. Here, we recap their argument. 

If the number of paths in $P_1$ is less than $\frac{1}{12}k$, then \cref{fewpaths} shows that either $\hat{L}\cup P\cup S$ or $L\cup P\cup \hat{S}$ is a $\frac{23}{24}k$-path separator. So we need to check which one forms a separator. Checking whether $Q$ forms a separator can be done by computing the size of the largest connected component in $G-Q$. J\'{a}J\'{a}~\cite{JaJ92} provides such an algorithm with depth $O(\log n)$ and work $O(m\log n)$, where $m$ is the number of edges. In the other bad case where the number of paths in $P_1$ is greater or equal to $\frac{1}{12}k$ and $D\cup L^*$'s largest connected component has size more than $\frac{n}{2}$, \cref{discardproblem} shows that $L \cup \hat{S} \cup P$ is a $O(\frac{23}{24}k)$-path separator.

\paragraph{The Discarded Old Separator Problem}
\label{app:singularities}
Consider the scenario where we match at least $\frac{1}{12}k$ of the paths, but $Q$ no longer remains a separator after the updating. We already know that the largest connected component of $D$ has size less than $\frac{n}{2}$ because $L\cup S$ is a separator. If the updated separator no longer remains a separator, it is because $D\cup L^*$ has a connected component with size larger than $\frac{n}{2}$. Now we prove the following lemma.
\begin{lemma}
\label{discardproblem}
Suppose that the number of paths in $P_1$ is greater or equal to $\frac{1}{12}k$. If the largest connected component of $D\cup L^*$ has size larger than $\frac{n}{2}$, then the largest connected component of $D\cup (S-\hat{S})$ has size less than $\frac{n}{2}$. This means that $L \cup \hat{S} \cup P$ is an $O(\frac{23}{24}k)$-path separator.
\end{lemma}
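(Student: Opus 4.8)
The plan is to argue entirely with induced subgraphs of $G$. First I would record two identities: removing the updated separator $L\cup P\cup S-L^{*}$ from $G$ leaves exactly $G[D\cup L^{*}]$, and removing $L\cup\hat S\cup P$ leaves exactly $G[D\cup(S-\hat S)]$ — both are immediate from checking which vertices survive. Since the original separator $Q=L\cup S$ has the separator property, every component of $G-Q=G[V-L-S]$ has size at most $\tfrac{n}{2}$, and hence so does every component of $G[D]$ (because $D\subseteq V-L-S$, and a component of $G[D]$ lies inside one of $G[V-L-S]$). Fix a component $K$ of $G[D\cup L^{*}]$ with $|K|>\tfrac{n}{2}$; the previous sentence forces $K\not\subseteq D$, so $K$ meets $L^{*}$.

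The core is a separation statement squeezed out of Property~2 (no path from $L^{*}$ to $S-\hat S$ whose internal vertices all lie in $D$): in $H:=G[D\cup L^{*}\cup(S-\hat S)]$, no component contains both an $L^{*}$-vertex and an $(S-\hat S)$-vertex. I would prove this by path surgery — given such a path in $H$, take its last $L^{*}$-vertex $u_i$ and, after it, its first $(S-\hat S)$-vertex $u_j$; every vertex strictly between them is a vertex of $H$, lies outside $L^{*}$ by the choice of $u_i$ and outside $S-\hat S$ by the choice of $u_j$, hence lies in $D$, so $u_i\ldots u_j$ is a path forbidden by Property~2. It follows that the component $\tilde K$ of $H$ containing $K$ has no $(S-\hat S)$-vertex, so $\tilde K\subseteq D\cup L^{*}$; as $G[D\cup L^{*}]$ is an induced subgraph of $H$ and $K$ is already a maximal connected set there, $\tilde K=K$. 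Thus $K$ is an entire component of $H$, so it sends no edge to $(D\setminus K)\cup(S-\hat S)$ in $H$, and therefore also none in the subgraph $G[D\cup(S-\hat S)]$; equivalently, $K\cap D$ is closed off from the rest of $G[D\cup(S-\hat S)]$.

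Now each component $M$ of $G[D\cup(S-\hat S)]$ is of one of two kinds. If $M\subseteq K\cap D$ then $M\subseteq D$ and $M$ is in fact a component of $G[D]$, so $|M|\le\tfrac{n}{2}$. Otherwise $M\subseteq(D\setminus K)\cup(S-\hat S)$, and using $|K\cap D|=|K|-|K\cap L^{*}|>\tfrac{n}{2}-|L^{*}|$ together with the pairwise disjointness of $D$, $L^{*}$ and $S-\hat S$,
\[
 |M|\ \le\ |D\setminus K|+|S-\hat S|\ <\ \bigl(|D|+|L^{*}|+|S-\hat S|\bigr)-\tfrac{n}{2}\ \le\ n-\tfrac{n}{2}\ =\ \tfrac{n}{2}.
\]
Hence every component of $G[D\cup(S-\hat S)]$ has size at most $\tfrac{n}{2}$, so $L\cup\hat S\cup P$ is a separator. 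For the path count I would decompose it into vertex-disjoint paths: a long path joined to nothing stays a single path; a long path together with its $P_1$-connector and its matched short path (a tree with at most four leaves) splits into at most three paths; a long path with its $P_2$-connector splits into at most two; and $S-\hat S$ contributes nothing. Since each connector meets a distinct long path, $|P_1|+|P_2|\le|L|=\tfrac{1}{4}k$, so the total is at most $|L|+2|P_1|+|P_2|\le 3|L|=\tfrac{3}{4}k=O\!\bigl(\tfrac{23}{24}k\bigr)$.

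The step I expect to be the real obstacle is the structural one in the second paragraph: turning Property~2 into the statement that $K\cap D$ is a union of components of $G[D\cup(S-\hat S)]$, which needs the path-surgery lemma plus careful attention to the fact that the two induced subgraphs in play share precisely the vertex set $D$. After that the size accounting and the path count are routine, and the only mild nuisance is strict-versus-nonstrict inequalities (a type-$(i)$ component of $G[D]$ could in principle have size exactly $\tfrac{n}{2}$), which is harmless since a separator only requires components of size at most $\tfrac{n}{2}$.
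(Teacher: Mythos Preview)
Your proof is correct and follows the same line as the paper's: the big component of $G[D\cup L^*]$ must meet $L^*$ (since $L\cup S$ was a separator), Property~2 then separates it from $S-\hat S$ inside $G[D\cup L^*\cup(S-\hat S)]$, and a size count finishes. You are simply more careful than the paper in tracking the three induced subgraphs and in the path-surgery extraction, and your path-count re-partitions the tree $l\cup p\cup s$ rather than listing $L,\hat S,P$ separately, but the argument is essentially the same.
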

\begin{proof}
Since $D$ does not have a connected component larger than $\frac{n}{2}$, it must be that the connected component containing $L^*$ has size larger than $\frac{n}{2}$. By property $2$, there is no path from $L^*$ to $S-\hat{S}$, which means that they are in different components of the induced graph $D\cup L^* \cup (S-\hat{S})$. If the connected component containing $L^*$ has size greater than $\frac{n}{2}$, then the connected component containing $S-\hat{S}$ has size less than $\frac{n}{2}$. 
\end{proof}
In this case, we discard $S-\hat{S}$ instead of $L^*$. The remaining paths are $L,\hat{S}, P$. The first two have at most $\frac{1}{4}k$ paths. And $P = P_1\cup P_2$. Set $P_1$ has at most $\frac{1}{4}k$ paths, and $P_2$ has at most $\frac{1}{48}k$ paths. Thus,  $L,\hat{S}, P$ together have at most $\frac{37}{48}k$ paths.

\paragraph{Too Few Paths are Matched} 
\begin{lemma}
\label{fewpaths}
     Suppose that the number of paths in $P_1$ is less than $\frac{1}{12}k$. Then either $\hat{L}\cup P\cup S$ or $L\cup P\cup \hat{S}$ is an $O(\frac{23}{24}k)$-path separator.  
\end{lemma}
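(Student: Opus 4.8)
The plan is to establish two things: that each of $\hat L \cup P \cup S$ and $L \cup P \cup \hat S$ is a union of at most $\tfrac{23}{24}k$ vertex-disjoint paths, and that at least one of the two is actually a separator. The path count is routine bookkeeping. By validity of $P$, every path of $Q$ meets at most one path of $P$, so $|\hat L_1| = |P_1|$, $|\hat L_2| = |P_2|$ and $|\hat S| = |P_1|$. Merge each $p \in P$ with the (unique) long path it touches, and, when $p \in P_1$, with the (unique) short path it touches; since the merged piece is a tree with at most two degree-$3$ vertices, it decomposes into at most $3$ vertex-disjoint paths when $p \in P_1$ and at most $2$ when $p \in P_2$, and these pieces (together with the untouched separator paths, which meet no path of $P$) form a vertex-disjoint cover. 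For $\hat L \cup P \cup S$ this gives at most $3|P_1| + 2|P_2| + (|S| - |\hat S|) = 2|P_1| + 2|P_2| + \tfrac34 k < \tfrac{23}{24}k$ paths, using the hypothesis $|P_1| < \tfrac1{12}k$ and the bound $|P_2| \le \tfrac1{48}k$; the analogous count for $L \cup P \cup \hat S$ (now retaining all of $L$ and all of $\hat S$) gives at most $2|P_1| + |P_2| + \tfrac14 k < \tfrac{23}{24}k$.

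For the separator property I would argue by contradiction, assuming neither candidate is a separator. Every vertex lies in $L$, $S$, $P$, or $D$, so $V \setminus (\hat L \cup P \cup S) \subseteq (L - \hat L) \cup D$ and $V \setminus (L \cup P \cup \hat S) \subseteq (S - \hat S) \cup D$. Choose a connected component $K_1$ of $G - (\hat L \cup P \cup S)$ with $|K_1| > n/2$ and a connected component $K_2$ of $G - (L \cup P \cup \hat S)$ with $|K_2| > n/2$. Since $Q = L \cup S$ is a separator and $G[D]$ is an induced subgraph of $G - Q$, every component of $G[D]$ has size at most $n/2$; hence $K_1 \not\subseteq D$ (otherwise $K_1$ would lie inside a single component of $G[D]$), so $K_1$ contains a vertex $a \in L - \hat L$, and symmetrically $K_2$ contains a vertex $b \in S - \hat S$. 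Moreover $|K_1| + |K_2| > n$ forces a common vertex $v \in K_1 \cap K_2$, and since $K_1 \cap K_2 \subseteq D$ we get $v \in D$.

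Finally I would splice a forbidden path. Inside $K_1$ there is a path from $a$ to $v$ using only vertices of $(L - \hat L) \cup D$; its suffix starting at the last vertex $a' \in L - \hat L$ on that path is a path from $a' \in L - \hat L$ to $v$ whose internal vertices, and $v$ itself, all lie in $D$. Symmetrically, inside $K_2$ the prefix of a path from $v$ to $b$, taken up to its first vertex $b' \in S - \hat S$, is a path from $v$ to $b' \in S - \hat S$ with all internal vertices and $v$ in $D$. Concatenating these at $v$ and shortcutting the resulting walk to a simple path yields a path from a vertex of $L - \hat L$ to a vertex of $S - \hat S$ all of whose internal vertices lie in $D$, contradicting the maximality of $P$. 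Hence one of $\hat L \cup P \cup S$, $L \cup P \cup \hat S$ is a separator, and by the first paragraph it has at most $\tfrac{23}{24}k$ paths, proving the lemma.

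I expect the splicing step to be the main obstacle: the insight needed is that two simultaneously failing candidates force a vertex of $D$ that is reachable from $L - \hat L$ while avoiding $\hat L \cup P \cup S$ and from $S - \hat S$ while avoiding $L \cup P \cup \hat S$, and that trimming those two reachability paths at the nearest vertex lying on a separator path and gluing them at that $D$-vertex reconstructs exactly the $D$-internal $L$--$S$ path that the maximality of $P$ rules out. The remainder — the path counting and the passage from a walk to a simple path (whose interior automatically stays in $D$, since the walk's only non-$D$ vertices are its two endpoints) — is mechanical.
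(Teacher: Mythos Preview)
Your proof is correct and follows essentially the same approach as the paper: use Property~1 (maximality of $P$) to show that $L-\hat L$ and $S-\hat S$ cannot simultaneously lie in large components of the complement, then count paths. The paper phrases the separator step a little more directly---it observes that in $G[D\cup(L-\hat L)\cup(S-\hat S)]$ no component can meet both $L-\hat L$ and $S-\hat S$ (else trimming gives the forbidden $D$-internal path), so at most one of $D\cup(L-\hat L)$, $D\cup(S-\hat S)$ can have a component of size $>n/2$---but your contradiction-and-splicing argument is the same idea unfolded. For the path count the paper simply adds $|\hat L|+|P|+|S|\le \tfrac{5}{48}k+\tfrac{5}{48}k+\tfrac34k$ (and $|L|+|P|+|\hat S|$ analogously); your merging-and-redecomposing is more careful about vertex-disjointness at the shared endpoints but ultimately unnecessary, since one can just shorten each $p\in P$ by dropping its endpoint(s) on $L$ and $S$ without increasing the number of paths.
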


In the other scenarios, $P_1$ has less than $\frac{1}{12}k$ paths. Consider $D \cup (L-\hat{L}) \cup (S-\hat{S})$, by property 1 and the same argument as above, either $D \cup (L-\hat{L})$ or $D \cup (S-\hat{S})$ has the property that the largest connected component has size less than $\frac{n}{2}$. In the former case, the remaining paths in the separator are $\hat{L}, P, S$. $\hat{L}$ has size less than $\frac{5}{48}k$, $P$ has size less than $\frac{5}{48}k$, and $S$ has size less than $\frac{3}{4}k$, together they have at most $\frac{23}{24}k$ paths. 

In the latter case, the remaining paths are $L,P,\hat{S}$. Set $L$ has size at most $\frac{k}{4}$, $P$ has size at most $\frac{5k}{48}$, and $\hat{S}$ has size $\frac{k}{12}$. Together they have at most $\frac{21k}{48}$ paths.

\section{Missing proof of Lemma \ref{lem:decremental_graph}}

\label{sec:appendix_missing_proofs}

This section is dedicated to prove \cref{lem:decremental_graph}, which we restate below for convenience. The proof of \cref{lem:decremental_graph} follows in a straightforward manner from \cref{lem:list_dynamic_decremental}, which is also proven in this section. 

\decrementalgraph*
\begin{proof}
For each vertex $v$ in $G$, we maintain the data structure from \cref{lem:list_dynamic_decremental} where the initial array $a_v$ contains one entry for each neighbor of $v$.
Throughout the execution, we maintain the invariant that $u$'s entry in $v$'s array is active if and only if $u$ is active in $G$.
We additionally store an array $b$ with $m$ entries. The $i$-th entry in the array corresponds to edge $e_i = \{u,v\}$. It stores the index corresponding to $u$ in $v$'s array and the index corresponding to $v$ in $u$'s array.
It follows from the guarantees of \cref{lem:list_dynamic_decremental} that initialization can be done in $O((m+n)\log n)$. We first discuss how to answer $MakeInactive(\{i_1,i_2,\ldots,i_k\})$. The algorithm first computes for each vertex $u$ neighboring at least one node $v_{i_j}$ a set consisting of all indices $i$ such that the $i$-th neighbor in $u$'s list is $v_{i_j}$, for some $j \in [k]$. These sets can be computed in $O((k + \sum_{j=1}^k deg_G(v_{i_j})) \log n)$ work and $O(\log n)$ depth using the array $b$.  Then, MakeInactive is called on $u$'s data structure with an array containing the indices as input. It follows from \cref{lem:list_dynamic_decremental} that this can be done in $O((k + \sum_{j=1}^k deg_G(v_{i_j})) \log n)$ work and $O(\log n)$ depth.
$Query(i_1,i_2,\ldots,i_k,t)$ can simply be answered by invoking $Query(t)$ on the datastructure for $v_{i_j}$ for all $j \in [k]$ in parallel. 
\end{proof}

Consider a list that undergoes (batches of) deletion. We want to efficiently query an arbitrary subset of $i$ elements from the list.

\begin{lemma}
    \label{lem:list_dynamic_decremental}
    There is a data structure with the following guarantees: 
    The initial input is an array $a$ with $N$ elements where in the beginning we think of all $N$ elements being active. The data structure supports the following operations after initialization:
    \begin{itemize}
        \item \textbf{MakeInactive$(\{i_1, i_2, \ldots, i_k\})$} takes an array consisting of $k\geq1$ distinct indices between $1$ and $N$ and marks the $k$ corresponding elements in the array as inactive. The work is $O(k \log N)$, and the depth is $O(\log N)$.
        \item \textbf{Query($t$)} takes a natural number $t \geq 1$ as input and returns $\min(t,N_{active})$ distinct active elements from the array, stored in an array of size $\min(t,N_{active})$. Here, $N_{active}$ is the remaining number of active elements. This operation can be done in $O(t \log n)$ work and $O(\log n)$ depth. 
    \end{itemize}
    Initialization can be done in $O(N \log N)$ work and $O(\log N)$ depth.
\end{lemma}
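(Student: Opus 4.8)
The plan is to represent the array $a$ as the leaf level of a balanced binary tree (a segment tree). Pad $N$ up to the next power of two using permanently‑inactive dummy leaves; this changes $N$ by at most a constant factor and makes the tree complete of height $h=O(\log N)$, so that the depth‑$\ell$ ancestor of a leaf is obtained by dropping the low $h-\ell$ bits of the leaf index. Every internal node stores a single integer, the number of currently active leaves in its subtree. Initialization sets each real leaf's count to $1$, each dummy leaf's to $0$, and then fills in internal counts level by level from the leaves up, each internal count being the sum of its two children. This is $O(N)$ work and $O(h)=O(\log N)$ depth, comfortably within the claimed $O(N\log N)$ budget.

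For $\textbf{MakeInactive}(\{i_1,\dots,i_k\})$, first set the counts of the $k$ named leaves to $0$ in parallel ($O(k)$ work, $O(1)$ depth; the indices are distinct so there is no conflict). Then, proceeding \emph{bottom‑up one level at a time}, for each of the $k$ indices recompute the count of its depth‑$\ell$ ancestor as the sum of that ancestor's two children's counts. Distinct indices may share an ancestor, so the same internal node can be recomputed several times, but every such recomputation writes the \emph{same} value because level $\ell-1$ was already finalized, so on the arbitrary‑CRCW PRAM assumed in the paper these concurrent writes are harmless — crucially, no sorting or deduplication of the affected nodes is needed. Each of the $O(\log N)$ levels costs $O(k)$ work and $O(1)$ depth, giving total work $O(k\log N)$ and depth $O(\log N)$.

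For $\textbf{Query}(t)$, let $t'=\min(t,N_{active})$ be the count stored at the root (return the empty array if $t'=0$). Descend the tree maintaining the invariant that a node $v$ at depth $\ell$ is responsible for filling a contiguous block $[p_v,\,p_v+s_v)$ of the output array with $s_v\le\mathrm{count}(v)$; the root owns $[0,t')$. A node splits its block by sending $s_L=\min(s_v,\mathrm{count}(\mathrm{left}(v)))$ entries (and the index subinterval $[p_v,p_v+s_L)$) to its left child and the remaining $s_v-s_L\le\mathrm{count}(\mathrm{right}(v))$ entries to its right child; one checks inductively that $s_R\le\mathrm{count}(\mathrm{right}(v))$ using $s_v\le c_L+c_R$. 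Execute this level by level, maintaining three arrays indexed by the output position $a\in[0,t')$: the depth‑$\ell$ node owning $a$, and the start and length of that node's block. The level‑$\ell$‑to‑$(\ell+1)$ transition is an $O(t')$‑work, $O(1)$‑depth parallel map that reads only counts already stored in the tree (positions mapping to the same owner recompute identically, again harmless under CRCW). After $h$ levels each position $a$ is owned by a leaf with block exactly $\{a\}$ and $s=1$, hence that leaf is active by the invariant, so we write its element into output slot $a$. No prefix sum is ever performed because the output layout is handed down during the descent, so the cost is $O(t'\log N)=O(t\log N)$ work and $O(\log N)$ depth.

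The main obstacle is keeping the depth at $O(\log N)$ rather than $O(\log^2 N)$: the naive batch update would sort the affected internal nodes at each level, and the naive query would compact the surviving candidate elements with a prefix sum at each level — each of these adds a logarithmic factor. Both are avoided here, respectively, by recomputing internal counts idempotently in a single bottom‑up sweep (so duplicate writes cost nothing) and by threading the output index blocks through the top‑down traversal (so compaction is free); the correctness of the redundant concurrent writes is exactly where we use the arbitrary‑CRCW model assumed throughout the paper. Everything else is routine bookkeeping on the segment tree.
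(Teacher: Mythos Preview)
Your proposal is correct and uses essentially the same approach as the paper: a balanced binary tree over the $N$ elements with each node storing the number of active leaves in its subtree, bottom-up propagation for \textbf{MakeInactive}, and top-down splitting of $t$ between children (via $t_{left}=\min(t,\mathrm{count}(\mathrm{left}))$) for \textbf{Query}. Your write-up is more explicit than the paper's about how to hit the $O(\log N)$ depth bound without sorting or prefix sums---using idempotent CRCW writes in the bottom-up sweep and threading output offsets top-down instead of concatenating returned arrays---but the underlying data structure and algorithm are the same.
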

\begin{proof}
    We first construct a rooted static perfectly balanced binary search tree where the elements in $a$ correspond to the leaf nodes. Furthermore, each leaf node is augmented with a flag indicating whether it is active or not. Every node $v$ of the tree is augmented with $N(v)$, which is the number of remaining active elements in $a$ in the subtree rooted in $v$. For an active element $s$ in the array $a$, its corresponding leaf node $v_s$ has $N(v_s) = 1$ and if $s$ is not active anymore, then $N(v_s)=0$. Moreover, interior nodes additionally store $N(v_{left})$ and $N(v_{right})$ where $v_{left}$ is its left child and $v_{right}$ is its right child. This standard construction can be done in work $O(N \log N)$ and depth $O(\log N)$. 

    For the operation MakeInactive$(\{i_1,i_2,\ldots,i_k\})$, the leaves corresponding to the $k$ indices receive the makeinactive instruction. Those leaves update their flag to indicate that these elements are now inactive. Then the corresponding changes for the $N(v)$'s are recursively propagated up the tree in $O(k \log N)$ work and $O(\log N)$ depth.

    For the operation Query($t$), the root node $v$ first receives Query($t$). The root node is also augmented with $N(v)$, the total number of remaining active elements. For the discussion below, we can assume without loss of generality that $t$ is no larger than the total number of remaining active elements.
    Let $t_{left} := \min(N(v_{left}),t)$ and $t_{right} := t - t_{left}$.
    If $t_{left} > 0$, then we recursively construct an array $a_{left}$ consisting of $t_{left}$ distinct remaining active elements in the subtree rooted at $v_{left}$. Similarly, if $t_{right} > 0$, then we recursively construct an array $a_{right}$ consisting of $t_{right}$ distinct remaining active elements in the subtree rooted at $v_{right}$.

    If $t_{left} = 0$, then the array $a_{right}$ is returned, if $t_{right} = 0$, then the array $a_{left}$ is returned. If $t_{left},t_{right} > 0$, then one can concatenate the arrays $a_{left}$ and $a_{right}$ in $O(t)$ time. 
    As a base case, if this procedure reaches a leaf node, then an array of size $1$ with the corresponding (active) element is returned.

    Correctness, a work bound of $O(t \log N)$ and a depth bound of $O(\log N)$ follow by simple inductions.
\end{proof}

\section{Deterministic Parallel DFS}
\label{app:deterministic}
In this section, we provide a proof sketch that our randomized algorithm (\Cref{thm:main}) can be made deterministic with some slight modifications in the data structures and algorithms developed by Acar et al. in \cite{acar2020parallel} and \cite{acar2019parallel}, and at the expense of increasing the depth and work bounds by only logarithmic factors.

\begin{lemma}
    There is a deterministic algorithm that constructs a depth-first tree in $O(\sqrt{n}\poly\log n)$ depth and $O(m\poly\log n)$ work.
\end{lemma}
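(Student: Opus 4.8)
The plan is to trace every use of randomness in the algorithm behind \cref{thm:main} and swap each for a deterministic counterpart, paying only $\poly(\log n)$ extra factors. The first thing to notice is that almost everything is already deterministic: the outer recursive shell, the connected-component computations (classical deterministic parallel connectivity, e.g.\ Shiloach--Vishkin), the segment-extension step (deterministic parallel spanning forest), the list prefix-sum routine \cref{lemma:andersonmiller}, Luby's maximal matching \cref{lem:lubylemma}, the decremental-graph structure \cref{lem:decremental_graph} (built on a static balanced search tree), and hence the entire separator construction of \cref{thm:separator} together with the path-merging routine of \cref{lemmapathmerging} (where ``arbitrary'' neighbor selections can be made deterministically via the data-structure queries, and the singularity cases are handled by a deterministic largest-component computation). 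The only randomized ingredients live inside the absorption phase \cref{thm:merging}, and there only through the two batch-dynamic structures imported from Acar et al.: the parallel HDT connectivity structure \cref{oldhdtlemma} and the parallel rake-and-compress tree \cref{oldrclemma}, both stated with expected work and high-probability depth. So it suffices to give deterministic versions of these two; then \cref{lemma:parallel_data_structure}, \cref{thm:merging}, and \cref{thm:main} follow with the claimed (slightly larger) bounds.

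For the rake-and-compress tree \cref{oldrclemma}, the only place randomness enters is the compress step, where one must pick an independent set of degree-two vertices whose size is a constant fraction of the maximum possible. The degree-two vertices of a forest form a union of vertex-disjoint paths, so I would replace the randomized symmetry breaking by deterministic symmetry breaking on paths of Cole--Vishkin type (deterministic coin tossing), which produces such a large independent set in $O(\log^* n)$ depth and $O(n)$ work. The forest still shrinks by a constant factor per round, so $O(\log n)$ rounds of rake-and-compress suffice, and the static construction costs $O(\log n\log^* n)$ depth and $O(n\log n)$ work deterministically. The key point is that the batch-dynamic ``change-propagation'' analysis never uses the randomness \emph{per se}: it only uses that the rake/compress decision at a vertex is determined by its neighborhood up to some bounded radius, so that after a batch of $k$ edge updates the set of ``affected'' vertices at each of the $O(\log n)$ levels has size $O(k)$ up to a bounded factor. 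With deterministic coin tossing that radius becomes $O(\log^* n)$ instead of $O(1)$, so the affected set grows by an $O(\log^* n)$ factor per level, yielding an overall $O(k\,\poly(\log n))$ bound on the number of affected vertices, hence $O(\poly(\log n))$ amortized work per update and $O(\poly(\log n))$ depth.

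For the connectivity structure \cref{oldhdtlemma}, the underlying HDT scheme --- the $\log n$ nested forests, the invariant that level-$i$ components have size at most $2^i$, and the ``decrement the level on a failed replacement search'' charging --- is already deterministic. The randomization in the parallelization of Acar et al.\ is confined to standard parallel primitives: symmetry breaking for the Euler-tour/parallel-sequence representation of each tree, and the batched ``replacement search'' that inspects many candidate non-tree edges simultaneously. Each admits a deterministic realization: Euler-tour trees can be maintained by a deterministic parallel batch-dynamic sequence structure using deterministic list ranking (\cref{lemma:andersonmiller}), and the grouping/compaction steps in the replacement search can be carried out with deterministic parallel integer sorting and approximate compaction, all at a $\poly(\log n)$ overhead. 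Plugging the deterministic \cref{oldhdtlemma} and \cref{oldrclemma} into \cref{lemma:parallel_data_structure} makes \textsf{FindCC}, \textsf{LowestNode}, \textsf{FindPathS2P}, and \textsf{BatchDelete} deterministic with $\poly(\log n)$ depth and $\poly(\log n)$ amortized work, so the proof of \cref{thm:merging} goes through with every ``w.h.p.'' replaced by ``deterministically''. Combining this with the already-deterministic separator construction and the $O(\log n)$-depth outer recursion yields a deterministic DFS algorithm with $O(\sqrt{n}\,\poly\log n)$ depth and $O(m\,\poly\log n)$ work.

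The hard part is the rake-and-compress step: re-establishing the batch-dynamic amortized bounds after the randomized choice is removed. One has to re-run the ``number of affected vertices'' accounting with the $O(\log^* n)$-radius dependency of deterministic coin tossing, verifying that the affected neighborhoods across the $O(\log n)$ successive contraction levels do not compound to more than $\poly(\log n)$ when they overlap or propagate, and that the deterministic independent set still has guaranteed constant-fraction size on the \emph{residual} path structure that results after a batch of updates, not merely on a static path. By contrast, derandomizing the HDT structure is comparatively routine, since it only amounts to substituting known deterministic analogues for generic parallel primitives.
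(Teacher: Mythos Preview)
Your approach matches the paper's: identify that randomness enters only through the two Acar et al.\ data structures behind \cref{oldhdtlemma} and \cref{oldrclemma}, and derandomize each by substituting deterministic primitives. Two points of comparison are worth noting.

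First, your affected-vertex accounting for the derandomized rake-and-compress is stated imprecisely. You write that the affected set ``grows by an $O(\log^* n)$ \emph{factor} per level''; taken literally this gives $(\log^* n)^{O(\log n)}$, which is not $\poly(\log n)$. The paper's argument is sharper and hinges on the structural invariant already present in \cite{acar2020parallel}: each \emph{infection tree} (the set of vertices affected because of a single initial edge update) has at most two boundary vertices at every level, and only boundary vertices can spread infection. With Cole--Vishkin in place of random coins, a boundary vertex infects $O(\log^* n)$ neighbors along its path instead of $O(1)$, so the size of each infection tree grows by an \emph{additive} $O(\log^* n)$ per level, giving $O(\log n\cdot\log^* n)$ total per edge update. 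You correctly flag this step as the hard part, but the resolution is this boundary-vertex observation, not a generic ``locality'' argument.

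Second, your treatment of the HDT connectivity structure is vaguer than the paper's. The paper enumerates the specific randomized primitives inside \cite{acar2019parallel}---skip-list Euler tours, randomized parallel dictionaries, randomized semisort, and Gazit's randomized spanning-tree algorithm---and names a concrete deterministic replacement for each (notably, it replaces the Euler-tour/skip-list representation by the deterministic rake-and-compress tree itself, rather than by a deterministic sequence structure as you suggest). Your sketch is consistent with this but would benefit from naming these primitives explicitly rather than appealing to ``generic parallel primitives.''
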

\begin{proof}[Proof Sketch]
    We first observe that new techniques developed in this paper to prove \Cref{thm:main} do not involve randomness. The only lemmas in our paper that use randomness are \cref{oldhdtlemma} and \cref{oldrclemma}. \cref{oldhdtlemma} uses results from the parallelized connectivity data structure~\cite{acar2019parallel}. \Cref{oldrclemma} uses results from the parallelized rake and compress tree~\cite{acar2020parallel}. We next list the parts in these papers that involve randomness, and then argue that there are deterministic analogs for each of these parts, which increase the work and depth bounds by only a factor of $\poly\log n$.\footnote{What we present here is only a proof sketch, in the sense that we only discuss the necessary changes. A full and self-contained proof would require recalling essentially all parts of these papers.} In our list, we identify the randomized part of the prior work with (R1) to (R5) and describe the corresponding deterministic solutions in (D1) to (D5).
    
    \paragraph{Parallel rake and compress tree~\cite{acar2020parallel}}
    \begin{itemize}
        \item[(R1)] For a long path in the rake and compress tree, to efficiently perform the compress operation, we would like to efficiently find an independent set on a path such that the set contains at least a constant fraction of the vertices on the path w.h.p. Acar et al.\cite{acar2020parallel} achieve this by flipping a coin for each vertex of degree 2, and putting a degree $2$ vertex $v$ in the independent set iff it has a head coin and both of its neighbors have tail coils. 

        \item[(D1)] Instead of using coin tosses, we use the classic deterministic approach of Cole and Vishkin for computing a maximal independent set (MIS). On an $n$-vertices path, this algorithm works using $\tilde{O}(n)$ work and $O(\log^* n)$ depth. We notice that an MIS contains at least a constant fraction of the vertices on the path. Thus, along with the rake operation, we can show that a constant fraction of the vertices is removed in each stage of the rake and compress tree, and thus the rake and compress again terminates in $O(\log n)$ iterations.
        
        We need some closer inspection in bounding the work in the dynamic applications of rake and compress.    One key argument in the paper \cite{acar2020parallel} by Acar et al. relies on the following: For a vertex $v$, the decision of whether to remove itself only depends on its neighbors and the leaf status of its neighbors. After some updates, they call a vertex \textit{affected} if its neighbors or the leaf status of its neighbors change. In $T_1$, with $k$ edge updates, at most $\{v_1 \codots v_{O(k)}\}$ vertices get affected. Affected vertices will infect new vertices, so they become affected in subsequent levels of rake and compress trees. All vertices in $T_i$ that have $v_j$ as their infection ``ancestor'' form a \textit{infection-tree}. A critical observation is that, in each infection tree, at most two vertices have unaffected neighbors. These are called boundary vertices. An uninfected node gets infected only through a boundary affected node. Moreover, each boundary vertex only infects new vertices if its degree in $T_i$ is 1 or 2. Hence, advancing to a higher level rake and compress tree, the number of affected vertices in a single infection-tree increases by at most a constant additive term. Thus only $O(\log n)$ vertices are affected due to a single edge update throughout $O(\log n)$ levels of the rake and compress tree. This means the additional work required for $k$ updates is $\tilde{O}(k)$.
        
        In our case, since the algorithm in \cite{coleandvishkin} has depth $O(\log^* n)$, the additional way for a vertex on a path to get affected is through some affected vertex on the same path in its $O(\log^* n)$ neighborhood. In fact, we declare all nodes inside the path within $O(\log^*n)$ distance of an affected node infected, in the next level. Hence, the number of affected vertices in a single \textit{infection-tree} increase at most by a $O(\log^* n)$ additive term in each level of the rake and compress tree. Like before, each infection tree still has at most $2$ boundary vertices. Thus only $O(\log n \log^* n)$ vertices are affected due to a single edge update throughout $O(\log n)$ levels of the rake and compress tree. So the additional work required for $k$ updates is still $\tilde{O}(k)$.
    \end{itemize}
\paragraph{Parallel connectivity data structure~\cite{acar2019parallel}}
    \begin{itemize}
        \item[(R2)] Acar et al.\cite{acar2019parallel} use Euler tours to represent the spanning tree structures, and these Euler tours are stored as skip lists. Skip lists support parallel operations like $k$ links and $k$ cuts, as well as parallel queries like $k$ connectivity in $O(k\log n)$ expected work and $O(\log n)$ depth w.h.p.
        \item[(D2)] Instead of using Euler tours to represent trees, we use the deterministic rake and compress tree structure sketched above. This deterministic data structure can still support parallel operations like $k$ links and $k$ cuts, as well as parallel queries like $k$ connectivity in $\tilde{O}(k)$ work and $\poly(\log n)$ depth. 
        \item[(R3)]  Acar et al.\cite{acar2019parallel} use parallel dictionaries are used to store the edges in the graph. Parallel dictionaries support $k$ insertions, $k$ deletions, and $k$ element look ups, for $k\in [1, \poly(n)]$, in $O(k)$ work and $O(\log^* n)$ time, with high probability.
        \item[(D3)] We can use an analog of the data structure developed in \cref{lem:list_dynamic_decremental} to store all potential edges in the graph, which are edges from the original graph $G$. This data structure supports $k$ insertions, $k$ deletions, and $k$ elements look up in $O(k\log n)$ work and $O(\log n)$ depth.
        \item[(R4)] Acar et al.\cite{acar2019parallel} use randomized semisorts which take $O(n)$ expected work and $O(\log n)$ depth, with high probability. 
        \item[(D4)] We can use a full deterministic sort instead of semisort. This takes $O(n\log n)$ work and $O(\log n)$ depth.  
        \item[(R5)] Acar et al.\cite{acar2019parallel} use Gazit's randomized connectivity (\cite{gazit}) algorithm to construct a spanning tree of a graph with $k$ edges, for $k\in [1, \poly(n)]$, in $O(k)$ expected work and $O(\log k)$ depth, with high probability.
        \item[(D5)] There are known deterministic parallel algorithms that construct a spanning tree of a graph with $k$ edges using $\tilde{O}(k)$ work and $\poly(\log n)$ depth\cite{JaJ92}. \qedhere
    \end{itemize}
\end{proof}
}

\end{document}